\newtheorem{theorem}{Theorem}
\newtheorem{lemma}[theorem]{Lemma}
\newtheorem{proposition}[theorem]{Proposition}
\newtheorem{corollary}[theorem]{Corollary}
\newtheorem{lemma*}{Lemma}
\newtheorem*{claim*}{Claim}
\theoremstyle{definition}
\newtheorem{definition}[theorem]{Definition}
\newtheorem{example}[theorem]{Example}
\newtheorem{algorithm}[theorem]{Algorithm}
\newcounter{AlgorithmusCounter}[section]
\renewcommand{\theAlgorithmusCounter}{\arabic{chapter}.\arabic{AlgorithmusCounter}}
\renewcommand{\setminus}{\mysetminus}
\newenvironment{aw}{\noindent\color{magenta} AW : }{}
\newcommand{\prref}\prettyref
\newcommand{\mysetminusD}{\raisebox{.8pt}{\hbox{\tikz{\draw[line width=0.6pt,line cap=round] (3.5pt,0pt) -- (0,5.2pt);}}}}
\newcommand{\mysetminusT}{\mysetminusD}
\newcommand{\mysetminusS}{\raisebox{.5pt}{\hbox{\tikz{\draw[line width=0.45pt,line cap=round] (2.2pt,0) -- (0,3.8pt);}}}}
\newcommand{\mysetminusSS}{\raisebox{.35pt}{\hbox{\tikz{\draw[line width=0.4pt,line cap=round] (1.5pt,0) -- (0,2.8pt);}}}}
\newcommand{\mysetminus}{\mathbin{\mathchoice{\mysetminusD}{\mysetminusT}{\mysetminusS}{\mysetminusSS}}}
\newcommand{\eqF}{\mathrel{=_{F(\cG)}}}
\newcommand{\set}[2]{\left\{#1\; \middle|\; #2\right\}}
\newcommand{\oneset}[1]{\left\{\mathinner{#1}\right\}}
\newcommand{\abs}[1]{\left|\mathinner{#1}\right|}
\newcommand{\floor}[1]{\left\lfloor\mathinner{#1} \right\rfloor}
\newcommand{\gen}[1]{\left< \mathinner{#1} \right>}
\newcommand{\genr}[2]{\left< \, \mathinner{#1}\; \middle|\;\mathinner{#2} \, \right>}
\newcommand{\N}{\ensuremath{\mathbb{N}}}
\newcommand{\Z}{\ensuremath{\mathbb{Z}}}
\newcommand{\Q}{\ensuremath{\mathbb{Q}}}
\newcommand{\EXPSPACE}{\ensuremath{\mathsf{EXPSPACE}}\xspace} 
\newcommand{\LOGDCFL}{\ensuremath{\mathsf{LOGDCFL}}\xspace} %
\renewcommand{\L}{\ensuremath{\mathsf{LOGSPACE}}\xspace} %
\newcommand{\TC}{\ensuremath{\mathsf{uTC}^0}\xspace}
\newcommand{\Tc}[1]{\ensuremath{\mathsf{TC}^{#1}}\xspace}
\newcommand{\Ac}[1]{\ensuremath{\mathsf{AC}^{#1}}\xspace}
\newcommand{\AC}{\ensuremath{\mathsf{uAC}^0}\xspace}
\newcommand{\NC}{\ensuremath{\mathsf{NC}}\xspace}
\renewcommand{\P}{\ensuremath{\mathsf{P}}\xspace}
\renewcommand{\phi}{\varphi}
\newcommand{\alp}{\alpha}
\newcommand{\bet}{\beta}
\newcommand{\lam}{\lambda}
\newcommand{\Sig}{\Sigma}
\newcommand{\Del}{\Delta}
\newcommand\GG{\Gamma}
\newcommand{\Oh}{\mathcal{O}}
\newcommand{\cB}{\mathcal{B}}
\newcommand{\cC}{\mathcal{C}}
\newcommand{\cG}{\mathcal{G}}
\newcommand{\cF}{\mathcal{F}}
\newcommand{\cP}{\mathcal{P}}
\newcommand{\BS}[2]{\ensuremath{\mathrm{\bf{BS}}_{#1,#2}}\xspace}
\newcommand{\BG}{\ensuremath{\mathrm{\bf{G}}_{1,2}}\xspace 
}
\newcommand\ra{\longrightarrow}
\newcommand\RAS[2]{\overset{#1}{\underset{#2}{\Longrightarrow}}}
\newcommand\DAS[2]{\overset{#1}{\underset{#2}{\Longleftrightarrow}}}
\newcommand\RAA{\Longrightarrow}
\newcommand\RAO[1]{\overset{#1}{\Longrightarrow}}
\newcommand\DAO[1]{\overset{#1}{\Longleftrightarrow}}
\newcommand{\smalloverline}[1]
{{\mspace{.8mu}\overline{\mspace{-.8mu}#1\mspace{-.8mu}}\mspace{.8mu}}}
\newcommand{\ov}[1]{\smalloverline{#1}}
\newcommand{\oi}[1]{{#1}^{-1}}
\newcommand{\rquot}[2]{{#1} /\!\;\!{#2}}
\newcommand{\rQuot}[2]{{#1} / {#2}}
\newcommand{\tto}{\mathrel{\tilde\to}}
\newcommand{\sse}{\subseteq}
\newcommand{\sm}{\setminus}
\newcommand{\proc}[1]{\textsc{#1}}
\newcommand\ie{i.\,e., }
\newcommand\Wlog{W.\,l.\,o.\,g.\ }
\newcommand\eg{e.\,g.\xspace}
\newcommand{\tar}{\tau}
\newcommand{\sor}{\iota}
\newcommand{\y}{y}
\tikzset{
	ncbar angle/.initial=90,
	ncbar/.style={
		to path=(\tikztostart)
		-- ($(\tikztostart)!#1!\pgfkeysvalueof{/tikz/ncbar angle}:(\tikztotarget)$)
		-- ($(\tikztotarget)!($(\tikztostart)!#1!\pgfkeysvalueof{/tikz/ncbar angle}:(\tikztotarget)$)!\pgfkeysvalueof{/tikz/ncbar angle}:(\tikztostart)$)
		-- (\tikztotarget)
	},
	ncbar/.default=0.5cm,
}
\begin{document}
	\title{A Logspace Solution to the Word and Conjugacy Problem of Generalized Baumslag-Solitar Groups} 
		
 \author{
 	Armin Wei\ss \medskip\\
 	\normalsize FMI, Universit{\"a}t Stuttgart, Germany
 }
 
 \date{\today}

\maketitle

\begin{abstract}
	Baumslag-Solitar groups were introduced in 1962 by Baumslag and Solitar as examples for finitely presented non-Hopfian two-generator groups. Since then, they served as examples for a wide range of purposes. As Baumslag-Solitar groups are HNN extensions, there is a natural generalization in terms of graph of groups. 
	
	Concerning algorithmic aspects of generalized Baumslag-Solitar groups, several decidability results are known. Indeed, a straightforward application of standard algorithms leads to a polynomial time solution of the word problem (the question whether some word over the generators represents the identity of the group). The conjugacy problem (the question whether two given words represent conjugate group elements) is more complicated; still decidability has been established by Anshel and Stebe for ordinary Baumslag-Solitar groups and for generalized Baumslag-Solitar groups independently by Lockhart and Beeker. 
	However, up to now, no precise complexity estimates have been given.
	
	In this work, we give a \L algorithm for both problems. More precisely, we describe a uniform \Tc{0} many-one reduction of the word problem to the word problem of the free group. Then we refine the known techniques for the conjugacy problem and show it is \Ac{0}-Turing-reducible to the word problem of the free group.
	
	Finally, we consider uniform versions (where also the graph of groups is part of the input) of both word and conjugacy problem: while the word problem still is solvable in \L, the conjugacy problem becomes \EXPSPACE-complete.
	
	\smallskip
	
	\paragraph{Keywords:} 
	word problem,
	conjugacy problem,
	Baumslag-Solitar group,
	graph of groups,
	Logspace\\
\end{abstract}

\section{Introduction}\label{sec:intro} 

A \emph{Baumslag-Solitar group} is a group of the form $\BS{p}q = \genr{a,y}{ya^p\oi y = a^q}$ for some $p,q \in \Z\setminus \oneset{0}$. These groups were introduced in 1962 by Baumslag and Solitar \cite{baumslag62some} as examples for finitely presented non-Hopfian two-generator groups. They showed that the class of Baumslag-Solitar groups comprises both Hopfian and non-Hopfian groups.

The usual presentation of a Baumslag-Solitar groups is as HNN extension of an infinite cyclic group with one stable letter. The different Baumslag-Solitar groups correspond to the different inclusions of the associated subgroup into the base group.
HNN extensions are a special case of fundamental groups of a graph of groups~-- where the graph consists of exactly one vertex with one attached loop.
Thus, there is a natural notion of generalized Baumslag-Solitar group (GBS group) as fundamental group of a graph of groups with infinite cyclic vertex and edge groups~-- see \eg\ \cite{Beeker11thesis, Forester03}. GBS groups were also studied in \cite{Kropholler90} and characterized as those finitely presented groups of cohomological dimension two which have an infinite cyclic subgroup whose commensurator is the whole group.

Algorithmic problems in group theory were introduced by Max Dehn more than 100 years ago. The two basic problems are the word problem and the conjugacy problem, which are defined as follows: Let $G$ be a finitely generated group.\\ 
\emph{Word problem}: On input of some word $w$ written over the generators, decide whether $w = 1$ in $G$. \\
\emph{Conjugacy problem}: On input of two words $v$ and $w$ written over the generators, decide whether $v $ and $w$ are conjugate, \ie whether there exists $z\in G$ such that $zv z^{-1}= w$ in $G$.

In recent years, conjugacy played an increasingly important role in non-commuta\-tive cryptography, see \eg~\cite{CravenJ12,GrigorievS09,SZ1}. These applications use that it is easy to create elements which are conjugated, but to check whether two given elements are conjugated might be difficult~-- even if the word problem is easy. In fact, there are groups where the word problem is easy but the conjugacy problem is undecidable \cite{Miller1}. 

It has been long known that both the word problem and the conjugacy problem in generalized Baumslag-Solitar groups are decidable. Actually, the standard application of Britton reductions leads to a polynomial time algorithm for the word problem (see \eg\ \cite{Laun12diss}). Decidability of the conjugacy problem has been shown by Anshel and Stebe for ordinary Baumslag-Solitar groups \cite{AnshelS74} and for arbitrary GBS groups independently by Lockhart \cite{Lockhart92} and Beeker \cite{Beeker11thesis}. 

The probably first non-trivial complexity bounds for the word problem have been established by the general theorem by Lipton and Zalcstein \cite{lz77} resp.\ Simon \cite{Sim79} that linear groups have word problem in \L (although linear GBS groups form a small sub-class of all GBS groups).
Later, Waack \cite{Waack81} examined the particular GBS group $\genr{a,s,t}{sa \oi s = a, ta\oi t = a^2}$ as an example of a non-linear group which has word problem in \L. In order to obtain the \L bound for the word problem, he used the very special structure of this particular GBS group: the kernel under the canonical map onto the solvable Baumslag-Solitar group $\BS{1}{2}$ is a free group.

For solvable GBS groups~-- which are precisely the Baumslag-Solitar groups $\BS{1}{q}$ for $q \in \Z$~-- the word problem was shown to be in (non-uniform) \Tc{0} by Robinson \cite{Robinson93phd} (and also in \L). 
Moreover, in \cite{DiekertMW14} it is shown that both the word and the conjugacy problem in \BS12 is in uniform \Tc{0}, indeed. It is straightforward to see that this proof also works for $\BS{1}{q}$ for arbitrary $q$, see \cite{Weiss15diss}. The result for the conjugacy problem became possible because of the seminal theorem by Hesse \cite{hesse01, HeAlBa02} that integer division is in uniform \Tc{0}~-- a result which also plays a crucial role in this work.

Apart from these (and some other) special cases, no precise general complexity estimates have been given. In this work, we show that both the word problem and the conjugacy problem of every generalized Baumslag-Solitar group is in \L.
More precisely, we establish the following results:
{\renewcommand*{\thetheorem}{\Alph{theorem}}
	\begin{theorem}\label{thm:wp}
		Let $G$ be a GBS group. There is a uniform \Tc0 many-one reduction from the word problem of $G$ to the word problem of the free group $F_2$.
	\end{theorem}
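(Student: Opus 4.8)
The plan is to combine Bass--Serre theory with the fact that integer division, and hence iterated integer multiplication, lies in \TC (Hesse). Fix a graph of groups decomposition of $G$: a finite underlying graph $\Gamma$ with a spanning tree $T$, a generator $a_v$ for each vertex, a stable letter $t_e$ for each non-tree edge, and edge labels $p_e,q_e\in\Z\setminus\{0\}$ giving the relations $a_{\iota(e)}^{p_e}=a_{\tau(e)}^{q_e}$ (tree edges) and $t_e\,a_{\iota(e)}^{p_e}\,\oi{t_e}=a_{\tau(e)}^{q_e}$ (non-tree edges). Since all vertex and edge groups are infinite cyclic, membership of $a_v^m$ in an edge subgroup is a divisibility condition $p_e\mid m$ (resp.\ $q_e\mid m$). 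As $G$ is fixed, $\Gamma$, $T$ and all labels are constants. I would first recall the Britton/normal-form characterisation of triviality: reading $w$ as a walk, $w=_G1$ forces its image under the projection $\pi\colon G\to\pi_1(\Gamma)$ (which kills every $a_v$ and sends $t_e$ to a free generator) to vanish in the free group $F=\pi_1(\Gamma)$ of finite rank $k$; and, conversely, once $\pi(w)=_F1$ the stable letters cancel in a nested fashion, and $w=_G1$ holds iff every such cancellation is a genuine Britton pinch (the enclosed vertex-group exponent satisfies the relevant divisibility) and the single surviving exponent at the base vertex is $0$. The plan is thus to split the decision into a purely combinatorial part, handled by the free group, and an arithmetic part, handled in \TC.

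For the combinatorial part I would use $\pi$ directly: computing $\pi(w)$ and composing with a fixed embedding $F=F_k\hookrightarrow F_2$ (e.g.\ $x_i\mapsto a^i b a^{-i}$) is an \AC computation, and $\pi(w)=_F1$ iff its image is trivial in $F_2$. For the arithmetic part the key observation is that exponents never need to be known in unary: each Britton pinch at an edge $e$ rescales a vertex exponent by the fixed ratio $q_e/p_e$ or $p_e/q_e$, so the exponent transported from the $i$-th vertex syllable out to the base is that syllable's exponent times a product of such ratios, i.e.\ a value of the modular homomorphism $\Delta\colon G\to\Q^{*}$, $a_v\mapsto 1$, $t_e\mapsto q_e/p_e$. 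Because $\Delta$ is a homomorphism, the scaling $s_i=\Delta(w_1\cdots w_{i-1})$ attached to syllable $i$ is a prefix product of constants, computable in \TC by iterated multiplication; the candidate base exponent $\sum_i k_i s_i$ and all divisibility residues are then sums of polynomially many integers of $O(n)$ bits, so each test, together with the final ``$=0$'' check, is decidable in \TC via Hesse's theorem. The reduction then outputs the word $\pi(w)$ (pushed into $F_2$) whenever all divisibility and vanishing conditions hold, and a fixed nontrivial letter otherwise, so that the single $F_2$-instance is trivial exactly when $w=_G1$.

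The hard part is the interface between the two parts. The divisibility condition of a pinch refers to the exponent enclosed by a \emph{matched} pair of stable letters, and for a graph with more than one independent cycle this matching is precisely the cancellation structure of $\pi(w)$ in $F$, which is not available to a \TC circuit (computing it is essentially the word problem of $F_2$ itself). Moreover, the example $\BS22=\genr{a,t}{t a^2\oi t=a^2}$ with $w=\oi t\,a\,t\,\oi a$ shows that vanishing of $\sum_i k_i s_i$ alone does not decide triviality: here $s_i\equiv1$ and the sum is $0$, yet $w\neq_G1$ because the pinch $\oi t\,a\,t$ violates the divisibility $2\mid1$. Hence the divisibility residues are genuinely needed, and recovering them seems to require exactly the matching that is unavailable.

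The genuine obstacle, therefore, is to reconcile these two demands. What I would aim to prove is that, \emph{conditioned on} $\pi(w)=_F1$, every pinch residue is recoverable from matching-independent prefix data modulo the fixed constants $p_e,q_e$ — equivalently, that the nested matching forced by $\pi(w)=_F1$ can be evaluated by \TC arithmetic on the prefix scalings $s_i$ rather than by an explicit parenthesis matching — and to fold these residue tests, together with the spanning-tree amalgamations, into the single word handed to the free group in such a way that correctness is also maintained when $\pi(w)\neq_F1$. Carrying out this reconciliation, and proving that the resulting system of conditions is both necessary and sufficient (the attendant normal-form statement for graphs of groups with cyclic edge groups, tree edges included), is where the main work lies.
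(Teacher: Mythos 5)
Your outline correctly isolates the two ingredients (the modular homomorphism and prefix products handled by Hesse's \TC circuits for iterated multiplication/addition/division, plus a single query to the free group), and your $\BS{2}{2}$ example correctly shows that the vanishing of $\sum_i k_i s_i$ together with triviality of the projection to $\pi_1(\Gamma)$ is not sufficient. But the proposal stops exactly at the point where the actual work of the theorem happens: you state that the pinch residues ``seem to require exactly the matching that is unavailable'' and that you ``would aim to prove'' they are recoverable from matching-independent data. That recoverability claim is the theorem's core content, and as written you have neither a construction nor an argument for it, so this is a genuine gap rather than a complete proof.

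The paper closes this gap not by recovering the residues of \emph{the} matching, but by sidestepping the matching entirely: it defines a relation $i\sim_{\scriptscriptstyle\!\cC} j$ on \emph{all} pairs of edge positions by the three pairwise-checkable conditions $\y_i=\ov\y_j$, $\rho(w_{i,j-1})=0$ (vanishing edge-exponent sums in between), and $\bet_i\mid k_{i,j-1}$ (the divisibility test applied to the prefix-product quantity you call $\sum k_\nu s_\nu$, restricted to the interval $(i,j)$). Each condition is computable in \TC for all pairs in parallel. The two nontrivial steps you are missing are then: (a) \prref{lem:colorwelldef}, a case analysis over the three possible bracket configurations of four indices showing that the two-step closure of $\sim_{\scriptscriptstyle\!\cC}$ is an equivalence relation, so the edge letters can be consistently \emph{recolored} by their classes $[i]$ with an involution $\ov{[i]}$; and (b) \prref{lem:color}, showing that $w_{i,j}\in_{F(\cG)}\gen{a_i}$ if and only if the colored word $[i+1]\cdots[j]$ is trivial in the free group over the colors. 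The word handed to $F_2$ is this colored word (encoded via $[i]\mapsto b^ja\ov b^j$), not $\pi(w)$: the divisibility data is baked into the alphabet, so a single free-group query suffices and no explicit parenthesis matching is ever computed. Without these two lemmas (or a substitute for them), your reduction cannot be completed, since, as your own example shows, the word $\pi(w)$ you propose to output does not determine triviality.
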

	
	Together with the well-known result that 
	linear groups~-- in particular $F_2$~-- have word problem in $\L$ \cite{lz77,Sim79}, this leads to a \L algorithm of the word problem. Moreover, in view of \cite{CaussinusMTV98}, \prref{thm:wp} shows that the word problem of GBS groups is in the complexity class $\mathsf{C_=}\NC^1$ (for a definition see \cite{CaussinusMTV98}).
	
	\begin{theorem}\label{thm:cp}
		Let $G$ be a GBS group. The conjugacy problem of $G$ 
		is uniform-$\Ac0$-Turing-reducible to the word problem of the free group.
	\end{theorem}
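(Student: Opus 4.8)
The plan is to follow the classical Bass--Serre-theoretic description of conjugacy in the fundamental group of a graph of groups, as developed by Anshel--Stebe \cite{AnshelS74}, Lockhart \cite{Lockhart92} and Beeker \cite{Beeker11thesis}, and to verify that each step can be carried out by an \Ac0 circuit with oracle access to the word problem of $F_2$. Throughout, the graph of groups $\cG$ defining $G$ is fixed, so its finitely many vertices, edges and edge-label ratios are constants. I would write elements of $G$ in the normal form produced by Britton reduction: an alternating product of vertex-group elements $a_v^{m}$ and stable letters $t_e^{\pm 1}$. A first observation needed for the complexity bookkeeping is that although the integer exponents $m$ occurring in the reduced form of a length-$n$ input can be as large as $2^{\Oh(n)}$, they have only polynomially many bits; hence all integer operations below (gcd, division, $p$-adic valuation at the primes dividing edge labels, iterated products) fall under Hesse's theorem \cite{hesse01} placing integer division in uniform \Tc0. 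First I would compute, for each input, a \Breduced and then cyclically reduced form together with its cyclic length $\ell$ (the number of surviving stable letters), reusing the reduction machinery behind \prref{thm:wp} and composing with it so that the word problem of $G$ itself is available as an oracle. The length $\ell$ is a conjugacy invariant separating the two cases: $\ell=0$ means the element is \emph{elliptic} (conjugate into a vertex group) and $\ell>0$ means it is \emph{hyperbolic}; if the two inputs differ in this dichotomy, or have different positive lengths, they are not conjugate, and this comparison is immediate.

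For two elliptic elements, conjugacy reduces to a purely number-theoretic question on the exponents: $a_u^{m}$ and $a_v^{n}$ are conjugate in $G$ if and only if some reduced edge-path from $u$ to $v$ carries $m$ to $n$ under the associated multiplication-by-edge-ratio maps, subject to the integrality constraints along the path. Since $\cG$ is fixed, the multiplicative group generated by the edge ratios is a fixed finitely generated subgroup of $\Q^{\times}$, and membership of $n/m$ in the relevant orbit amounts to comparing the $p$-adic valuations of $m$ and $n$ at the finitely many primes dividing edge labels, together with equality of their parts coprime to those primes~-- all \Tc0-checkable by Hesse's theorem. The description of these orbits via the modular homomorphism is exactly the content of \cite{Lockhart92, Beeker11thesis}.

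For two hyperbolic elements I would invoke the analogue of Collins' conjugacy lemma for graphs of groups: cyclically reduced elements of equal positive length $\ell$ are conjugate if and only if one is obtained from the other by a cyclic permutation followed by conjugation by a single edge-group element. Concretely, I would test each of the $\ell$ cyclic rotations of the first reduced form against the second: a rotation is admissible only if the two sequences of traversed edges coincide, and then the vertex-group elements sitting between consecutive stable letters must satisfy a system that is linear/multiplicative over $\Z$ and again decidable by the arithmetic above, with the remaining genuine group equalities discharged through the oracle. There are at most $\ell$ rotations, a polynomial number, so the circuit runs all of them in parallel and takes the disjunction; each branch issues only a bounded number of oracle queries, keeping the whole reduction inside \Ac0 relative to the word problem of $F_2$, the threshold power needed for the arithmetic being supplied through the oracle together with \prref{thm:wp}.

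I expect the main obstacle to be the hyperbolic matching step: correctly turning ``cyclic permutation up to an edge-group element'' into an explicit, polynomial-size family of arithmetic conditions, and proving that the intermediate vertex exponents forced by a candidate rotation are uniquely and efficiently determined, so that only a bounded number of oracle consultations per rotation is required rather than an unbounded search. A secondary difficulty is the elliptic orbit test, where one must ensure that the unbounded winding around loops of $\Gamma$ is captured by the finite valuation data rather than by an unbounded search; this is precisely where representing the exponents in binary and applying Hesse's theorem is essential.
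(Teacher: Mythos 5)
Your overall architecture (cyclically Britton-reduce, split into elliptic and hyperbolic cases, test all cyclic rotations in the hyperbolic case against a linear system handled with Hesse's \Tc{0} arithmetic) matches the paper's. But your treatment of the elliptic case has a genuine gap, and it is precisely the step where the paper has to work hardest. You claim that conjugacy of $a_u^{m}$ and $a_v^{n}$ ``amounts to comparing the $p$-adic valuations of $m$ and $n$ at the finitely many primes dividing edge labels, together with equality of their parts coprime to those primes~-- all \Tc{0}-checkable by Hesse's theorem.'' This is not a correct description of the elliptic orbits. Because of the intermediate integrality constraints along a conjugating path (at each edge $\y$ the current exponent must be divisible by $\bet_\y$ before it may be multiplied by $\alp_\y/\bet_\y$), the reachability relation on valuation vectors is a genuine congruence of a finitely generated commutative monoid on $\N^m\times\N^{V(Y)}$, not a coset condition in a fixed subgroup of $\Q^{\times}$. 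Equality of the coprime parts and the valuation vectors is the right \emph{data}, but deciding whether two valuation vectors are congruent is the word problem of a finitely generated commutative monoid; the paper solves it (for fixed $\cG$) by invoking Eilenberg--Sch\"utzenberger's theorem that every congruence on $\N^M$ is semilinear, together with the result of Ibarra, Jiang, Chang and Ravikumar that membership in a fixed semilinear set is decidable in uniform $\NC^1\sse\AC(F_2)$. This is where the $F_2$ oracle is genuinely needed in the elliptic case~-- not for threshold arithmetic. A sanity check that your shortcut cannot work: if elliptic conjugacy reduced to valuation comparisons decidable by \Tc{0} arithmetic, the uniform conjugacy problem would land far below \EXPSPACE, whereas the paper proves it \EXPSPACE-complete exactly by encoding the Mayr--Meyer commutative-semigroup word problem into elliptic conjugacy.

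A smaller unresolved point is the one you flag yourself in the hyperbolic case: when the product of the edge ratios around the cycle equals $1$, the conjugating exponent $x$ is not uniquely determined, and integrality of the intermediate exponents becomes a system of simultaneous congruences $z\equiv c_i \bmod d_i$ with exponentially large moduli, so a bounded search over candidate conjugators is not available. The paper resolves this (\prref{lem:congruencesTC}) by observing that all moduli have prime factors in the fixed finite set of primes dividing the $\alp_\y,\bet_\y$, so the system splits prime by prime and is decidable in \TC via the Chinese Remainder Theorem. Without this (or an equivalent) argument your hyperbolic case is incomplete as well.
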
	
	
	We also consider uniform versions of the word and conjugacy problem (where the GBS group is part of the input~-- for precise definitions see \prref{sec:uniWP} and \prref{sec:uniCP}). This leads to the following contrasting theorem:
		
	\begin{theorem}\label{thm:uni}\
		\begin{enumerate}
			\item The uniform word problem for GBS groups is in \L. Moreover, if the GBS groups are given as fundamental groups with respect to a spanning tree, the uniform word problem is \L-complete.
			\item The uniform conjugacy problem for GBS groups is \EXPSPACE-complete.
		\end{enumerate}
	\end{theorem}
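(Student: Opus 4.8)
The plan is to treat the two statements separately, as they require quite different machinery.

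For the uniform word problem, part (i), I would give a direct logspace algorithm based on Britton reductions together with the observation that all the arithmetic involved is computable in \L. Reading the input word $w$ from left to right, I would trace the corresponding walk in the underlying graph $\Gamma$, using the spanning tree $T$ to interpret tree edges as amalgamations and the remaining edges as stable letters. At each vertex the relevant group element is a power $a_v^k$, and crossing an edge $e$ is admissible only when the source edge-label $\alpha_e$ divides the current exponent, in which case the exponent is rescaled by $\omega_e/\alpha_e$. The only obstacle to a naive simulation is that the exponents produced by iterated rescaling may be of exponential bit-length; but since iterated integer multiplication and division lie in uniform \Tc{0} by Hesse's theorem (already exploited for \prref{thm:wp}), each divisibility test and each rescaling can be carried out in \L. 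The combinatorial matching of stable letters amounts to free reduction in the fundamental group of $\Gamma$, which is itself in \L, so that $w=1$ can be decided by checking (a) that the reduced walk is trivial, (b) that every divisibility side-condition holds, and (c) that the accumulated exponent vanishes.

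For \L-hardness (when the group is presented as a fundamental group with respect to a spanning tree) I would reduce from a standard \L-complete problem, namely reachability in a forest, which is \L-complete by Reingold's theorem. Given a forest with two distinguished vertices $u,v$, I would build a graph of groups on the same underlying graph with trivial edge-labels, so that the vertex generators are identified along tree edges, and craft a word that navigates from $u$ towards $v$; by construction this word reduces to the identity exactly when $u$ and $v$ lie in the same tree. The point is that the spanning-tree presentation makes graph navigation explicit, so that verifying a navigation word closes up is precisely a reachability test.

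For the uniform conjugacy problem, part (ii), the decisive idea is a correspondence between the elliptic part of the conjugacy relation and reachability in a \emph{reversible} vector addition system. Using the conjugacy criterion for GBS groups (Beeker, Lockhart), two elliptic elements $a_v^m$ and $a_w^n$ are conjugate iff one exponent can be transformed into the other by a sequence of admissible edge-crossings. Writing each integer in its prime factorisation, crossing an edge $e$ subtracts the exponent vector of $\alpha_e$ and adds that of $\omega_e$, subject to the guard that the $\alpha_e$-component is available (the divisibility condition) — this is exactly a VAS transition, and since edges may be traversed in either direction the system is reversible. Reversible VAS reachability coincides with the word problem of commutative semigroups, which is \EXPSPACE-complete (Mayr and Meyer). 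For the upper bound I would show that the full criterion — handling also the hyperbolic case, where conjugacy reduces to equality of cyclic exponent sequences up to scaling — can be decided in exponential space: all relevant integers are of at most doubly-exponential magnitude, hence representable and comparable in \EXPSPACE, and the required linear Diophantine and modular systems are solvable there. For the lower bound I would transport the hardness of the commutative semigroup word problem through the prime-exponent encoding above, realising each semigroup generator as an edge whose labels carry the prescribed prime-exponent vectors.

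The main obstacle I anticipate is the lower bound of part (ii): one must design the graph of groups so that the divisibility guards match the VAS non-negativity constraints exactly, and so that Lipton's compact encoding of doubly-exponential counters survives as a polynomial-size graph of groups with binary edge-labels, while simultaneously ruling out spurious conjugacies coming from the hyperbolic part of the group. Controlling these interactions, and thereby guaranteeing that the reduction is faithful in both directions, is where the bulk of the technical work will lie.
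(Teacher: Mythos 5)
Your part~(ii) is essentially the paper's argument: the lower bound encodes the word problem of finitely presented commutative semigroups (equivalently, reversible VAS reachability) into elliptic conjugacy via prime-exponent vectors carried by the edge labels $\alp_\y,\bet_\y$, and the upper bound splits into the hyperbolic case (linear and congruence systems) and the elliptic case (back to the commutative-semigroup word problem, which is in \EXPSPACE). Two of the obstacles you anticipate dissolve: spurious hyperbolic conjugacies cannot interfere, because $a^k$ and $a^\ell$ are elliptic and the conjugacy criterion (\prref{thm:collins}) keeps the elliptic and hyperbolic cases disjoint; and no Lipton-style compact counters are needed, since by the Mayr--Meyer construction one may assume every relator vector has at most four nonzero entries, each at most $2$, so the labels are polynomially bounded and can even be written in unary.

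The genuine gap is in part~(i). For the upper bound, ``the combinatorial matching of stable letters amounts to free reduction'' is precisely the claim that needs proof: whether $\y_i$ cancels against $\y_j$ is not a condition on the edge letters alone, but depends on whether everything in between collapses into the vertex group \emph{and} satisfies the divisibility condition, which in turn depends on which inner letters cancel; a left-to-right simulation therefore needs a stack, not logspace. The paper resolves this with the coloring relation $\sim_{\scriptscriptstyle\cC}$ (\prref{lem:colorwelldef}, \prref{lem:color}), which precomputes in \TC, for every pair $(i,j)$, whether the abelianized edge count vanishes and the divisibility condition holds, and only then reduces the matching to a free-group word problem; you must either invoke that machinery or supply a substitute. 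For the hardness part your reduction as stated breaks: if every edge label is trivial and the graph of groups is connected, \emph{all} vertex generators are identified, so the word $uv^{-1}$ is always trivial; moreover the input forest has several components and hence is not a valid (connected) graph of groups, and ``crafting a word that navigates from $u$ to $v$'' is not available to an \AC reduction, since finding that path is the problem being reduced from. The paper's fix is to use the promise version of forest accessibility with exactly two components and vertices $s,t,u$ where $t,u$ lie in different components, give all forest edges labels $1$, join $t$ and $u$ by one new edge with $\alp=\bet=2$ (so that $\pi_1(\cG,T)\cong\gen{t}*_{t^2=u^2}\gen{u}$ and the generators of the two components remain distinct), and take the fixed input word $s\oi t$.
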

	\setcounter{theorem}{0}
}

The paper is organized as follows: in \prref{sec:prelims}, we fix our notation and recall some basic facts on complexity and graphs of groups~-- the reader who is familiar with these concepts might skip that section and only consult it for clarification. In \prref{sec:wp}, we give the proof of \prref{thm:wp}, describe how to compute Britton-reduced words, and consider the uniform word problem. Finally, \prref{sec:cp} deals with the non-uniform and uniform version of the conjugacy problem.
Parts of this work are also part of the author's dissertation \cite{Weiss15diss}.

\section{Preliminaries}\label{sec:prelims} 
\setcounter{subsection}{1}
\paragraph{Words.} An \emph{alphabet} is a (finite or infinite) set $\Sig$; an element $a \in \Sig$ is called a \emph{letter}. 
The free monoid over $\Sig$ is denoted by $\Sig^*$, its elements
are called {\em words}. The multiplication of the monoid is concatenation of words. The identity element is the empty word $1$.  
If $w,p,x,q$ are words with $w = pxq$, then we call $x$ a \emph{factor} of $w$. 

\paragraph{Rewriting systems.}
Let $X$ be a set; a \emph{rewriting system} over $X$ is a binary relation
$\RAA\,\subseteq\, X \times X$. If $(x,y)\in\, \RAA$, we write $x \RAA y$.
The idea of the notation is that $x\RAA y$ indicates that $x$ can be rewritten into $y$ in one step. 
We denote the reflexive and transitive closure of $\RAA$ by $\RAO*$; and by $\DAO*$ its reflexive, transitive, and symmetric closure~-- it is the smallest equivalence relation
such that $x$ and $y$ are in the same class for all $x \RAA y$.

\paragraph{Rewriting over words.}\label{sec:present} 
\newcommand\SGr{\Sig^*}
~Let $\Sig$ be an alphabet and $S \sse \SGr \times \SGr$ be a set of pairs. This defines a rewriting system $\smash{\RAS{}S} \vphantom{\DAS{}{}}$ over $\SGr$ by $\smash{x \RAS{}S y}$ if $x = u\ell v$ and $y = ur v$ for some $(\ell,r)\in S$. 
It is common to denote a rule $(\ell,r)\in S$ by $\ell \to r$ and we call $S$ itself a rewriting system. Since $\DAS*{S}$ is an equivalence relation, we can form the set of equivalence classes $\smash{\rquot{\SGr}{S} = \set{[x]}{x \in \SGr}}$, where $[x] = \set{y \in \SGr}{\smash{x \DAS*{S} y}} \vphantom{\DAS{}{}}$. 
Now, $\rquot{\SGr}{S}$ becomes a monoid by $ [x]\cdot [y] = [xy]$, and the mapping $x\to [x]$ yields a canonical homomorphism $\eta:\SGr\to \rquot{\SGr}{S}$. 	
%
The rewriting system $S$ is called 
\begin{itemize}
	\item \emph{confluent} if $x\RAS *{S}y$ and $x\RAS *{S}z$ implies
	$\smash{\exists \, w \, :\, \smash{y\RAS *{S}w} \text{ and }\smash{z\RAS *{S}w}}$,
	\item \emph{terminating} if there are no infinite chains 
	$	x_0 \RAS {}{S} x_1 \RAS {}{S} x_2 \RAS {}{S} \cdots,$
\end{itemize}
A rewriting system $S$ is confluent if and only if $x\DAS *{S}y$ implies
$\exists \, w \, :\, x\RAS *{S}~w$ and $y\RAS *{S}w$ (see \cite{bo93springer,jan88eatcs}).
Thus, if $S$ is confluent and teminating, then in every class of $\rquot{\Sig^*}{S}$ there is exactly one element to which no rule of $S$ can be applied.

\paragraph{Groups.}
We consider a group $G$ together with a surjective homomorphism $\eta:\Sig^* \to G$ (a \emph{monoid presentation}) for some (finite or infinite) alphabet $\Sig$. 
In order to keep notation simple, we suppress the homomorphism $\eta$ and consider words also as group elements. We write $w=_{G}w'$ as a shorthand of 
$\eta(w)=\eta(w')$ and $ w \in_G A$ instead of $\eta(w) \in A$ for $A \sse G$ and $w \in \Sig^*$. 

For words (or group elements) $v,w$ we write $v \sim_G w$ to denote conjugacy, \ie $v \sim_G w$ if and only if there exists some $z\in G$ such that $zv \oi z =_G w$. If $H$ is a subgroup of $G$, we write $v \sim_H w$ if there is some $z\in H$ such that $zv \oi z =_G w$.

\paragraph{Involutions.} An involution on a set $\Sig$ is a mapping 
$x \mapsto \ov x$ such that $\ov {\ov x} = x$. 
We consider only fixed-point-free involutions, \ie $x \neq \ov x$.

\paragraph{Free groups.} Let $\Lambda$ be some alphabet and set $\Sig = \Lambda \cup \ov \Lambda$ where $\ov \Lambda = \set{\ov a}{a \in \Lambda}$ is a disjoint copy of $\Lambda$. There is a fixed-point-free involution $\ov{\,\cdot\,}: \Sig \to \Sig$ defined by $a \mapsto \ov a$ and $\ov a \mapsto a$ (\ie $\ov{\ov a} = a$). 
Consider the confluent and terminating rewriting system of \emph{free reductions} $S = \set{a\ov a\to 1 } { a\in \Sig}$. Some word $w\in \Sig^*$ is called \emph{freely reduced} if there is no factor $a\ov a$ for any letter $a \in \Sig$.
The rewriting system $S$ defines the \emph{free group} $F_\Lambda = \rquot{\Sig^*}{S}$. We have $\ov a =_{F_\Lambda} \oi a$ for $a \in \Sig$.
We write $F_2$ as shorthand of $F_{\oneset{a,b}}$.

\paragraph{Graphs.} 
For the notation of graphs we follow Serre's book \cite{Serre77}. 
A \emph{graph} $Y = (V,E,\sor,\tar,\ov{\,\cdot\,})$ is given by the following data: 
a set of \emph{vertices} $V= V(Y)$ and a set of \emph{edges} $E=E(Y)$ together with two mappings $\sor,\tar:E \to V$
and an involution $e \mapsto \ov e$ without fixed points such that $\sor(e) = \tar(\ov e)$. 

An \emph{orientation} of a graph $Y$ is a subset $D \sse E$ such that $E$ is the disjoint union $ E= D \cup \ov{D}$.
A \emph{path} with start point $u$ and end point $v$ is a sequence of edges $e_1, \dots, e_n$ such that $\tar(e_i) = \sor(e_{i+1})$ for all $i$ and $\sor(e_1) = u$ and $\tar(e_n)= v$. A graph is \emph{connected} if for every pair of vertices there is a path connecting them.

\subsection{Complexity}
Computation or decision problems are given by functions $f:\Del^* \to \Sig^*$ for some finite alphabets $\Del $ and $\Sig$. In case of a decision problem (or formal language) the range of $f$ is the two element set $\oneset{0,1}$.

\smallskip\emph{\L} is the class of functions computable by a deterministic Turing machine with working tape bounded logarithmically in the length of the input.

Our result uses the following well-known theorem about linear groups (groups which can be embedded into a matrix group over some field). It was obtained by Lipton and Zalcstein \cite{lz77} for fields of characteristic 0 and by Simon \cite{Sim79} for other fields.
\begin{theorem}[\cite{lz77,Sim79}]\label{thm:linlog}
	Linear groups have word problem in $\L$.
\end{theorem}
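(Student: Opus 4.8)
The plan is to reduce the word problem to a question about a single matrix product and then to evaluate that product modulo many small primes so as to stay inside logarithmic space. Fix an embedding $G \hookrightarrow \GL{n}{K}$ and a finite generating set $g_1, \dots, g_k$ of $G$, with each $g_i$ represented by a matrix $M(g_i)$ and $M(\oi{g_i}) = \oi{M(g_i)}$. A word $w = h_1 \cdots h_m$ over the generators and their inverses represents $1$ in $G$ if and only if the product $M(w) = M(h_1)\cdots M(h_m)$ equals the identity matrix $I_n$, i.e.\ if and only if the $n^2$ entries of $M(w) - I_n$ all vanish. Since $G$ is finitely generated, all entries of the matrices $M(g_i^{\pm 1})$ lie in the subring $R \sse K$ that they generate, which is a finitely generated commutative ring; thus we only ever work inside $\GL{n}{R}$. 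The one genuine difficulty is that the entries of $M(w)$ have bit-length linear in $m$ in the worst case, so the product cannot be written down, let alone computed naively, in space $\Oh(\log m)$.

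To circumvent this, present $R$ as a quotient of a polynomial ring $\Z[x_1, \dots, x_r]$ (the algebraic generators appearing through their minimal polynomials in the defining ideal), so that each entry of $M(w)$ is represented by a polynomial of total degree $\Oh(m)$ whose integer coefficients are bounded by $C^m$ for a constant $C$ depending only on the generators. Such a ring element is detected to be nonzero by a homomorphism $\varphi \colon R \to \mathbb{F}_p$, obtained by reducing coefficients modulo a small prime $p$ and substituting values $a_1, \dots, a_r \in \mathbb{F}_p$ for the $x_i$ that respect the defining relations. In characteristic $0$ the key point is purely the Chinese Remainder Theorem: an integer of absolute value below $C^m$ vanishes if and only if it is divisible by $\prod_{p \le N} p$, provided this product exceeds $2C^m$; since it grows like $e^{(1+o(1))N}$ by the prime number theorem, $N = \Oh(m)$ suffices, so primes of size $\Oh(m)$, of which there are only $\Oh(m)$, certify vanishing. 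The transcendental directions are handled similarly by letting the substitution point $(a_1, \dots, a_r)$ range over a grid of side $\Oh(m)$ in $\mathbb{F}_p$ and invoking a degree bound (finite differences / combinatorial nullstellensatz), which again leaves only polynomially many homomorphisms $\varphi$ to test.

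The logspace algorithm then enumerates these polynomially many parameter choices one at a time. For a fixed $\varphi$ it computes $\varphi(M(w)) = \varphi(M(h_1)) \cdots \varphi(M(h_m))$ by scanning $w$ once from left to right, keeping only the running product as an $n \times n$ matrix over $\mathbb{F}_p$; as $n$ is a constant, $p = \Oh(m)$, and the index of the current letter also fits in $\Oh(\log m)$ bits, the whole configuration occupies $\Oh(\log m)$ space, and every arithmetic step modulo $p$ is logspace-computable. The machine accepts if and only if $\varphi(M(w)) = I_n$ for every tested $\varphi$; because the parameter choices are processed sequentially with space reused, the total space stays logarithmic.

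I expect the main obstacle to be the vanishing test in positive characteristic and in the presence of transcendental entries, where the clean integer Chinese Remainder Theorem is unavailable. The resolution~-- which is exactly what Simon's contribution supplies beyond the characteristic-$0$ argument of Lipton and Zalcstein~-- is to reduce the ambient field to a finitely generated domain and to certify non-vanishing of ring elements through evaluation homomorphisms into finite fields, controlled by explicit degree and height bounds. Verifying that polynomially many such homomorphisms, each of polynomial size, always suffice, and that they can be enumerated uniformly within the space bound, is the technical heart of the proof.
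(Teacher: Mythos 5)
The paper does not prove this statement at all---it imports it as a black-box citation to Lipton--Zalcstein and Simon---so the only meaningful comparison is with the argument in those references, and your sketch is a faithful outline of exactly that argument: reduce to a single matrix product over a finitely generated ring, certify vanishing of entries in characteristic $0$ by the Chinese Remainder Theorem over the primes up to $\Oh(m)$, handle transcendental and positive-characteristic entries via polynomially many evaluation homomorphisms into small finite fields, and evaluate each modular product in one left-to-right scan using $\Oh(\log m)$ space. You also correctly locate the technical heart (exhibiting a uniformly enumerable, polynomial-size separating family of such homomorphisms for a finitely generated domain, which is Simon's contribution) rather than glossing over it, so the proposal is a correct account of the standard proof.
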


\paragraph{Circuit Complexity.}
The class \Ac{0} (resp.\ \Tc{0}) is defined as the class of functions computed by families of circuits of constant depth and polynomial size with unbounded fan-in Boolean gates (and, or, not) (resp.\ unbounded fan-in Boolean and \proc{Majority} gates)~-- the alphabets $\Del$ and $\Sig$ are encoded over the binary alphabet $\oneset{0,1}$.
In the following, we only consider $\mathsf{Dlogtime}$-uniform circuit families and we write $\AC$ (resp.\ $\TC$) as shorthand for $\mathsf{Dlogtime}$-uniform $\Ac{0}$ (resp.\ $\Tc0$). $\mathsf{Dlogtime}$-uniform means that there is a deterministic Turing machine which decides in time $\Oh(\log n)$ on input of two gate numbers (given in binary) and the string $1^n$ whether there is a wire between the two gates in the $n$-input circuit and also decides of which type some gates is. Note that the binary encoding of the gate numbers requires only $\Oh(\log n)$ bits~-- thus, the Turing machine is allowed to use time linear in the length of the encodings of the gates.
For more details on these definitions we refer to \cite{Vollmer99}.

\paragraph{Reductions.} Let $K\sse \Del^*$ and $L \sse \Sig^*$ be languages and $\cC$ a complexity class. Then $K$ is called $\cC$-many-one-reducible to $L$ if there is a $\cC$-computable function $f: \Del^* \to \Sig^*$ such that $w \in K$ if and only if $f(w) \in L$.

A function $f$ is \emph{\AC-reducible} (or \AC-Turing-reducible) to a function $g$ if there is a $\mathsf{Dlogtime}$-uniform family of \Ac0 circuits computing $f$ which, in addition to the Boolean gates, also may use oracle gates for $g$ (\ie gates which on input $x$ output $g(x)$). 
We write $\AC(F_2)$ for the family of problems which are \AC-reducible to the word problem of the free group $F_2$.

\paragraph{The Class \TC and Arithmetic.} 
Although \TC is a very low parallel complexity class, it is still very powerful with respect to arithmetic. 
By the very definition of \AC reducibility, 
\proc{Majority} is \TC-complete. As an immediate consequence, the word problem of $\Z$ with generators $\pm 1$ is also \TC-complete (since a sequence over the alphabet $\oneset{\pm 1}$ sums up to $0$ if and only if there is neither a majority of letters $1$ nor of letters $-1$).

\proc{Iterated Addition} (resp.\ \proc{Iterated Multiplication}) are the following computation problems: On input of $n$ binary integers $a_1,\dots, a_n$ each having $n$ bits (\ie the input length is $N=n^2$), compute the binary representation of the sum $\sum_{i=0}^n a_i$ (resp.\ product $\prod_{i=0}^n a_i$).
For \proc{Integer Division},\index{Integer Division@\proc{Integer Division}} the input are two binary $n$-bit integers $a, b$; the binary representation of the integer $c=\floor{a/b}$ has to be computed.
The first statement of \prettyref{thm:divisionTC} is a standard fact, see \cite{Vollmer99}; the other statements are due to Hesse, \cite{hesse01, HeAlBa02}.

\begin{theorem}[\cite{hesse01, HeAlBa02, Vollmer99}]\label{thm:divisionTC}\label{thm:additionTC}
	The problems \proc{Iterated Addition},
	\proc{Iterated Multiplication},
	\proc{Integer Division} are all in \TC.
\end{theorem}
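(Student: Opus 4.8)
\emph{Plan.} The three assertions have very different depth, so I would prove them separately, in increasing order of difficulty, reducing each new problem to the previous ones.

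For \proc{Iterated Addition} I would first reduce the problem to counting. For each bit position $i$ let $s_i$ be the number of input integers whose $i$-th bit equals $1$; since $0 \leq s_i \leq n$, each $s_i$ is an $\Oh(\log n)$-bit number, and computing all the $s_i$ is nothing but counting ones among $n$ bits, which is precisely what \proc{Majority}/threshold gates do and hence lies in \TC. The desired sum is $S = \sum_i s_i 2^i$, so it only remains to resolve the carries. The carry entering position $j$ equals $\floor{\sum_{i<j} s_i 2^{i-j}} \leq n$, again an $\Oh(\log n)$-bit quantity, and the $j$-th output bit is the parity of $s_j$ plus this carry. The one real point is that these carries are determined by threshold conditions on the $s_i$ and can therefore be extracted in constant depth; this is the standard argument, and I would simply invoke \cite{Vollmer99}.

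For \proc{Integer Division} I would follow Hesse and reduce to iterated multiplication. To compute $\floor{a/b}$, I would split off the leading power of two, writing $b = 2^k(1-x)$ with $0 \leq x < 1$ bounded away from $1$, and expand $1/b = 2^{-k}\sum_{t\geq 0} x^t$. Truncating the geometric series after polynomially many terms leaves an error below $2^{-\mathrm{poly}(n)}$, which is small enough to recover the exact quotient after multiplying by $a$ and rounding. Each power $x^t$ is an instance of iterated multiplication and the outer sum is an iterated addition, so division is \TC-reducible to iterated multiplication (together with the already-established \proc{Iterated Addition}), provided the rounding analysis is carried out carefully. The heart of the matter is then \proc{Iterated Multiplication} itself. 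Here I would pass to the Chinese Remainder Representation: fix $\Oh(n)$ primes $p_1,\dots,p_r$, each of $\Oh(\log n)$ bits, whose product exceeds $\prod_i a_i$, so that the result is determined by its residues. Modulo a single such prime the iterated product is cheap: using a fixed primitive root $g_k$ of $\Z/p_k\Z$ one replaces each nonzero residue by its discrete logarithm, adds these logarithms modulo $p_k-1$ (an \proc{Iterated Addition} on $\Oh(\log n)$-bit numbers), and exponentiates back, the discrete-log and power tables for $\Oh(\log n)$-bit moduli having polynomial size; residues that vanish are handled separately. This produces the product in Chinese Remainder Representation within \TC.

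\textbf{The main obstacle} is the final step: converting a number from its Chinese Remainder Representation back to binary. Naively this reconstruction already calls for multiplications and a reduction modulo the huge modulus $M=\prod_k p_k$, so it is not obviously any easier than the problem we started from. Showing that this conversion is nevertheless in \TC is exactly Hesse's technical contribution \cite{hesse01,HeAlBa02}, and it is the step I would expect to absorb essentially all of the work; once it is available, iterated multiplication, and hence integer division, follow as sketched above.
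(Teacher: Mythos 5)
The paper does not prove this theorem at all: it is imported as a black box from \cite{Vollmer99} (for \proc{Iterated Addition}) and from Hesse's work \cite{hesse01, HeAlBa02} (for the other two problems), so there is no in-paper argument for your sketch to diverge from. Your outline faithfully reproduces the strategy of those references~-- counting plus carry resolution for addition, Chinese remaindering with discrete logarithms for iterated multiplication, and the geometric-series reduction of division to iterated multiplication~-- and correctly locates the genuinely hard, deferred step in the $\mathsf{Dlogtime}$-uniform conversion from Chinese Remainder Representation back to binary (which, along with the uniform construction of the discrete-log tables, is precisely the content of \cite{hesse01, HeAlBa02}).
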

We have the following inclusions (note that even $\TC \sse \P$ is not known to be strict):
\begin{align*}
\TC \sse \AC(F_2) \sse \L \sse \P.
\end{align*}
The first inclusion is because there is a subgroup $\Z$ in $F_2$; the second inclusion is because of \prref{thm:linlog}.

\newcommand{\cu}{c}
\newcommand{\PP}{a}
\newcommand{\QQ}{b}

\subsection{Graphs of Groups}\label{cap:bass_serre}
\newcommand{\BSalp}{\Del}

Since generalized Baumslag-Solitar groups are defined as fundamental groups of graphs of groups, we give a brief introduction into this topic. 
Our presentation is a shortened version taken from \cite{DiekertW13cfBST}, which in turn is based on Serre's book \cite{Serre77}.

\begin{definition}[Graph of Groups]
	Let $Y = (V(Y),E(Y))$ be a connected graph. 
	A \emph{graph of groups} $\cG$ over $Y$ is given by the following data:
	\begin{enumerate}
		\item For each vertex $\PP \in V(Y)$, there is a \emph{vertex group} 
		$G_\PP$.
		\item For each edge $y \in E(Y) $, there is an \emph{edge group}
		$G_y $ such that $G_y = G_{\ov y}$.
		\item For each edge $y \in E(Y) $, there is
		an injective homomorphism from $G_y$ to $G_{\sor(y)}$, which is denoted by $\cu \mapsto \cu^y$.
		The image of $G_y$ in $G_{\sor(y)}$ is denoted by $G_y^y$.
	\end{enumerate}
	In the following, $Y$ is always a finite graph.	Since $G_y = G_{\ov y}$, there is also a homomorphism $G_y \to G_{\tar(y)}$. 
	Thus, for $y \in E(Y) $ with $\sor(y)= \PP$ and $\tar(y)= \QQ$, there are two 
	isomorphisms and inclusions: 
	\begin{align*}
		G_y \tto G_y^y \leq G_\PP &,\quad \cu \mapsto \cu^y, &
		G_y \tto G_{\ov y}^{\ov y} \leq G_\QQ &, \quad \cu \mapsto \cu^{\ov y}.
	\end{align*} 
\end{definition}

The fundamental group of $\cG$ can be constructed as subgroup of the larger group $F(\cG)$:
as an (possibly infinite) alphabet we choose a disjoint union 
\[\BSalp= E(Y) \cup\!\!\! {\bigcup_{\PP\in V(Y)} \left(G_\PP \setminus \oneset{1}\right),}\]
and we define the group
\[F(\cG) =\rQuot{\Del^*}{\set{ gh=[gh],\, y\cu^{\ov{y}}\ov{y}=\cu^y}{\PP\in V(Y),\, g,h \in G_\PP;\, y\in E(Y), \, \cu\in G_y},}\]
where $[gh]$ denotes the element obtained by multiplying $g$ and $h$ in $G_\PP$ (where $1 \in G_\PP$ is identified with the empty word). 

Let us define subsets of $\Delta^*$ as follows: for $\PP,\QQ\in V(Y) $, we denote with ${\Pi}(\cG, \PP,\QQ)$ the set of words where the occurring edges form a path from $\PP$ to $\QQ$ in $Y$ and the elements of vertex groups between two edges are from the corresponding vertex in the path; more precisely, 
\begin{align*}
	\begin{split}
		{\Pi}(\cG, \PP,\QQ) = \big\{\,g_0y_1\cdots g_{n-1}y_n g_{n} \;\big\vert\; y_i \in E(Y),\: \sor(y_1) = \PP,\: \tar(y_n) = \QQ,\,\qquad\qquad \\ 
		\tar(y_{i}) = \sor(y_{i+1}),\, g_{0} \in G_{\PP},\,g_i\in G_{\tar(y_{i})} \text{ for all $i $}\,\big\},
	\end{split}
\end{align*}
where again $1 \in G_\PP$ is identified with the empty word. Moreover, we set 
$${\Pi}(\cG) = \bigcup_{\PP\in V(Y)} {\Pi}(\cG, \PP,\PP).$$

In general, the image of $\Pi(\cG)$ in $F(\cG)$ is not a group but a so-called \emph{groupoid}.
If $w =g_0y_1\cdots g_{n-1}y_n g_{n} \in {\Pi}(\cG) $, then we call $w$ a \emph{$\cG$-factorization} of the respective group element in $F(\cG)$; by saying this we implicitly require that $ y_i \in E(Y),\, \tar(y_{i}) = \sor(y_{i+1}),\,g_i\in G_{\tar(y_{i})}$ for all $i$, $\tar(y_{n}) = \sor(y_{1})$, and $g_{0} \in G_{\sor(y_1)}$.
We call $ y_1\cdots y_n$ the \emph{underlying path} of $w$. 

For all vertices $\PP\in V(Y)$, the image of $\Pi(\cG,\PP,\PP)$ in $F(\cG)$ is a group.
\begin{definition}\ \label{def:fgogP}
	\begin{enumerate} 
		\item Let $\PP\in V(Y)$. The \emph{fundamental group} $\pi_1(\cG,\PP)$ of $\cG$ with respect to the base point $\PP \in V(Y)$ is defined as the image of $\Pi(\cG,\PP,\PP)$ in $F(\cG)$.
		\item Let $T$ be a spanning tree of $Y$ (\ie a subset of $E(Y)$ connecting all vertices and not containing any cycles). The \emph{fundamental group} of $\cG $ with respect to $T$ is defined by
		\[\pi_1(\cG,T)= \rQuot{F(\cG)}{\set{y=1}{y\in T}.}\]
	\end{enumerate}
\end{definition}

\begin{proposition}[\cite{Serre77}]\label{prop:twofunds}
	The canonical homomorphism from the subgroup $\pi_1(\cG, \PP)$ 
	of $F(\cG)$ to the quotient group $\pi_1(\cG,T)$ is an isomorphism. In particular, the two definitions of the fundamental group are independent of the choice of the base point and the spanning tree.
\end{proposition}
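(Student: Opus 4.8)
The plan is to exhibit an explicit two-sided inverse of the canonical homomorphism, built from the unique geodesics in the spanning tree $T$. Write $\Phi \colon \pi_1(\cG,\PP) \to \pi_1(\cG,T)$ for the canonical map, i.e.\ the restriction to $\pi_1(\cG,\PP)$ of the quotient map $F(\cG) \to \pi_1(\cG,T)$. For each vertex $v \in V(Y)$ let $\gam_v \in \Pi(\cG,\PP,v)$ be the unique reduced path in $T$ from $\PP$ to $v$ (a product of tree edges, containing no vertex-group elements); in particular $\gam_\PP = 1$. Then $\oi{\gam_v}$ is the reversed path from $v$ to $\PP$, and for an edge $y \in T$ with $\sor(y)=u$, $\tar(y)=v$ one has $\gam_v =_{F(\cG)} \gam_u\, y$, since appending or deleting a single tree edge to a tree geodesic again yields a tree geodesic.

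First I would build a homomorphism $\rho \colon F(\cG) \to \pi_1(\cG,\PP)$ that conjugates every generator into a loop based at $\PP$: for $g \in G_v \setminus \oneset 1$ set $\rho(g) = \gam_v\, g\, \oi{\gam_v}$, and for an edge $y$ with $\sor(y)=u$, $\tar(y)=v$ set $\rho(y) = \gam_u\, y\, \oi{\gam_v}$. Each image is a closed path based at $\PP$, hence lies in the subgroup $\pi_1(\cG,\PP) \le F(\cG)$, which is closed under multiplication. The decisive point is that $\rho$ respects both families of defining relations of $F(\cG)$. For $gh=[gh]$ with $g,h\in G_v$, the two inserted copies of $\oi{\gam_v}\gam_v$ cancel and both sides yield $\gam_v\,[gh]\,\oi{\gam_v}$. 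For $y\,\cu^{\ov y}\,\ov y = \cu^y$ with $\sor(y)=u$, $\tar(y)=v$, substituting the definitions makes the intermediate tree paths telescope, so the image of the left-hand side collapses to $\gam_u\,(y\,\cu^{\ov y}\,\ov y)\,\oi{\gam_u} =_{F(\cG)} \gam_u\,\cu^y\,\oi{\gam_u}$, which is exactly $\rho(\cu^y)$.

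Next I would verify that $\rho$ kills the tree: for $y \in T$ with $\sor(y)=u$, $\tar(y)=v$ the identity $\gam_v =_{F(\cG)} \gam_u\, y$ gives $\rho(y)=\gam_u\, y\, \oi{\gam_v} = \gam_u\, y\, \oi y\, \oi{\gam_u} =_{F(\cG)} 1$. Hence $\rho$ sends every relator $y=1$ ($y\in T$) to the identity and therefore factors through a homomorphism $\bar\rho \colon \pi_1(\cG,T) \to \pi_1(\cG,\PP)$. It then remains to check that $\bar\rho$ and $\Phi$ are mutually inverse. The equality $\Phi\circ\bar\rho=\id$ is immediate on generators: the image of $g\in G_v$ is sent by $\bar\rho$ to $\gam_v\, g\, \oi{\gam_v}$ and then by $\Phi$ back to the image of $g$, because every tree edge, hence $\gam_v$, becomes trivial in $\pi_1(\cG,T)$; the argument for edge generators is identical. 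For $\bar\rho\circ\Phi=\id$ I would take a loop $w=g_0 y_1 g_1\cdots y_n g_n\in\Pi(\cG,\PP,\PP)$ with intermediate vertices $v_i=\tar(y_i)$, so that $v_0=v_n=\PP$ and $\gam_{v_0}=\gam_{v_n}=1$; applying $\rho$ and using that consecutive factors contribute $\oi{\gam_{v_i}}\gam_{v_i}=1$, the whole product telescopes back to $w$ in $F(\cG)$.

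The genuinely nontrivial step is the well-definedness of $\rho$, namely the verification that the conjugation assignment is compatible with the HNN-type edge relations $y\,\cu^{\ov y}\,\ov y = \cu^y$; everything else is bookkeeping with the geodesics $\gam_v$ and the tree identity $\gam_v =_{F(\cG)} \gam_u\, y$. Once $\bar\rho$ and $\Phi$ are shown to be mutually inverse, the ``in particular'' clause follows formally: the quotient $\pi_1(\cG,T)$ involves no base point, so for a fixed $T$ all groups $\pi_1(\cG,\PP)$ are isomorphic to it and hence to one another; symmetrically, for a fixed $\PP$ all $\pi_1(\cG,T)$ are isomorphic via $\pi_1(\cG,\PP)$, giving independence of the spanning tree as well.
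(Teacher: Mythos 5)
Your proof is correct and complete: the conjugation-by-tree-geodesics map $\rho$, the verification that it respects both families of relations and kills the tree edges, and the two telescoping computations showing $\bar\rho$ and $\Phi$ are mutually inverse are all sound. The paper gives no proof of this proposition—it simply cites Serre—and your argument is exactly the standard one from that source, so there is nothing to compare beyond noting that you have supplied the details the paper omits.
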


\begin{example}\label{ex:bs}
	Let $\cG$ be a graph of groups over the following graph:
	\vspace*{-0.7cm}
	\begin{center}
		\begin{footnotesize}
			\begin{tikzpicture}
			\node [draw,semithick,circle ](b) at (0,0 ) {$\PP$};
			\draw [semithick,-] (b) .. controls(2.5,1.5) and (2.5,-1.5) .. (b)node[ pos=.5,right]{$\oneset{y,\ov y}$};
			\end{tikzpicture}
		\end{footnotesize}
	\end{center}
	\vspace*{-1cm}
	and let $G_\PP= \Z = \gen{a}$ and $G_y=G_{\ov y} = \Z = \gen{c}$ and the inclusions given by 
	\begin{align*}
		c&\mapsto a^p & c\mapsto a^q
	\end{align*}
	for some $p,q \in \Z \setminus \oneset{0}$.
	Then the fundamental group ${\pi_1}(\cG,\PP)$ is the 
	Baumslag-Solitar group
	$${\pi_1}(\cG,\PP) = \BS{p}{q} = \genr{a,y}{ya^p\oi y = a^q}.$$
\end{example}

\paragraph{Britton Reductions over Graphs of Groups.}\label{sec:britton}
In \cite{britton63}, \emph{Britton reductions} were originally defined for HNN extensions. They are given by the rewriting system $B_\cG\sse \BSalp^*\times \BSalp^*$ with the following rules (see also \cite[Sec.\ IV.2]{LS01}):
\begin{align*}
	&&gh &\ra{} [gh] && \quad \text{for } \PP\in V(Y),\ g, h \in G_\PP \sm \oneset{1},\\
	&&y \cu^{\ov{y}}\ov{y}&\ra{} \cu^y && \quad \text{for } y\in E(Y),\ \cu\in G_y.
\end{align*}
As $B_\cG$ is length-reducing, it is terminating. Furthermore, $F(\cG) = \rquot{\BSalp^*}{B_\cG}$.
A word $w\in \Del^*$ is called \emph{Britton-reduced} if no rule from $\cB_\cG$ can be applied to it. As $\cB_\cG$ is terminating, there is a Britton-reduced $\hat{w}$ with $\hat{w} \eqF w$ for every $w$. However, this $\hat{w}$ might not be unique as $\cB_\cG$ is not confluent in general. Still, the following crucial facts hold: 
\begin{lemma}[Britton's Lemma, \cite{britton63}]\label{lem:britton}
	Let $w \in \Del^*$ be Britton-reduced. 
	If $w \in_{F(\cG)} G_\PP$, then $w$ is the empty word or consists of a single letter of $G_\PP$.
	Moreover, if $w \eqF 1$, then $w=1$ (\ie $w$ is the empty word).	
\end{lemma}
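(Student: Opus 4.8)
The plan is to derive both assertions from the normal form theorem for the group $F(\cG)$, which I would establish by van der Waerden's permutation method. The second assertion is a formal consequence of the first: if $w \eqF 1$, then $w$ represents $1 \in G_\PP$, so by the first part $w$ is empty or a single vertex letter $g \in G_\PP$; a nontrivial such letter cannot equal $1$ in $F(\cG)$, because each vertex group embeds into $F(\cG)$ — itself part of the normal form statement — so $w$ must be empty. Thus everything rests on controlling Britton-reduced representatives of a fixed element.

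First I would fix normal forms. For each edge $y$ with $\sor(y) = \PP$, choose a transversal $R_y \sse G_\PP$ of the cosets $G_y^y \backslash G_\PP$ with $1 \in R_y$. Call a $\cG$-factorization $g_0 y_1 g_1 \cdots y_n g_n$ a \emph{normal form} if it is Britton-reduced and every element preceding an edge lies in the transversal of that edge. Using the rules of $B_\cG$ together with the identities $c^y y = y c^{\ov y}$ (which let one slide an edge-group element from one side of an edge to the other), one checks that every word of $\Del^*$ equals some normal form; the content is uniqueness. I describe the construction for $\cG$-factorizations; taking $N$ below to consist of all reduced coset-representative words of $\Del^*$ covers arbitrary $w$ by the same reasoning.

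For uniqueness I would build a permutation representation. Let $N$ be the set of normal forms and, for each generator $s \in \Del$, define a map $\lambda_s : N \to N$ sending $\nu$ to the normal form of $s\nu$, given by explicit local rules: a vertex letter multiplies into the leading element and is re-expanded along its coset in the relevant $R_y$, while an edge letter is prepended and either triggers a pinch — when the leading vertex part of $\nu$ lies in the pertinent edge-group image — or opens a new edge syllable. The crucial and most delicate step is to check that these maps respect the defining relations of $F(\cG)$, i.e. that $gh = [gh]$ and $y c^{\ov y} \ov y = c^y$ act as the identity on $N$; here the compatibility of the chosen transversals with the edge isomorphisms must be used, and tracking sources and targets so that only the correct products are formed is the main bookkeeping burden. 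Granting this, the assignment $s \mapsto \lambda_s$ extends to a homomorphism $F(\cG) \to \Sym{N}$, and evaluating the permutation of a word $w$ at the empty normal form returns the normal form of $w$; hence equal elements of $F(\cG)$ have identical normal forms.

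Finally I would read off the lemma. Reducing a word to its normal form alters the sequence of edges only through pinches, and a pinch can be applied precisely where Britton-reducedness fails; so a Britton-reduced word and its normal form carry the same underlying path. If a Britton-reduced $w$ satisfies $w \in_{F(\cG)} G_\PP$, then $w$ and the vertex element it represents have the same normal form and hence the same underlying path; but the latter is a single letter of $G_\PP$, whose underlying path is empty. Therefore $w$ has no edges, i.e. $w$ is empty or a single letter of $G_\PP$, which is the first assertion, and the second follows as above. As an alternative to the permutation argument one may let $\pi_1(\cG)$ act on its Bass-Serre tree, where a reduced path word of positive edge-length moves the base vertex and so cannot lie in a vertex stabilizer; I would retreat to this route if the transversal bookkeeping became unwieldy.
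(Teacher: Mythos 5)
The paper does not prove this lemma at all: it is quoted as a classical result with a citation to Britton, so there is no in-paper argument to compare against. Your proposal is the standard proof of the normal form theorem for $F(\cG)$ via van der Waerden's permutation trick (with the Bass--Serre tree action as the equally standard alternative), and as a sketch it is sound: choose transversals, define normal forms, verify that the generator actions on normal forms respect the defining relations, deduce uniqueness, and observe that normalization of a Britton-reduced word never changes the underlying edge path. Two small points deserve care. First, the side of the coset must match the direction in which you slide edge-group elements: since the relation reads $c^y y = y c^{\ov y}$, an edge-group factor sitting immediately to the \emph{left} of $y$ is what gets pushed across, so the letter preceding $y$ should be normalized to a representative of $G_{\sor(y)}/G_y^y$ (right cosets), not of $G_y^y\backslash G_{\sor(y)}$ as written; with your convention you would instead normalize the letter \emph{following} each edge. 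Second, reading off the conclusion from ``same underlying path, hence no edges'' is not quite enough by itself, because a Britton-reduced word of $\Del^*$ with no edges can still be a product of several letters from pairwise distinct vertex groups; you need one more appeal to uniqueness of normal forms (in the free-product-like, edge-free case) to conclude that such a word cannot represent an element of a single $G_\PP$ unless it has length at most one. Both are bookkeeping repairs, not gaps in the method.
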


\begin{lemma}\label{lem:britton2}
	If $v = h_0x_1\cdots g_{n-1}x_n h_{n}, w=g_0y_1\cdots g_{n-1}y_n g_{n} \in \Pi(\cG)$ with $v\eqF w$ are Britton-reduced, then $x_i=y_i$ for all $i$ and there are $\cu_i \in G_{y_i}$ for $1 \leq i \leq n$ such that
	\begin{align*}
		h_0 &=_{G_{\!\sor\:\!\!( \:\!\!y_{\;\!\!1}\!)}} g_0\:\! {(\cu_1^{-1})}^{\mathrlap{y_1}},\\ h_i &=_{G_{\!\tar\:\!\!(\:\!\!y_{\:\!\!i}\!)}} \cu_{i}^{\ov y_i} \:\! g_i \:\! {(\oi \cu_{i+1})}^{\mathrlap{y_{i+1}}}, \qquad\text{ for } 1\leq i \leq n-1, \text{ and } \\ h_n &=_{G_{\!\tar\:\!\!( \:\!\!y_{\:\!\!n}\!)}} \cu_n^{\ov y_n} \:\! g_n.
	\end{align*}
\end{lemma}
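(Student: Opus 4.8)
The plan is to induct on the number $n$ of edges in the two (equal) underlying paths, using Britton's Lemma (\prref{lem:britton}) as the sole engine. The base case $n=0$ is immediate: then $v=h_0$ and $w=g_0$ are single vertex elements, and $\oi{g_0}h_0\eqF 1$ forces them into a common vertex group and to be equal there, so nothing remains to be matched. For the inductive step I expect the entire difficulty to lie in proving that the \emph{first} edges agree, $x_1=y_1$; once this is known, the relations between the vertex elements come out by routine manipulation inside $F(\cG)$.

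To match the first edges I would consider the word $\oi w\,v\eqF 1$. Since $v$ is Britton-reduced and the inverse of a Britton-reduced word is again Britton-reduced, the concatenation $\oi w\,v=\oi{g_n}\ov{y_n}\cdots\ov{y_1}\,\oi{g_0}h_0\,x_1\cdots x_n h_n$ is Britton-reduced everywhere except possibly at the junction, where it carries the factor $\oi{g_0}h_0$. Because $\oi w\,v\eqF1$, Britton's Lemma forces it to reduce to the empty word, so a reduction \emph{must} be available, and by locality of $B_\cG$ it must sit at the junction. This already forces $\sor(x_1)=\sor(y_1)=:\PP$ (otherwise $\oi{g_0}h_0$ is not a mergeable factor and the word is reduced and nonempty, a contradiction); merging gives $g:=[\oi{g_0}h_0]\in G_\PP$ and the only remaining reduction site $\ov{y_1}\,g\,x_1$. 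A reduction of this factor is possible precisely when $x_1=y_1$ and $g$ lies in the image of the edge group, i.e. $g=_{G_\PP}(\oi{\cu_1})^{y_1}$ for a suitable $\cu_1\in G_{y_1}$ (with $\cu_1=1$ when $g=1$). Unwinding $g=[\oi{g_0}h_0]$ yields the first stated relation $h_0=_{G_\PP}g_0(\oi{\cu_1})^{y_1}$.

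With $x_1=y_1$ and $\cu_1$ in hand I would peel off the front. Substituting $h_0=g_0(\oi{\cu_1})^{y_1}$ into $v\eqF w$ and cancelling $g_0$ on the left gives $(\oi{\cu_1})^{y_1}y_1\,h_1x_2\cdots\eqF y_1\,g_1y_2\cdots$; applying the defining relation of $F(\cG)$ in the form $\cu^{y}y\eqF y\,\cu^{\ov y}$ to slide $(\oi{\cu_1})^{y_1}$ across $y_1$, and then cancelling the leading $y_1$, turns this into $h_1'x_2\cdots x_nh_n\eqF g_1y_2\cdots y_ng_n$ with $h_1':=(\oi{\cu_1})^{\ov{y_1}}h_1\in G_{\tar(y_1)}$. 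Both sides are Britton-reduced: the right side is a suffix of $w$, and the left differs from a suffix of $v$ only in its leading vertex element, which has no preceding edge and hence creates no new reducible factor. These shorter words have $n-1$ edges, so the induction hypothesis applies and returns $x_i=y_i$ and $\cu_i$ for $2\le i\le n$. Finally, re-expressing $h_1=\cu_1^{\ov{y_1}}h_1'$ converts the first relation of the smaller instance into the stated middle relation for $i=1$, while the other relations transfer unchanged; this closes the induction.

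The step I expect to be the genuine obstacle is the edge-matching $x_1=y_1$, because it is the only place where one truly needs the second assertion of Britton's Lemma (a Britton-reduced word equal to $1$ is empty) together with the locality of $B_\cG$ to confine the reduction to the junction. The subsidiary point requiring care is the verification that passing to the $(n-1)$-edge instance preserves Britton-reducedness, which is why I isolate the observation that altering only the leading vertex element introduces no new reduction.
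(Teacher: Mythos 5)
The paper states \prref{lem:britton2} without proof, treating it as a standard consequence of Britton's Lemma, so there is no in-paper argument to compare yours against; your argument is the standard one, and it is correct. The key step is exactly as you identify it: $\oi{w}v\eqF 1$ is nonempty for $n\geq 1$, so by \prref{lem:britton} a reduction must apply, and since $\oi w$ and $v$ are each Britton-reduced it can only sit at the junction, forcing first $\sor(x_1)=\sor(y_1)$ and then, after merging, $x_1=y_1$ with $[\oi{g_0}h_0]\in G_{y_1}^{y_1}$. One repair is needed in the induction: after peeling, $v'=h_1'x_2\cdots x_nh_n$ and $w'=g_1y_2\cdots y_ng_n$ run from $\tar(y_1)$ to $\tar(y_n)=\sor(y_1)$ and are therefore in general \emph{not} closed paths, so they do not lie in $\Pi(\cG)$ as the paper defines it (a union of $\Pi(\cG,\PP,\PP)$). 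You should induct on the corresponding statement for arbitrary $\Pi(\cG,\PP,\QQ)$; this costs nothing, since \prref{lem:britton} as quoted applies to arbitrary words in $\Del^*$ and every step of your argument is insensitive to whether the paths are closed. A smaller point: the junction analysis should explicitly cover the degenerate cases $g_0=1$ or $h_0=1$, where the merge step is vacuous and the candidate reducible factor is directly $\ov{y_1}\,h_0\,x_1$ (resp.\ $\ov{y_1}\,x_1$); the conclusion is unchanged. Your isolation of the observation that replacing only the leading vertex element preserves Britton-reducedness is exactly the point that needs checking, and your justification of it is correct.
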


Using \prettyref{lem:britton} one obtains a decision procedure for the word problem if the subgroup membership problem of $G_y^y$ in $G_{\sor(y)}$ is decidable, the word problem of $G_\PP$ is decidable for some $\PP\in V(Y)$, and the isomorphisms $G_y^y \to G_y^{\ov y}$ are effectively computable for all $y\in E(Y)$.
However, this does not imply any bound on the complexity. The problem is that~-- even if all computations can be performed efficiently~-- the blow up due to the calculations of the isomorphisms $G_y^y \to G_y^{\ov y}$ might prevent an efficient solution of the word problem in the fundamental group. An example is the Baumslag group $\BG = \genr{a,t,b}{tat^{-1}= a^2, bab^{-1}=t}$, which is an HNN extension of the Baumslag-Solitar group $\BS12$. For \BG, the straightforward algorithm of applying Britton reductions, leads to a non-elementary running time. However, in \cite{muw11bg} it is shown that the word problem still can be solved in polynomial time.

For Baumslag-Solitar groups, the straighforward application of Britton reductions yields a polynomial time algorithm if the exponents are stored as binary integers.

\paragraph{Generalized Baumslag-Solitar Groups.}
A \emph{generalized Baumslag-Solitar group} (\emph{GBS group}) is a fundamental group of a finite graph of groups with only infinite cyclic vertex and edge groups. 
That means a GBS group is completely given by a finite graph $Y$ and numbers $\alp_\y ,\bet_\y \in \Z\setminus\oneset{0}$ for $\y \in E(Y)$ such that $\alp_\y=\bet_{\ov \y}$. For $a \in V(Y)$ we write $G_a=\gen{a}$. Then we have
$$ F(\cG) = \genr{V(Y),E(Y)}{\ov \y \y = 1,\, \y b^{\bet_\y}\ov\y = a^{\alp_\y} \text{ for } \y \in E(Y),\, a=\sor(\y),\, b = \tar(\y)} $$
and $G = \pi_1(\cG,a)\leq F(\cG)$ for any $a \in V(Y)$ as in \prettyref{def:fgogP}. (Note that $V(Y) \cup E(Y)$ generates $F(\cG)$ as a group, but in general, \emph{not} as a monoid.)

As we have seen in \prref{ex:bs}, Baumslag-Solitar groups \BS{p}q are the special case that $Y$ consists of one vertex and one loop $\y$ with $\alp_\y = p$, $\bet_\y=q$. 

\section{The Word Problem}\label{sec:wp}
In \cite{Robinson93phd}, Robinson showed that the word problem of non-cyclic free groups is $\NC^1$-hard. Hence, for non-solvable GBS groups, we cannot expect the word or conjugacy problem to be in $\TC$ since they contain a free group of rank two. For ordinary Baumslag-Solitar groups, the word problem has recently been shown to be in $\NC^2$ \cite{Kausch14bs}. In the author's dissertation \cite{Weiss15diss}, this is improved to \LOGDCFL~-- which means that it is \L-reducible to a deterministic context-free language. Here we aim for a \L algorithm~-- or, more precisely, for a \TC many-one reduction to the word problem of the free group $F_2$.

Let $G = \pi_1(\cG, a)$ be a fixed GBS group given by a graph $Y$ and numbers $\alp_\y ,\bet_\y \in \Z\setminus\oneset{0}$ for $\y \in E(Y)$ and $a\in V(Y)$. Our alphabet is $\Del = E(Y) \cup \set{a^k}{a \in V(Y),\, k \in \Z}$~-- for simplicity we allow $k=0$ and identify the letter $a^0$ with the empty word. We say that a word or $\cG$-factorization $w$ is represented in \emph{binary} if the numbers $k$ are written as binary integers (using a variable number of bits)~-- in the following we always assume this binary representation.

It turns out to be more convenient to work outside of $G$ and to consider arbitrary $\cG$-factorizations $w \in \Pi(\cG)$. Recall that a $\cG$-factorization of some group element is a word $$w = a_0^{k_0}\y_1a_1^{k_1}\cdots \y_na_n^{k_n}$$ with $a_i = \tar(\y_i) =  \sor(\y_{i+1})$ for $0 < i < n$, $a_n = \tar(\y_n) = a_0 = \sor(\y_{1})$, and $k_i \in \Z$. In the following, we always write $a_i$ as shorthand of $ \sor(\y_{i+1})$. 

\begin{lemma}\label{lem:horocycl}
	Let $w = a_0^{k_0}\y_1a_1^{k_1}\cdots \y_na_n^{k_n} \in \Pi(\cG)$. 
	If $w \in_{F(\cG)} \gen{a_0}$, then we have $w\eqF a_0^k$ for 
	$$k= \sum_{\nu = 0}^n k_\nu \cdot \prod_{\mu = 1}^\nu \frac{\alp_\mu}{\beta_\mu} $$
	where $\alp_\mu = \alp_{\y_\mu}$ and $\bet_\mu = \bet_{\y_\mu}$ for $1\leq \mu \leq n$.
\end{lemma}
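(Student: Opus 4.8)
The plan is to induct on the number $n$ of edges in the $\cG$-factorization $w = a_0^{k_0}y_1a_1^{k_1}\cdots y_na_n^{k_n}$, peeling off a Britton \emph{pinch} at each step and tracking the claimed rational weight. Throughout I would abbreviate $P_\nu = \prod_{\mu=1}^{\nu}\frac{\alpha_\mu}{\beta_\mu}$, so the asserted exponent is $k = \sum_{\nu=0}^{n}k_\nu P_\nu$. The single arithmetic fact that makes everything telescope is
\[
\frac{\alpha_{\ov y}}{\beta_{\ov y}} = \frac{\beta_y}{\alpha_y},
\]
which is immediate from the defining constraint $\alpha_y = \beta_{\ov y}$ of a GBS group. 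I would keep this identity in mind from the outset, since it is exactly what collapses a backtracking pair of edges to net scaling factor $1$.

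The base case $n = 0$ is trivial: then $w = a_0^{k_0}$ and $k = k_0P_0 = k_0$. For the step, assume $n \geq 1$. Since $w \in_{F(\cG)} \gen{a_0}$ still contains an edge, \prref{lem:britton} shows that $w$ cannot be Britton-reduced; and because two vertex-group letters are never adjacent in a $\cG$-factorization, the only rule type that can apply is an edge rule. Hence there is an index $i$ with $y_{i+1} = \ov{y_i}$ (a backtrack) and $\beta_i \mid k_i$, so that $y_ia_i^{k_i}\ov{y_i}$ is a pinch. Applying $y_ia_i^{k_i}\ov{y_i} \to a_{i-1}^{(\alpha_i/\beta_i)k_i}$ produces a vertex letter at $\sor(y_i)=a_{i-1}=\tar(y_{i+1})$, and an \emph{integer} power precisely because $\beta_i\mid k_i$. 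The three consecutive $a_{i-1}$-powers then merge, yielding a $\cG$-factorization $w'$ with $n-2$ edges whose middle exponent is $k_{i-1} + \frac{\alpha_i}{\beta_i}k_i + k_{i+1}$. As Britton reduction preserves the element of $F(\cG)$, we still have $w' \in_{F(\cG)} \gen{a_0}$, so the induction hypothesis gives $w' \eqF a_0^{k'}$ and therefore $w \eqF a_0^{k'}$; it remains only to verify $k' = k$.

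This verification is the one genuinely computational point, and it is where the telescoping identity is used. Writing $P'_\nu$ for the products attached to $w'$, one checks $P'_\nu = P_\nu$ for $\nu \leq i-1$ and $P'_\nu = P_{\nu+2}$ for $\nu \geq i$, the second using $\frac{\alpha_i}{\beta_i}\cdot\frac{\alpha_{i+1}}{\beta_{i+1}} = 1$; in particular $P_{i+1} = P_{i-1}$ and $\frac{\alpha_i}{\beta_i}P_{i-1} = P_i$. Substituting the merged exponent and reindexing the tail gives
\[
k' = \sum_{\nu=0}^{i-2} k_\nu P_\nu + \Big(k_{i-1} + \tfrac{\alpha_i}{\beta_i}k_i + k_{i+1}\Big)P_{i-1} + \sum_{\nu=i+2}^{n} k_\nu P_\nu = \sum_{\nu=0}^{n} k_\nu P_\nu = k,
\]
since the middle term is exactly $k_{i-1}P_{i-1} + k_iP_i + k_{i+1}P_{i+1}$.

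I expect the main obstacle to be the justification that a pinch always exists, rather than the index bookkeeping. One must argue carefully from \prref{lem:britton} and the rigid shape of a $\cG$-factorization that the unique applicable rule is an edge backtrack $y_{i+1}=\ov{y_i}$ with $\beta_i\mid k_i$, and that the reduct is again a genuine $\cG$-factorization lying in $\gen{a_0}$ so the hypothesis applies at the next level. The divisibility $\beta_i\mid k_i$ is also what guarantees that the a priori rational merged exponents are integers at every stage, consistent with $w$ truly representing a power of $a_0$.
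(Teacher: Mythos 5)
Your proof is correct, but it organizes the induction differently from the paper. The paper also inducts on the number of edges via Britton's Lemma, but it decomposes $w$ at the \emph{outermost} bracket: it matches the first edge $\y_1$ with its cancelling partner $\y_i=\ov\y_1$, writes $w=a_0^{k_0}\,\y_1 w'\y_i\,w''$ with $w'\in_{F(\cG)}\gen{a_1^{\bet_1}}$ and $w''\in_{F(\cG)}\gen{a_0}$, and applies the induction hypothesis to the two subwords $w'$ and $w''$ before recombining via $k=k_0+\frac{\alp_1}{\bet_1}k'+k''$. You instead peel off a single \emph{innermost} pinch $\y_i a_i^{k_i}\ov\y_i$ and recurse once on a word with two fewer edges, verifying that the weight $\sum_\nu k_\nu P_\nu$ is invariant under one Britton reduction step. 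Your route needs slightly less structural input: you only use that \emph{some} rule of $B_\cG$ applies (which follows from \prref{lem:britton} plus the observation that no group-element rule can fire inside a $\cG$-factorization), whereas the paper's decomposition additionally asserts that the interior $w'$ lands in the associated subgroup $\gen{a_1^{\bet_1}}$ and the tail $w''$ again lies in $\gen{a_0}$ -- a fact that itself rests on the normal form theory. What the paper's version buys is that its nested-bracket decomposition directly mirrors the recursive structure exploited later for the quantities $k_{i,j}$ and the coloring argument of Lemma~\ref{lem:color}; your single-step peeling makes the telescoping of the products $P_\nu$ (via $\frac{\alp_{\ov\y}}{\bet_{\ov\y}}=\frac{\bet_\y}{\alp_\y}$) more explicit and also makes visible that the intermediate merged exponents stay integral because $\bet_i\mid k_i$ at each reduction. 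Both arguments are complete; note only that in your step the case $n=1$ is vacuous (a one-edge $\cG$-factorization admits no rule, so by Britton's Lemma it cannot represent an element of $\gen{a_0}$), which your phrasing accommodates but does not spell out.
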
 
\begin{proof}
	If $w=a_0^{k_0}$, then the formula is obviously correct. Hence, let $n>0$. Then by \prettyref{lem:britton}, all the edges $\y_i$ can be cancelled by Britton reductions. In particular, we can find some $1<i\leq n$ such that $w= a_0^{k_0}\,\y_1 w' \y_i \,w''$ with $\y_1 = \ov \y_i$ and $w' = a_1^{k_1}\y_2\cdots a_{i-1}^{k_{i-1}} \in_{F(\cG)} \gen{a_1^{\bet_1}}$ and $w'' = a_i^{k_i}\y_{i+1} \cdots a_n^{k_n} \in_{F(\cG)} \gen{a_i} = \gen{a_0}$.
		
	By induction, we have $w'\eqF a_1^{k'}$ and $w''\eqF a_0^{k''}$ where 
	\begin{align*}
		k'&= \sum_{\nu = 1}^{i-1} k_\nu \cdot\prod_{\mu = 2}^\nu \frac{\alp_\mu}{\beta_\mu},
		& k''&= \sum_{\nu = i}^n k_\nu \cdot\!\! \prod_{\mu = i+1}^\nu \frac{\alp_\mu}{\beta_\mu}.
	\end{align*}
	Since $\y_1 w' \y_i \in_{F(\cG)} \gen{a_0}$, we have $\y_1 w' \y_i \eqF a_0^{\frac{\alp_1}{\bet_1}\cdot k'}$ and $\prod_{\mu = 1}^i \frac{\alp_\mu}{\beta_\mu} =1$. Hence,
	\begin{align*}
		k&= k_0 + \frac{\alp_1}{\beta_1} k' + k''
		=\sum_{\nu = 0}^n k_\nu \cdot\prod_{\mu = 1}^\nu \frac{\alp_\mu}{\beta_\mu}.
	\end{align*}
\end{proof}

For the rest of this section, we let $w = a_0^{k_0}\y_1a_1^{k_1}\cdots \y_na_n^{k_n}$ be a $\cG$-factorization given in binary. For $0\leq i \leq j \leq n$, we define 
\begin{align}
	\nonumber w_{i,j} &= a_i^{k_i}\y_{i+1}a_{i+1}^{k_{i+1}}\cdots \y_ja_j^{k_j}\\
	\qquad k_{i,j} &= \sum_{\nu = i }^j k_\nu \cdot\!\!\! \prod_{\mu = i + 1}^\nu\frac{\alp_\mu}{\beta_\mu} \qquad\in \Q\label{eq:kij}
\end{align}
analogously to $k$ in \prettyref{lem:horocycl} where again $\alp_\mu = \alp_{\y_\mu}$ and $\bet_\mu = \bet_{\y_\mu}$ for $1\leq \mu \leq n$. Note that we do not assume that $a_i^{k_i} \y_{i+1} a_{i+1}^{k_{i+1}} \cdots \y_j a_j^{k_j}$ lies in $\gen{a_i}$ -- yet the numbers $k_{i,j}$ will play an important role in what follows. In particular, with the notation of \prettyref{lem:horocycl}, we have $k = k_{0,n}$.
Moreover, by \prettyref{lem:horocycl}, we have

\begin{lemma}\label{lem:horocycl2} 	
	\hfil$w_{i,j} \in_{F(\cG)} \gen{a_i}$ if and only if $w_{i,j} \eqF a_i^{k_{i,j}}$.\hfill
\end{lemma}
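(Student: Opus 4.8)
The plan is to obtain \prref{lem:horocycl2} as an essentially immediate consequence of \prref{lem:horocycl}, so that almost all of the work is relabelling rather than fresh argument. I would first dispose of the reverse implication, which is trivial: if $w_{i,j} \eqF a_i^{k_{i,j}}$, then $a_i^{k_{i,j}} \in \gen{a_i}$ gives $w_{i,j} \in_{F(\cG)} \gen{a_i}$ at once. The content is therefore entirely in the forward direction.

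For the forward direction I would assume $w_{i,j} \in_{F(\cG)} \gen{a_i}$ and reduce to an application of \prref{lem:horocycl}. The one point requiring care is that \prref{lem:horocycl} takes as input a \emph{closed} $\cG$-factorization, i.e. a loop, whereas a priori $w_{i,j} = a_i^{k_i}\y_{i+1}\cdots\y_j a_j^{k_j}$ only lies in $\Pi(\cG,a_i,a_j)$, a path from $a_i = \sor(\y_{i+1})$ to $a_j = \tar(\y_j)$. I would argue that membership in $\gen{a_i}$ forces $a_j = a_i$: by \prref{lem:britton} an element of $G_{a_i} \supseteq \gen{a_i}$ has a Britton-reduced representative that is empty or a single letter of $G_{a_i}$, hence a loop at $a_i$, and since the rules of $\cB_\cG$ preserve the source and target vertices, this is only compatible with $\sor = \tar = a_i$, giving $a_j = a_i$. (The degenerate case $i=j$ is immediate, as $w_{i,i} = a_i^{k_i}$ and $k_{i,i}=k_i$.) Thus $w_{i,j} \in \Pi(\cG,a_i,a_i) \subseteq \Pi(\cG)$ and \prref{lem:horocycl}, applied to $w_{i,j}$ viewed as a loop based at $a_i$ with $j-i$ edges, yields $w_{i,j} \eqF a_i^{k'}$. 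Shifting the vertex index of the summation and the edge index of the product by $i$ turns the exponent given by \prref{lem:horocycl} into
\[
k' = \sum_{\nu=i}^{j} k_\nu \prod_{\mu=i+1}^{\nu} \frac{\alp_\mu}{\beta_\mu} = k_{i,j},
\]
exactly the quantity defined in \prref{eq:kij}, where as before $\alp_\mu = \alp_{\y_\mu}$ and $\beta_\mu = \bet_{\y_\mu}$.

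The only genuinely non-routine step is the observation that $w_{i,j} \in_{F(\cG)} \gen{a_i}$ already forces $w_{i,j}$ to be a loop, so that \prref{lem:horocycl} is legitimately applicable; I expect this to be the main~-- and quite minor~-- obstacle. Everything else is the mechanical index shift that rewrites the conclusion of \prref{lem:horocycl} in terms of $k_{i,j}$.
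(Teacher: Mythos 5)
Your proof is correct and follows exactly the route the paper intends: the paper states \prref{lem:horocycl2} as an immediate consequence of \prref{lem:horocycl} without further argument, and your write-up simply supplies the details (the trivial reverse direction, the index shift, and the observation that membership in $\gen{a_i}$ forces the underlying path to be closed so that \prref{lem:horocycl} applies). No discrepancy with the paper's approach.
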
 

\begin{lemma}\label{lem:kijtc}
	The numbers $k_{i,j}$ (as fractions of binary integers) can be computed by a uniform family of \Tc0 circuits~-- even if the numbers $\alp_\y $, $\bet_\y $ are part of input.
\end{lemma}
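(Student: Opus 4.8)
The plan is to clear denominators and thereby reduce the computation of $k_{i,j}$ to iterated multiplication and iterated addition, both of which are in \TC by \prettyref{thm:divisionTC}. Writing out the defining sum \prettyref{eq:kij}, the $\nu$-th summand $k_\nu \prod_{\mu=i+1}^\nu \frac{\alp_\mu}{\bet_\mu}$ is a fraction whose denominator divides the common denominator $D := \prod_{\mu=i+1}^j \bet_\mu$. Putting the whole sum over $D$ gives
\[ k_{i,j} = \frac{N_{i,j}}{D}, \qquad N_{i,j} = \sum_{\nu=i}^j k_\nu \cdot \Bigl(\prod_{\mu=i+1}^\nu \alp_\mu\Bigr)\cdot\Bigl(\prod_{\mu=\nu+1}^j \bet_\mu\Bigr), \]
where $N_{i,j}$ is now an \emph{integer}. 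The circuit will output the pair $(N_{i,j}, D)$; note that the lemma asks only for a fraction of binary integers, not a reduced one, so no gcd computation (which is not known to be in \TC) is required.

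It then remains to produce $N_{i,j}$ and $D$ in \TC. The denominator $D$ is a single iterated product of at most $n$ integers of polynomially many bits each, hence in \TC by \prettyref{thm:divisionTC}. For the numerator I would, in parallel over all $\nu$, compute the prefix products $P_\nu = \prod_{\mu=i+1}^\nu \alp_\mu$ and the suffix products $Q_\nu = \prod_{\mu=\nu+1}^j \bet_\mu$; each is again an iterated product and there are only $O(n)$ of them, so a single constant-depth, polynomial-size \TC subcircuit computes them all. Each summand $k_\nu P_\nu Q_\nu$ is a product of three polynomial-bit integers (ordinary multiplication being a special case of \proc{Iterated Multiplication}), and $N_{i,j}$ is the iterated sum of these $O(n)$ summands, again in \TC. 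Composing a constant number of such gadgets stays in \TC, and Dlogtime-uniformity is preserved because the circuits of \prettyref{thm:divisionTC} are themselves Dlogtime-uniform. Since there are only $O(n^2)$ pairs $(i,j)$, all the $k_{i,j}$ can be produced simultaneously by one \TC family.

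Two routine points must be checked. First, \textbf{signs}: the $\alp_\mu, \bet_\mu, k_\nu$ may be negative, but iterated multiplication handles this by taking the $\oplus$ of the sign bits and multiplying absolute values, and iterated addition of signed integers lies in \TC by splitting into positive and negative parts and performing a single subtraction. Second, \textbf{bit lengths}: a product of at most $n$ integers of at most $m$ bits has $O(nm)$ bits, so every intermediate integer has polynomially many bits in the input length, as the \TC arithmetic primitives require; this is exactly where the hypothesis that $\alp_\y,\bet_\y$ be given in binary is used, and it is also why the final fraction is of binary integers.

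The only conceptual obstacle is the very first step: one must resist adding the fractions $\prod \alp_\mu / \prod \bet_\mu$ one at a time (which would force repeated divisions and an uncontrolled, growing common denominator), and instead fix the single common denominator $D$ up front so that $N_{i,j}$ is an exact integer expressible purely through iterated multiplication and addition. Once this normal form is in place, the statement is an immediate consequence of Hesse's theorem \prettyref{thm:divisionTC}.
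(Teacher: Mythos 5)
Your proof is correct and follows the same route as the paper: the paper's own argument is a two-line appeal to \proc{Iterated Addition} and \proc{Iterated Multiplication} from \prettyref{thm:divisionTC} together with the remark that the fractions need not be reduced, which is precisely the strategy you carry out. Your write-up merely makes explicit the details (common denominator, prefix/suffix products, signs, bit lengths) that the paper leaves to the reader.
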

\begin{proof}
	\proc{Iterated Addition} and \proc{Iterated Multiplication} are in \TC, see \prettyref{thm:divisionTC}; hence, the rational numbers $k_{i,j}$ can be computed in \TC according to \prref{eq:kij}. 
	Be aware that we do not require that the fractions are reduced.
\end{proof}

Now, pick some orientation $D\sse E(Y)$ of the edges (for every pair $\y$, $\ov \y$ choose exactly one of them to be in $D$). Consider the canonical map $\rho: G \to \Z^{D}$ onto the abelianization of the subgroup generated by the edges, which is defined by $a\mapsto 0$ for $a \in V(Y)$ and $\y\mapsto e_\y $, $\ov \y\mapsto -e_\y $ for $\y \in D$ (where $e_\y $ is the unit vector having $1$ at position $\y$ and $0$ otherwise). With other words $\rho$ counts the exponents of the edges. 
Consider the following observations:
\begin{itemize}
	\item If $w\eqF 1$, then every edge $\y$ in $w$ can be canceled with some $\ov \y$ by Britton reductions.
	\item Consider a factor $\y v\ov \y$ for some word $v$. If $\y$ cancels with $\ov \y$, then necessarily we have $\rho(v) = 0$, \ie all edges occurring in between have exponent sum zero.
\end{itemize}

Now, the idea is to introduce colors and assign them to the letters $\y_i $ such that $\y_i$ and $\y_j$ get the same color only if they potentially might cancel.		
In order to do so, we start by defining a relation $ {\sim_{\scriptscriptstyle\!\cC}} \subseteq\oneset{1,\dots, n} \times \oneset{1,\dots, n}$ and set
\begin{align*}
	i \sim_{\scriptscriptstyle\!\cC} j \text{ if and only if } \y_i = \ov \y_j \text{ and } &
	\begin{cases}
		\rho(w_{i,j-1}) = 0\text { and } k_{i,j-1} \in \bet_i\Z&\text{if } i < j,\\
		\rho(w_{j,i-1}) = 0 \text { and } k_{j,i-1} \in \bet_j\Z&\text{if } j < i.
	\end{cases} 
\end{align*}
Thus, $\sim_{\scriptscriptstyle\!\cC}$ is symmetric and we have $i \not\sim_{\scriptscriptstyle\!\cC}i$ for all $i$.
Informally speaking, we have $i\sim_{\scriptscriptstyle\!\cC} j$ if and only if everything in between vanishes in the abelian quotient $ \Z^{D}$ and $\y_i $ and $ \y_j$ cancel given that everything in between cancels to something in $\gen{a_i}$ (the latter is a consequence of \prref{lem:horocycl2}).

\begin{lemma}\label{lem:colorwelldef}		
	If
	$i\sim_{\scriptscriptstyle\!\cC} \ell$, $\ell \sim_{\scriptscriptstyle\!\cC} m$, and $m \sim_{\scriptscriptstyle\!\cC} j$, then also $ i\sim_{\scriptscriptstyle\!\cC} j$. 
\end{lemma}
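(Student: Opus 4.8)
The plan is to check the two defining clauses of $i\sim_{\scriptscriptstyle\!\cC} j$ in turn. First I extract the combinatorial data. From $\y_i=\ov{\y_\ell}$, $\y_\ell=\ov{\y_m}$, $\y_m=\ov{\y_j}$ and $\ov{\ov{\y}}=\y$ I get $\y_i=\y_m$ and $\y_\ell=\y_j=\ov{\y_i}$; in particular $\y_i=\ov{\y_j}$, which is the edge clause of $i\sim_{\scriptscriptstyle\!\cC} j$. Since $\sim_{\scriptscriptstyle\!\cC}$ is symmetric I may reverse the chain and assume $i<j$; the coincidences $i=m$ and $\ell=j$ make the claim immediate, and $i=j$ will be ruled out below. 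I abbreviate $\alp_\nu=\alp_{\y_\nu}$, $\bet_\nu=\bet_{\y_\nu}$ and set $P_{p,q}=\prod_{\mu=p+1}^{q}\frac{\alp_\mu}{\bet_\mu}$, so that $k_{p,q}=\sum_{\nu=p}^{q}k_\nu P_{p,\nu}$ and $P_{p,r}=P_{p,q}P_{q,r}$ for $p\le q\le r$.

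The key that tames the fractions occurring in the $k_{p,q}$ is the following \emph{collapse}: if $\rho(w_{p,q})=0$, then $P_{p,q}=1$. Indeed $\rho(w_{p,q})=0$ says that among the edges $\y_{p+1},\dots,\y_q$ each oriented edge $z$ occurs exactly as often as $\ov z$, and one pair $z,\ov z$ contributes $\frac{\alp_z}{\bet_z}\cdot\frac{\alp_{\ov z}}{\bet_{\ov z}}=\frac{\alp_z}{\bet_z}\cdot\frac{\bet_z}{\alp_z}=1$, using $\alp_{\ov z}=\bet_z$ and $\bet_{\ov z}=\alp_z$. I will also repeatedly combine the three hypotheses through additivity of $\rho$ (vertex letters map to $0$, so $\rho(w_{p,r})=\rho(w_{p,q})+\rho(w_{q,r})$) to produce further balanced ranges, and hence further collapses.

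For the $\rho$-clause I argue uniformly in the order of $i,\ell,m,j$. For an index $\nu$ put $\mathrm{in}(\nu)=\rho(w_{0,\nu-1})$ and $\mathrm{out}(\nu)=\rho(w_{0,\nu})$, so that $\mathrm{out}(\nu)-\mathrm{in}(\nu)=\rho(\y_\nu)\neq0$. A matched pair $p\sim_{\scriptscriptstyle\!\cC} q$ translates into the symmetric swap $\mathrm{in}(p)=\mathrm{out}(q)$ and $\mathrm{out}(p)=\mathrm{in}(q)$ (the first from $\rho=0$ on the edges strictly between the two, the second from that together with $\y_p=\ov{\y_q}$). Composing the three swaps along $i,\ell,m,j$ then yields $\mathrm{in}(i)=\mathrm{out}(j)$ and $\mathrm{out}(i)=\mathrm{in}(j)$, whence $\rho(w_{i,j-1})=\mathrm{in}(j)-\mathrm{out}(i)=0$ for $i<j$. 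The same identities would give $\mathrm{out}(i)=\mathrm{in}(i)$ if $i=j$, contradicting $\rho(\y_i)\neq0$, so indeed $i\neq j$.

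What remains, and what I expect to be the hard part, is the divisibility $k_{i,j-1}\in\bet_i\Z$, since the way $k_{i,j-1}$ splits into pieces controlled by the hypotheses depends on where $\ell,m$ lie relative to $[i,j-1]$; in the crossing configurations (such as $i<m<\ell<j$) the pieces of the natural decomposition of $k_{i,j-1}$ do not coincide with the intervals occurring in the hypotheses. The plan is, in each configuration, to split $k_{i,j-1}$ at the interior breakpoints via $P_{p,r}=P_{p,q}P_{q,r}$, to collapse every product over a balanced range to $1$ (by the collapse above together with the balanced ranges derived from additivity of $\rho$), and to observe that the only surviving fractional factors are single steps $\frac{\alp_\ell}{\bet_\ell}=\frac{\bet_i}{\alp_i}$ or their inverses. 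Each such factor multiplies a $k$-piece whose divisibility is exactly complementary (via $\bet_\ell=\alp_i$ and $\bet_m=\bet_i$), so the product lands in $\bet_i\Z$, while the remaining pieces already do. Where the decomposition's intervals differ from the hypotheses', I relate them by the same collapse-plus-additivity bookkeeping; a representative identity in the configuration $i<m<\ell<j$ is $\frac{\bet_i}{\alp_i}k_{\ell,j-1}=k_{m,j-1}-k_{m,\ell-1}\in\bet_i\Z$. Assembling the three contributions gives $k_{i,j-1}\in\bet_i\Z$, completing $i\sim_{\scriptscriptstyle\!\cC} j$.
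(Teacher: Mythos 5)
Your argument is correct and follows essentially the same route as the paper: both establish the edge condition from fixed-point-freeness of the involution, reduce the $\rho$-clause to additivity of $\rho$ (your in/out bookkeeping is a pleasantly uniform packaging of the paper's three-case computation), and reduce the divisibility clause to a linear identity among the four quantities $k_{\cdot,\cdot}$ in which every product over a balanced range collapses to $1$ and the single surviving factor $\bet_i/\alp_i$ meets a complementarily divisible piece. The only caveat is that you verify the divisibility identity in just one of the three bracket configurations, whereas the paper writes out all three; your representative identity $\frac{\bet_i}{\alp_i}k_{\ell,j-1}=k_{m,j-1}-k_{m,\ell-1}$ is correct, and the same mechanism does go through in the remaining cases.
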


\begin{proof}
\newcommand{\aA}{{\lambda_1}}
\newcommand{\bb}{{\lambda_2}}
\newcommand{\cc}{{\lambda_3}}
\newcommand{\dd}{{\lambda_4}}
If two of the indices $i,j,\ell,m$ coincide, what can be the case only if $i= m$ or $\ell = j$, we are done. 
Otherwise, we have to show that $\y_i = \ov \y_j$, $\rho(w_{i,j-1})=0$, and $k_{i,j-1} \in \beta_i\Z$ (resp. $\rho(w_{j,i-1})=0$ and $k_{j,i-1} \in \beta_j\Z$ for $j<i$).
We have $\y_i = \ov \y_\ell = \y_m = \ov \y_j$.

In order to see the other two conditions, we put the indices $i,j,\ell,m$ in ascending order. That means we fix $\aA < \bb < \cc < \dd$ such that $\oneset{\aA,\bb,\cc,\dd} = \oneset{ i,j,\ell,m}$. There are three situations to consider, as depicted in \prettyref{fig:threebracket}:
\begin{enumerate}
	\item $\y_\aA = \y_\bb$ and $\y_\cc = \y_\dd = \ov \y_\aA$, \label{uudd}
	\item $\y_\aA = \y_\cc$ and $\y_\bb = \y_\dd = \ov \y_\aA$, \label{udud}
	\item $\y_\aA = \y_\dd$ and $\y_\bb = \y_\cc = \ov \y_\aA$.\label{uddu}
\end{enumerate}
\begin{figure}[bht]
	\begin{center}
		\begin{tikzpicture}[scale=.6]
		\newcommand{\labeloffset}{-1.2}
		\draw[line width=.8pt,cap=round,shorten >=2pt,shorten <=2pt]
		(0,0) -- ++(0.6,1);	
		
		\draw[line width=.8pt,cap=round,shorten >=2pt,shorten <=2pt]
		(2.4,0) -- ++(0.6,1);
		
		\draw[line width=.8pt,cap=round,shorten >=2pt,shorten <=2pt]
		(5,0) -- ++(-0.6,1);
		
		\draw[line width=.8pt,cap=round,shorten >=2pt,shorten <=2pt]
		(7.4,0) -- ++(-0.6,1);
		
		\draw (.3,\labeloffset) node[above] {$\aA$};
		\draw (2.7,\labeloffset) node[above] {$\bb$};
		\draw (4.7,\labeloffset) node[above] {$\cc$};
		\draw (7.1,\labeloffset) node[above] {$\dd$};
		
		\draw [] (0.65,1.05) to [ncbar=3mm] (4.45,1.05); 
		\draw [] (3.1,1.05) to [ncbar=1.5mm] (4.3,1.05); 
		\draw [] (2.95,1.05) to [ncbar=4.5mm] (6.75,1.05); 
		\draw [] (0.5,1.05) to [ncbar=6mm] (6.9,1.05); 
		\end{tikzpicture}
	\end{center}
	\begin{center}
		\begin{tikzpicture}[scale=.6]
		\newcommand{\labeloffset}{-1.2}
		\draw[line width=.8pt,cap=round,shorten >=2pt,shorten <=2pt]
		(0,0) -- ++(0.6,1);	
		
		\draw[line width=.8pt,cap=round,shorten >=2pt,shorten <=2pt]
		(3,0) -- ++(-0.6,1);
		
		\draw[line width=.8pt,cap=round,shorten >=2pt,shorten <=2pt]
		(4.4,0) -- ++(0.6,1);	
		
		\draw[line width=.8pt,cap=round,shorten >=2pt,shorten <=2pt]
		(7.4,0) -- ++(-0.6,1);
		
		\draw (.3,\labeloffset) node[above] {$\aA$};
		\draw (2.7,\labeloffset) node[above] {$\bb$};
		\draw (4.7,\labeloffset) node[above] {$\cc$};
		\draw (7.1,\labeloffset) node[above] {$\dd$};
		
		\draw [] (0.7,1.05) to [ncbar=1.5mm] (2.4,1.05); 
		\draw [] (3,0) to [ncbar=-1.5mm] (4.4,0); 
		\draw [] (5,1.05) to [ncbar=1.5mm] (6.7,1.05); 
		\draw [] (0.55,1.05) to [ncbar=3.5mm] (6.85,1.05); 
		\end{tikzpicture}
	\end{center}
	
	\begin{center}
		\begin{tikzpicture}[scale=.6]
		\newcommand{\labeloffset}{-1.2}
		\draw[line width=.8pt,cap=round,shorten >=2pt,shorten <=2pt]
		(0,0) -- ++(0.6,1);	
		
		\draw[line width=.8pt,cap=round,shorten >=2pt,shorten <=2pt]
		(3,0) -- ++(-0.6,1);	
		
		\draw[line width=.8pt,cap=round,shorten >=2pt,shorten <=2pt]
		(5,0) -- ++(-0.6,1);
		
		\draw[line width=.8pt,cap=round,shorten >=2pt,shorten <=2pt]
		(7,0) -- ++(0.6,1);
		
		\draw (.3,\labeloffset) node[above] {$\aA$};
		\draw (2.7,\labeloffset) node[above] {$\bb$};
		\draw (4.7,\labeloffset) node[above] {$\cc$};
		\draw (7.1,\labeloffset) node[above] {$\dd$};
		
		\draw [] (0.7,1.05) to [ncbar=1.5mm] (2.4,1.05); 
		\draw [] (5,0) to [ncbar=-1.5mm] (6.95,0); 
		\draw [] (3,0) to [ncbar=-3.5mm] (7.1,0); 
		\draw [] (0.55,1.05) to [ncbar=3.5mm] (4.45,1.05); 
		\end{tikzpicture}
	\end{center}
	\caption{Three different situations. The four pairings of each situation are depicted as brackets.}\label{fig:threebracket}
\end{figure}
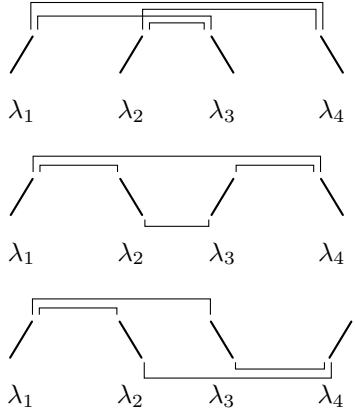
	
All these cases have in common that there are exactly four pairings $\oneset{\lambda_r, \lambda_s}$ with $\y_{\lambda_r} = \ov \y_{\lambda_s}$, and these four pairings correspond to the four pairings $\oneset{i, \ell}$, $\oneset{\ell, m}$, $\oneset{m, j}$,  and $\oneset{i, j}$. In each case, the conditions $\rho(w_{\lambda_r, \lambda_s-1})=0$ and $k_{\lambda_r, \lambda_s-1} \in \beta_{\lambda_r}\Z$ hold for three of the $\oneset{\lambda_r, \lambda_s}$, and we have to show it for the fourth.
				
In case \ref{uudd}, we have			
\begin{align*}
	\rho(w_{\aA,\dd-1}) = \rho(w_{\aA,\cc-1}) + \rho(w_{\bb,\dd-1}) - \rho(w_{\bb,\cc-1}).
\end{align*}
Thus, since three of these vectors are zero, so is the fourth (\ie we have shown that $\rho(w_{i,j-1})=0$ resp. $\rho(w_{j,i-1})=0$). In particular, we have $\rho(w_{\aA,\bb}) =  \rho(w_{\aA,\dd-1}) - \rho(w_{\bb,\dd-1}) = 0$. Hence,  $$\prod_{\mu = \aA + 1}^\bb\frac{\alp_\mu}{\bet_\mu} = \prod_{\y \in D} \left(\frac{\alp_\y}{\bet_\y}\right)^{\rho(w_{\aA,\bb})_\y} = 1,$$ where $\rho(w_{\aA,\bb})_\y$ denotes the component of the vector belonging to $\y$ (recall $D\sse E(Y)$ is the orientation)~-- the first equality is because $\frac{\alp_{\ov\y}}{\bet_{\ov\y}} = \left(\frac{\alp_\y}{\bet_\y}\right)^{-1}$ and $\rho(w_{\aA,\bb})_\y$ simply counts the number of occurrences of $\y$ (positive) and $\ov y$ (negative) in $w_{\aA,\bb}$. It follows that
\begin{align*}
	k_{\aA,\dd-1} &=\sum_{\nu = \aA }^{\dd -1} k_\nu \cdot\!\!\! \prod_{\mu = \aA + 1}^\nu\frac{\alp_\mu}{\bet_\mu}\\
	&= \sum_{\nu = \aA }^{\cc -1} k_\nu \cdot\!\!\! \prod_{\mu = \aA + 1}^\nu \frac{\alp_\mu}{\bet_\mu} \;+\, 	
	\sum_{\nu = \bb }^{\dd -1} k_\nu \cdot\!\!\! \prod_{\mu = \aA + 1}^\nu \frac{\alp_\mu}{\bet_\mu} \;-\, 
	\sum_{\nu = \bb }^{\cc -1} k_\nu \cdot\!\!\! \prod_{\mu = \aA + 1}^\nu\frac{\alp_\mu}{\bet_\mu}\\
	&= \sum_{\nu = \aA }^{\cc -1} k_\nu \cdot\!\!\! \prod_{\mu = \aA + 1}^\nu \frac{\alp_\mu}{\bet_\mu} \;+\, 	
	\sum_{\nu = \bb }^{\dd -1} k_\nu \cdot\!\!\! \prod_{\mu = \bb + 1}^\nu \frac{\alp_\mu}{\bet_\mu} \;-\, 
	\sum_{\nu = \bb }^{\cc -1} k_\nu \cdot\!\!\! \prod_{\mu = \bb + 1}^\nu\frac{\alp_\mu}{\bet_\mu}\\
	&= k_{\aA, \cc-1 } + k_{\bb,\dd-1} - k_{\bb, \cc-1}.\vphantom{k^k}
\end{align*}
Hence, since three of them are in $\bet_\aA\Z = \bet_\bb \Z$, so is the fourth.

The other cases follow with the same arguments: in case \ref{udud} we have		
\begin{align*}
	\rho(w_{\aA,\dd-1}) &= \rho(w_{\aA,\bb-1}) + \rho(\y_\bb) + \rho(w_{\bb,\cc-1})+ \rho(\y_\cc) + \rho(w_{\cc,\dd-1})\\
	&= \rho(w_{\aA,\bb-1}) + \rho(w_{\bb,\cc-1})+ \rho(w_{\cc,\dd-1})
\end{align*}
because $\y_\bb = \ov \y_\cc$, what again implies that all of them are zero. Like in the first case, we have $\prod_{\mu = \aA + 1}^\bb\frac{\alp_\mu}{\bet_\mu} = \frac{\alp_\bb}{\bet_\bb}$ (because $\rho(w_{\aA,\bb-1}) = 0$) and $\prod_{\mu = \aA + 1}^\cc\frac{\alp_\mu}{\bet_\mu} =1$ (because $\rho(w_{\aA,\cc}) = \rho(w_{\aA, \dd-1}) - \rho(w_{\cc, \dd - 1}) = 0$). It follows that
\begin{align*}
	k_{\aA,\dd-1} &= \sum_{\nu = \aA }^{\dd -1} k_\nu \cdot\!\! \prod_{\mu = \aA + 1}^\nu\frac{\alp_\mu}{\bet_\mu} \\
	&= \sum_{\nu = \aA }^{\bb -1} k_\nu \cdot\!\!\! \prod_{\mu = \aA + 1}^\nu\frac{\alp_\mu}{\bet_\mu} \;+\, 
	\frac{\alp_\bb}{\bet_\bb} \cdot \sum_{\nu = \bb }^{\cc -1} k_\nu \cdot\!\!\! \prod_{\mu = \bb+1}^\nu \frac{\alp_\mu}{\bet_\mu} \;+\, 
	\sum_{\nu = \cc }^{\dd -1} k_\nu \cdot\!\!\! \prod_{\mu = \cc + 1}^\nu \frac{\alp_\mu}{\bet_\mu}\\
	& =k_{\aA, \bb-1 } + \frac{\alp_\bb}{\bet_\bb} \cdot k_{\bb,\cc-1} + k_{\cc, \dd-1}.
\end{align*}
Since $\y_\aA = \y_\cc = \ov \y_\bb$, we have $\alp_\bb = \bet_\aA$ and $\bet_\cc = \bet_\aA$. That means we have $\frac{\alp_\bb}{\bet_\bb} \cdot k_{\bb,\cc-1} \in \bet_\aA \Z$ if and only if $k_{\bb,\cc-1} \in \bet_\bb \Z$, and $k_{\cc, \dd-1}	\in \bet_\aA \Z$ if and only if $k_{\cc, \dd-1} \in \bet_\cc \Z$. Thus, since for three of the $k_{\lam,\lam'}$ we have $k_{\lam,\lam'} \in \bet_\lam \Z$, this is true also for the fourth. 

Finally, in case \ref{uddu}, because of $\y_\bb = \y_\cc$, we have			
\begin{align*}
	\rho(w_{\aA,\cc-1}) - \rho(w_{\aA,\bb-1}) &= \rho(w_{\bb - 1,\cc-1}) \\ 
	&= \rho(w_{\bb ,\cc}) \qquad \\
	&= \rho(w_{\bb,\dd-1}) - \rho(w_{\cc,\dd-1}).
\end{align*} Therefore, they are all $0$. As before, $\rho(w_{\aA,\bb-1}) = 0$ implies that $\prod_{\mu = \aA + 1}^{\bb - 1} \frac{\alp_\mu}{\bet_\mu} = 1$ and $ \rho(w_{\bb ,\cc}) = 0$ implies that $\prod_{\mu = \bb + 1}^{\cc} \frac{\alp_\mu}{\bet_\mu} = 1$. Thus, we have
\begin{align*}
	k_{\aA,\cc-1} - k_{\aA, \bb-1 } 
	&= \sum_{\nu = \aA }^{\cc -1} k_\nu \cdot\!\!\! \prod_{\mu = \aA + 1}^\nu\frac{\alp_\mu}{\bet_\mu} \;-\,  \sum_{\nu = \aA }^{\bb -1} k_\nu \cdot\!\!\! \prod_{\mu = \aA + 1}^\nu\frac{\alp_\mu}{\bet_\mu} \\
	&= \frac{\alp_\bb}{\bet_\bb} \cdot\sum_{\nu = \bb }^{\cc -1} k_\nu \cdot\!\!\! \prod_{\mu = \bb+1}^\nu \frac{\alp_\mu}{\bet_\mu} \\
	&= \frac{\alp_\bb}{\bet_\bb} \cdot \left(\sum_{\nu = \bb }^{\dd -1} k_\nu \cdot\!\!\! \prod_{\mu = \bb+1}^\nu \frac{\alp_\mu}{\bet_\mu} \;-\,  \sum_{\nu = \cc }^{\dd -1} k_\nu \cdot\!\!\! \prod_{\mu = \cc+1}^\nu \frac{\alp_\mu}{\bet_\mu} \right)\\
	&= \frac{\alp_\bb}{\bet_\bb}\cdot\left( k_{\bb,\dd-1} - k_{\cc, \dd-1}\right)
\end{align*}
with $\alp_\bb = \bet_\aA$ and $\bet_\cc = \bet_\bb$.
So, again since for three of the $k_{\lam,\lam'}$ we have $k_{\lam,\lam'} \in \bet_\lam \Z$, this is true also for the fourth. 
\end{proof} 
Now, we define a new relation
${\approx} \subseteq\oneset{1,\dots, n} \times \oneset{1,\dots, n}$ as $i\approx j$ if and only if there is some $\ell $ with $i\sim_{\scriptscriptstyle\!\cC} \ell$ and $\ell \sim_{\scriptscriptstyle\!\cC} j$. Moreover, we set $i \approx i$ for all $i$.

\begin{lemma}\label{lem:coloraq}
	$\approx$ is an equivalence relation.
\end{lemma}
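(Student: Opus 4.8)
The plan is to verify the three defining properties of an equivalence relation separately, with essentially all of the work concentrated in transitivity, where \prref{lem:colorwelldef} does the heavy lifting. By construction $\approx$ is simply the union of the diagonal with the relational composition of $\sim_{\scriptscriptstyle\!\cC}$ with itself, so I expect reflexivity and symmetry to be immediate and only transitivity to require a real argument.

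\emph{Reflexivity} holds by definition, since we explicitly set $i \approx i$ for all $i$. For \emph{symmetry}, I would invoke that $\sim_{\scriptscriptstyle\!\cC}$ is symmetric (as observed right after its definition): if $i \approx j$ with $i \neq j$, pick a witness $\ell$ with $i \sim_{\scriptscriptstyle\!\cC} \ell$ and $\ell \sim_{\scriptscriptstyle\!\cC} j$; then $j \sim_{\scriptscriptstyle\!\cC} \ell$ and $\ell \sim_{\scriptscriptstyle\!\cC} i$, so the same $\ell$ witnesses $j \approx i$. The diagonal case $i = j$ is symmetric trivially.

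The core is \emph{transitivity}: assume $i \approx j$ and $j \approx m$ and aim for $i \approx m$. If $i = j$ or $j = m$ the conclusion is immediate from the respective hypothesis, so I would reduce to $i \neq j$ and $j \neq m$ and fix witnesses $\ell, p$ with $i \sim_{\scriptscriptstyle\!\cC} \ell$, $\ell \sim_{\scriptscriptstyle\!\cC} j$ and $j \sim_{\scriptscriptstyle\!\cC} p$, $p \sim_{\scriptscriptstyle\!\cC} m$. Now the three relations $i \sim_{\scriptscriptstyle\!\cC} \ell$, $\ell \sim_{\scriptscriptstyle\!\cC} j$, $j \sim_{\scriptscriptstyle\!\cC} p$ are exactly the hypotheses of \prref{lem:colorwelldef} (with its indices $i,\ell,m,j$ instantiated as $i,\ell,j,p$), which therefore yields $i \sim_{\scriptscriptstyle\!\cC} p$. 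Combining this with $p \sim_{\scriptscriptstyle\!\cC} m$ exhibits $p$ as a witness for $i \approx m$, as desired.

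The one point deserving a moment's care — and the only place I anticipate any subtlety, since the genuine combinatorial content has already been packaged into \prref{lem:colorwelldef} — is that the pair handed back by the lemma is a bona fide $\sim_{\scriptscriptstyle\!\cC}$-relation and not a degenerate self-pair. Here this is automatic: the chain forces $\y_i = \ov{\y_\ell} = \y_j = \ov{\y_p}$, so in particular $\y_i = \ov{\y_p}$ and hence $i \neq p$, in harmony with the irreflexivity $i \not\sim_{\scriptscriptstyle\!\cC} i$. Thus no hidden coincidence of indices can disrupt the argument, and transitivity follows.
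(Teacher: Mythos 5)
Your proof is correct and follows the same route as the paper: reflexivity is by definition, symmetry comes from the symmetry of $\sim_{\scriptscriptstyle\!\cC}$, and transitivity is exactly the application of \prref{lem:colorwelldef} to the chain $i\sim_{\scriptscriptstyle\!\cC}\ell$, $\ell\sim_{\scriptscriptstyle\!\cC}j$, $j\sim_{\scriptscriptstyle\!\cC}p$ to produce the witness $p$ for $i\approx m$. The paper's proof is a three-line summary of precisely this argument, so your write-up just supplies the (correct) details.
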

\begin{proof}
By definition, $\approx$ is reflexive. Because $\sim_{\scriptscriptstyle\!\cC}$ is symmetric, $\approx$ is also symmetric. Transitivity follows from \prettyref{lem:colorwelldef}.
\end{proof}

Denote by $\Sig_w = \set{[i]}{i \in \oneset{1,\dots,n}}$ the set of equivalence classes of $\approx$. 
For $[i] \in \Sig_w$ define $\ov{[i]} = [j]$ if $i	\sim_{\scriptscriptstyle\!\cC} j$~-- if no such $j$ exists, we add a new element $\ov{[i]}$ to $\Sig_w$. From the definition of $\approx$ it follows that $\ov{\,\cdot\,}$ is well-defined. Moreover, we have $\ov{\ov{[i]}} = [i]$ and $\ov{[i]} \neq [i]$ for all $[i] \in \Sig_w$. In particular, $\Sig_w$ is an alphabet with fixed-point-free involution. We can think of each class $[i] \cup \ov{[i]}$ as a color assigned to the edges $y_i$. From the definition of $\sim_{\scriptscriptstyle\!\cC}$ and \prref{lem:horocycl2} it is clear that only edges with the same color can cancel.
Let $\Lambda_w\sse \Sig_w$ such that $\Sig_w = \Lambda_w \cup \ov \Lambda_w$ as a disjoint union, \ie for every pair $[i],\ov{[i]}$ exactly one of them is in $\Lambda_w$. Then we have $\Sig_w^*/\set{[i]\ov{[i]} = \ov{[i]}[i]=1 }{i \in \oneset{1,\dots,n}} = F_{\Lambda_w}$.
Now, we define
\newcommand{\Col}[1]{\cC(#1)}	
\begin{align*}
	\Col{w} &= [1]\cdots [n], & \Col{w_{i,j}} &= [i+1]\cdots [j].
\end{align*}

\begin{lemma}\label{lem:color}
	\hfil$w_{i,j} \in_{F(\cG)} \gen{a_i} \text{ if and only if } \Col{ w_{i,j}} =_{F(\Lambda_w)} 1$.$\qquad$ \hfill
\end{lemma}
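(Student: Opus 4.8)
The plan is to prove both implications at once by induction on the number of edges $m=j-i$ occurring in $w_{i,j}$. The base case $m=0$ is immediate: then $w_{i,j}=a_i^{k_i}\in_{F(\cG)}\gen{a_i}$ and $\Col{w_{i,j}}$ is the empty word, so both sides of the equivalence hold. For the inductive step the idea is, in each direction, to locate the cancellation partner $s$ of the \emph{leftmost} edge $\y_{i+1}$ and to split $w_{i,j}$ into the two strictly shorter \emph{contiguous} sub-factorizations $w_{i+1,s-1}$ and $w_{s,j}$, to which the induction hypothesis applies with the same globally fixed colouring of $w$. The bridge between the group side and the colour side is the translation, via \prettyref{lem:colorwelldef} and the definition of the involution on $\Sig_w$, between the colour identity $[i+1]=\ov{[s]}$ and the relation $(i+1)\sim_{\scriptscriptstyle\!\cC} s$; note that $(i+1)\sim_{\scriptscriptstyle\!\cC} s$ forces $\y_s=\ov\y_{i+1}$, hence $a_s=\tar(\y_s)=\sor(\y_{i+1})=a_i$.

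For the direction ``$\Col{w_{i,j}}=_{F(\Lambda_w)}1 \Rightarrow w_{i,j}\in_{F(\cG)}\gen{a_i}$'' I would read off $s$ from the first--return (Dyck) decomposition of the colour word. If the nonempty word $[i+1]\cdots[j]$ equals $1$ in $F_{\Lambda_w}$, let $s$ be minimal with $[i+1]\cdots[s]=_{F(\Lambda_w)}1$. A standard fact about free reduction (the first letter of a nonempty reduced prefix is never cancelled) yields $[i+1]=\ov{[s]}$, $\Col{w_{i+1,s-1}}=[i+2]\cdots[s-1]=_{F(\Lambda_w)}1$ and $\Col{w_{s,j}}=[s+1]\cdots[j]=_{F(\Lambda_w)}1$. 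From $[i+1]=\ov{[s]}$ I deduce $(i+1)\sim_{\scriptscriptstyle\!\cC} s$ through the definition of $\ov{\,\cdot\,}$ and the three-step transitivity of \prettyref{lem:colorwelldef}; this gives $\y_{i+1}=\ov\y_s$ and $k_{i+1,s-1}\in\bet_{i+1}\Z$. By the induction hypothesis $w_{i+1,s-1}\in_{F(\cG)}\gen{a_{i+1}}$ and $w_{s,j}\in_{F(\cG)}\gen{a_s}=\gen{a_i}$, so \prettyref{lem:horocycl2} gives $w_{i+1,s-1}\eqF a_{i+1}^{k_{i+1,s-1}}$. Since $k_{i+1,s-1}\in\bet_{i+1}\Z$, the factor $\y_{i+1}a_{i+1}^{k_{i+1,s-1}}\ov\y_{i+1}$ Britton-reduces to a power of $a_i$; using the factorization $w_{i,j}=a_i^{k_i}\y_{i+1}\,w_{i+1,s-1}\,\y_s\,w_{s,j}$ one concludes that $w_{i,j}$ reduces to a power of $a_i$ times $w_{s,j}$, whence $w_{i,j}\in_{F(\cG)}\gen{a_i}$.

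For the converse ``$w_{i,j}\in_{F(\cG)}\gen{a_i}\Rightarrow\Col{w_{i,j}}=_{F(\Lambda_w)}1$'' I would fix a Britton reduction of $w_{i,j}$; by \prettyref{lem:britton} every edge is cancelled, and I let $s$ be the edge with which $\y_{i+1}$ cancels. The crucial point is the non-crossing (nesting) property of cancellations in HNN-type reductions: since $\y_{i+1}$ and $\y_s$ remain in place until they cancel, they act as barriers, so every edge strictly between them must cancel among themselves. Hence $w_{i+1,s-1}$ reduces to a vertex element and $w_{i+1,s-1}\in_{F(\cG)}\gen{a_{i+1}}$ by \prettyref{lem:britton}, while $w_{s,j}\in_{F(\cG)}\gen{a_s}$. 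From \prettyref{lem:horocycl2} I then get $\rho(w_{i+1,s-1})=0$ and, as the cancellation of $\y_{i+1}a_{i+1}^{k_{i+1,s-1}}\ov\y_{i+1}$ requires it, $k_{i+1,s-1}\in\bet_{i+1}\Z$; thus $(i+1)\sim_{\scriptscriptstyle\!\cC} s$ and $[s]=\ov{[i+1]}$. Applying the induction hypothesis to the two shorter pieces gives $\Col{w_{i,j}}=[i+1]\,\Col{w_{i+1,s-1}}\,[s]\,\Col{w_{s,j}}=_{F(\Lambda_w)}[i+1]\,\ov{[i+1]}=1$.

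I expect the main obstacle to lie in the converse direction, namely in making the ``partner of the leftmost edge'' argument precise: one must justify that the cancellations in a Britton reduction are non-crossing, so that the block of edges strictly between $\y_{i+1}$ and its partner $\y_s$ closes up on its own and $w_{i+1,s-1}$ indeed lands in $\gen{a_{i+1}}$. This is implicit in the proof of \prettyref{lem:britton}, but has to be spelled out (for instance by arguing that no inner edge can cancel across the still-present barriers $\y_{i+1}$ and $\y_s$). A secondary, purely bookkeeping, difficulty is the careful passage between $[i+1]=\ov{[s]}$ and $(i+1)\sim_{\scriptscriptstyle\!\cC} s$ via the definition of $\ov{\,\cdot\,}$ on $\Sig_w$ and \prettyref{lem:colorwelldef}.
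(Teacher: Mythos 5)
Your proposal is correct and follows essentially the same route as the paper's proof: induction on the number of edges, splitting $w_{i,j}$ at the cancellation partner of the leftmost edge $\y_{i+1}$ (via Britton's Lemma in one direction and the first-return decomposition of the colour word in the other), and bridging the two sides with \prettyref{lem:horocycl2} and the definition of $\sim_{\scriptscriptstyle\!\cC}$. The two points you flag as obstacles (the non-crossing property of Britton cancellations and the passage from $[i+1]=\ov{[s]}$ to $(i+1)\sim_{\scriptscriptstyle\!\cC}s$ via \prettyref{lem:colorwelldef}) are exactly the steps the paper leaves implicit, so spelling them out only makes the same argument more complete.
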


Before we prove \prref{lem:color}, we present an example and some consequences.
\begin{example}
	Consider the group \BS{2}{3} and the word  
	\begin{align*}w &= \y a \y a \ov \y a^3 \y a \ov \y a \ov \y \,\y a^2 \ov \y.
		\intertext{Then we have}
		\Col{w} &= [1][2][3][4][5][6][7][8]\\
		&= [1][2][3]\ov{[3]}\ov{[2]}\ov{[1]}[1]\ov{[1]} =_{F_{\Lambda_w}}1.
	\end{align*}
	Indeed, consider for example the factor $\ov \y a^3 \y$. As $k_{3,3} = 3 \in 3\Z$, it follows that $3 \sim_{\scriptscriptstyle\!\cC} 4$ and thus $[4] = \ov{[3]}$; however, $2 \not\sim_{\scriptscriptstyle\!\cC} 3$ since $k_{2,2} = 1 \not\in 2\Z$, see \prref{fig:klammergebirge}.
	By \prref{lem:color}, we know that $w \in \gen{a}$.
\end{example}

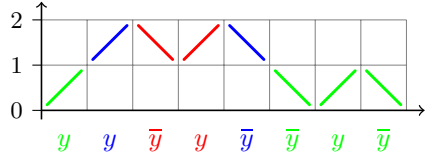
\begin{figure}[hbt]
	\begin{center}
		\begin{tikzpicture}[scale=.6]
		\draw[style=help lines] (0,2) grid (8,4);
		\draw[line width=0.6pt,->] (0,2) -- (8.4,2);
		\draw[line width=0.6pt,->] (0,1.8) -- (0,4.4); 
		\draw (0,2) -- (-0.2,2) node[left] {\small$0$};
		\draw (0,3) -- (-0.2,3) node[left] {\small$1$};
		\draw (0,4) -- (-0.2,4) node[left] {\small$2$};
		\draw[line width=1.1pt,cap=round,shorten >=3pt,shorten <=3pt, color=green]
		(0,2) -- (1,3);
		\draw (0.5,0.9) node[above] {\color{green}$ \y \vphantom{\ov \y}$};
		\draw[line width=1.1pt,cap=round,shorten >=3pt,shorten <=3pt, color=blue]
		(1,3) -- (2,4);
		\draw (1.5,0.9) node[above] {\color{blue}$\y\vphantom{\ov \y}$};
		\draw[line width=1.1pt,cap=round,shorten >=3pt,shorten <=3pt, color=red]
		(2,4) -- (3,3);
		\draw (2.5,0.9) node[above] {\color{red}$\ov \y$};
		\draw[line width=1.1pt,cap=round,shorten >=3pt,shorten <=3pt, color=red]
		(3,3) -- (4,4);
		\draw (3.5,0.9) node[above] {\color{red}$\y\vphantom{\ov \y}$};
		\draw[line width=1.1pt,cap=round,shorten >=3pt,shorten <=3pt, color=blue]
		(4,4) -- (5,3);
		\draw (4.5,0.9) node[above] {\color{blue}$\ov \y$};
		\draw[line width=1.1pt,cap=round,shorten >=3pt,shorten <=3pt, color=green]
		(5,3) -- (6,2);
		\draw (5.5,0.9) node[above] {$\color{green}\ov \y$};
		\draw[line width=1.1pt,cap=round,shorten >=3pt,shorten <=3pt, color=green]
		(6,2) -- (7,3);
		\draw (6.5,0.9) node[above] {$\color{green} \y\vphantom{\ov \y}$};	  	
		\draw[line width=1.1pt,cap=round,shorten >=3pt,shorten <=3pt, color=green]
		(7,3) -- (8,2);
		\draw (7.5,0.9) node[above] {\color{green}$\ov \y$};  	
		\end{tikzpicture}
	\end{center}
	\caption{$\rho(w)$ and $\Col{w}$ depicted graphically~-- each color represents one $[i] \cup \ov{[i]}$.}\label{fig:klammergebirge}
\end{figure}

As immediate consequences of Britton's Lemma, \prref{lem:horocycl}, and \prref{lem:color}, we obtain:

\begin{corollary}
	\hfil$w \eqF 1 \text{ if and only if } \Col{ w} =_{F(\Lambda_w)}1 \text{ and } k_{0,n}=0.$\hfill
\end{corollary}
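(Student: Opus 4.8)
The plan is to read off the corollary as a direct bookkeeping combination of \prref{lem:color} and \prref{lem:horocycl}, taking $i=0$ and $j=n$ so that $w_{0,n}=w$ and $\Col{w_{0,n}}=[1]\cdots[n]=\Col{w}$. First I would dispose of the degenerate case $n=0$: there $w=a_0^{k_0}$, the colour word $\Col{w}$ is empty (hence trivially $=_{F(\Lambda_w)}1$), and $k_{0,0}=k_0$, so the statement collapses to $a_0^{k_0}\eqF 1 \iff k_0=0$, which is just the fact that $a_0$ has infinite order in $F(\cG)$ as guaranteed by Britton's Lemma.

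For the forward direction I would assume $w\eqF 1$. Since $1\in\gen{a_0}$ this gives $w\in_{F(\cG)}\gen{a_0}$, so \prref{lem:color} applied to $w_{0,n}$ yields $\Col{w} =_{F(\Lambda_w)}1$ at once. The same membership $w\in_{F(\cG)}\gen{a_0}$ lets me invoke \prref{lem:horocycl}, giving $w\eqF a_0^{k_{0,n}}$; combined with $w\eqF 1$ this forces $a_0^{k_{0,n}}\eqF 1$. Here Britton's Lemma enters: a single vertex-group letter $a_0^{k_{0,n}}$ is already Britton-reduced, so by the ``moreover'' clause of \prref{lem:britton} it can equal the identity of $F(\cG)$ only if it is the empty word, i.e.\ $k_{0,n}=0$.

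For the converse I would assume $\Col{w} =_{F(\Lambda_w)}1$ and $k_{0,n}=0$. By \prref{lem:color} the first condition certifies $w\in_{F(\cG)}\gen{a_0}$, which is exactly the hypothesis needed to apply \prref{lem:horocycl}; the latter then gives $w\eqF a_0^{k_{0,n}}=a_0^0=1$, closing the equivalence.

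I do not expect a genuine obstacle, since all the real content has already been packaged into the two preceding lemmas. The one point worth stressing is that \prref{lem:horocycl} only asserts $w\eqF a_0^{k_{0,n}}$ under the standing hypothesis $w\in\gen{a_0}$, and it is precisely \prref{lem:color} that supplies this membership from the purely combinatorial condition $\Col{w} =_{F(\Lambda_w)}1$. Thus the two lemmas dovetail: colour cancellation decides whether the edges disappear, and $k_{0,n}$ then records the surviving power of $a_0$, whose triviality is detected by Britton's Lemma.
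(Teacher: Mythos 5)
Your proof is correct and takes essentially the same route as the paper, which states this corollary as an immediate consequence of Britton's Lemma, Lemma~\ref{lem:horocycl}, and Lemma~\ref{lem:color}; your argument is exactly that combination, with the membership $w\in_{F(\cG)}\gen{a_0}$ supplied by Lemma~\ref{lem:color} in one direction and by $w\eqF 1$ in the other, and Britton's Lemma detecting $k_{0,n}=0$. Nothing further is needed.
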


\begin{corollary}\label{cor:Bred}
	For $w = a_0^{k_0}\y_1a_1^{k_1}\cdots \y_na_n^{k_n}$, let $[i_1]\cdots [i_j] \in \Sig_w^*$ be freely reduced with $\Col{w} = [1]\cdots [n] =_{F_{\Lambda_w}} [i_1]\cdots [i_j]$. Then the $\cG$-factorization
	$$\hat w = a_0^{k_{0,i_1-1}}\y_{i_1}a_{i_1}^{k_{i_1,i_2-1}} \cdots \y_{i_j}a_{i_j}^{k_{i_j,n}}$$
	is Britton-reduced and $w\eqF \hat w$.
\end{corollary}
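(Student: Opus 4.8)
The plan is to establish the two assertions separately: that $w \eqF \hat w$, and that $\hat w$ is Britton-reduced. Throughout I take $i_1 < \cdots < i_j$ to be the positions surviving the free reduction of $\Col{w}$, so that the letters cancelled in passing from $[1]\cdots[n]$ to $[i_1]\cdots[i_j]$ form a non-crossing (nested) matching. The one structural fact I need from this is that each block strictly between two consecutive survivors, together with the prefix before the first and the suffix after the last survivor, reduces to the identity in $F_{\Lambda_w}$; writing $i_0 = 0$ and $i_{j+1} = n+1$, this says $\Col{w_{i_l,i_{l+1}-1}} = [i_l+1]\cdots[i_{l+1}-1] =_{F_{\Lambda_w}} 1$ for every $0 \le l \le j$. (If one prefers to avoid invoking the matching structure directly, this is exactly what a short induction on the number of free-reduction steps delivers.)

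For the identity $w \eqF \hat w$, I would first split $w$ at its surviving edges as $w = w_{i_0,i_1-1}\,\y_{i_1}\,w_{i_1,i_2-1}\,\y_{i_2}\cdots \y_{i_j}\,w_{i_j,n}$, which is immediate from the definition of the factors $w_{i,j}$. For each block, the vanishing $\Col{w_{i_l,i_{l+1}-1}} =_{F_{\Lambda_w}} 1$ together with \prref{lem:color} gives $w_{i_l,i_{l+1}-1} \in_{F(\cG)} \gen{a_{i_l}}$, and then \prref{lem:horocycl2} upgrades this to $w_{i_l,i_{l+1}-1} \eqF a_{i_l}^{k_{i_l,i_{l+1}-1}}$. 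Substituting these equalities block by block turns the factorization of $w$ into $\hat w$, proving $w \eqF \hat w$. The degenerate case $j = 0$ is just \prref{lem:horocycl} applied to all of $w$.

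For Britton-reducedness I would argue by contradiction. Since $\hat w$ alternates vertex powers and edges, the only rule of $B_\cG$ that could possibly apply is the edge rule, and it can do so only at a factor $\y_{i_l}\,a_{i_l}^{k_{i_l,i_{l+1}-1}}\,\y_{i_{l+1}}$ with $\y_{i_{l+1}} = \ov{\y_{i_l}}$ and $k_{i_l,i_{l+1}-1} \in \bet_{i_l}\Z$. I would then observe that the block relation $w_{i_l,i_{l+1}-1} \eqF a_{i_l}^{k_{i_l,i_{l+1}-1}}$ forces $\rho(w_{i_l,i_{l+1}-1}) = 0$, since $\rho$ kills every vertex generator. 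These three facts -- $\y_{i_{l+1}} = \ov{\y_{i_l}}$, $\rho(w_{i_l,i_{l+1}-1}) = 0$, and $k_{i_l,i_{l+1}-1} \in \bet_{i_l}\Z$ -- are precisely the defining conditions of $i_l \sim_{\scriptscriptstyle\!\cC} i_{l+1}$, whence $[i_{l+1}] = \ov{[i_l]}$, contradicting that $[i_1]\cdots[i_j]$ is freely reduced.

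The main obstacle is the structural input about free reduction in the color alphabet: I must be sure that the surviving edges are genuinely separated by blocks that collapse to the identity, so that each intermediate segment of $w$ really lies in the relevant vertex group and \prref{lem:color} applies. Everything else is a direct translation through \prref{lem:color} and \prref{lem:horocycl2}. A secondary point to get right is that $\rho$ descends to $F(\cG)$ (its defining relations preserve edge-exponent counts), which is what lets me read off $\rho = 0$ on a segment known only to equal a power of a vertex generator.
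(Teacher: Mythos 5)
Your proof is correct and is essentially the intended argument: the paper states \prref{cor:Bred} without proof as an immediate consequence of \prref{lem:color}, \prref{lem:horocycl2} and the definition of $\sim_{\scriptscriptstyle\!\cC}$, and your block decomposition at the surviving edges (each intermediate block collapsing in $F_{\Lambda_w}$, hence lying in the relevant vertex group) together with the contradiction argument for Britton-reducedness is exactly the natural way to fill in those details. The only point left implicit, on your side as on the paper's, is the independence of the choice of the representatives $i_1,\dots,i_j$, which the paper likewise only asserts in the remark following the corollary.
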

Note that \prettyref{cor:Bred} is independent of the choice of the representatives of $[i_1],\dots, [i_j]$.

\begin{proof}[Proof of \prettyref{lem:color}]
	Let $w_{i,j} \in_{F(\cG)} \gen{a_i}$.
	By Britton's Lemma, we can write 
	$$w_{i,j}= a_i^{k_i}\, \y_{i+1} w_{i+1,\ell-1} \y_\ell \,w_{\ell,j} $$
	with $\y_\ell = \ov \y_{i+1}$, $w_{i+1,\ell-1} \in_{F(\cG)} \gen{a_{i+1}^{\bet_{i+1}}}$, and $w_{\ell,j} \in_{F(\cG)} \gen{a_i}$.
	By \prref{lem:horocycl2}, we have $k_{i+1,\ell -1} \in \bet_{i+1}\Z$. As also $\y_\ell = \ov \y_{i+1}$ and $\rho( w_{i+1,\ell-1})=0$, this implies $i +1 \sim_{\scriptscriptstyle\!\cC} \ell$. 			
	By induction, we know that $\Col{ w_{i+1,\ell-1}} =_{F(\Lambda_w)} \Col{ w_{\ell,j}} =_{F(\Lambda_w)} 1$.	
	Thus, we obtain $$ \Col{ w_{i,j}} = [i+1]\,\Col{ w_{i+1,\ell-1}} \, \ov{[i+1]}\,\Col{ w_{\ell,j}} =_{F(\Lambda_w)} 1.$$
	
	\smallskip	
	For the other direction let $ \Col{ w_{i,j}} =_{F(\Lambda_w)} 1$. Then $ \Col{ w_{i,j}}$ is not freely reduced and we can write it in the form $$\Col{ w_{i,j}} =[i +1]\,\Col{ w_{i+1,\ell-1}} \, [ \ell]\, \Col{ w_{\ell,j}}$$ for some $\ell$ with $[i +1] = \ov{[ \ell]}$ and $\Col{ w_{i+1,\ell-1}} =_{F(\Lambda_w)} \Col{w_{\ell,j}} =_{F(\Lambda_w)} 1$.
	
	By induction, we know that $w_{i+1,\ell-1} \in_{F(\cG)} \gen{a_{i+1}}$ and $w_{\ell,j} \in_{F(\cG)} \gen{a_\ell}$; thus, by \prref{lem:horocycl2}, $w_{i+1,\ell-1} \eqF a_{i+1}^{k_{i+1,\ell-1}}$ and $w_{\ell,j} \eqF a_\ell^{k_{\ell,j}}$.
	Since $[i +1]\sim_{\scriptscriptstyle\!\cC} [ \ell]$, we have $\y_{i+1} = \ov \y_\ell$ and $k_{i+1,\ell-1} \in \beta_{i+1}\Z$. As, in particular, $a_i = a_\ell$, we obtain			 
	\begin{align*}
		w_{i,j}&= a_i^{k_i}\;\y_{i+1} \, w_{i+1,\ell-1} \, \y_\ell \;w_{\ell,j}\\
		&\eqF a_i^{k_i}\, \y_{i+1}a_{i+1}^{k_{i+1,\ell-1}} \y_\ell\,a_\ell^{k_{\ell,j}} \\
		&= a_i^{k_i}\, a_{i}^{\frac{\alp_{i+1}}{\bet_{i+1}}k_{i+1,\ell-1}} \,a_i^{k_{\ell,j}} 
		\;\in \gen{a_i}.
	\end{align*}
\end{proof} 

Now, we are ready to describe a \TC-many-one reduction of the word problem for $\cG$-factorizations to the free group $F_2 = \gen{a,b}$. 
The input is a $\cG$-factorization $w$, the output some word in $\tilde w \in \oneset{a,\ov a, b, \ov b}^*$ such that $w \eqF 1$ if and only if $\tilde w =_{F_2} 1$. The circuit computes the following steps:
\begin{algorithm}\ \label{alg:WPTC}
	\begin{enumerate}		
		\item\label{WPik} Compute $k_{0,n}$. If $k_{0,n} \neq 0$, then output $a$ (or some arbitrary other non-identity element of $F_2$).
		\item\label{WPiiCol} Otherwise, compute and output an encoding of $\Col{ w}$ in $F_2$ as follows:
		\begin{enumerate}	
			\item\label{WPcolorsnake} For all pairs $i<j$ check independently in parallel whether $i\sim_{\scriptscriptstyle\!\cC} j$ in \TC:
			\begin{enumerate}
				\item check whether $\y_i = \ov \y_j$,
				\item compute $\rho(w_{i,j-1}) $ and check whether $\rho(w_{i,j-1})=0$,
				\item compute $k_{i,j-1}$, check whether $k_{i,j-1}\in \Z$ and, if yes, whether $\bet_i \mid k_{i,j-1}$.
			\end{enumerate}
			If all points hold, then $i\sim_{\scriptscriptstyle\!\cC} j$, otherwise not.
			
			\item For every index $i$ compute in parallel the smallest $j$ with $j \in [i] \cup \ov{[i]}$ as representative of $[i]$~-- depending on whether $j \in [i] $ or $j \in \ov{[i]}$ the corresponding output is $b^ja \ov b^{j}$ or $b^j\ov a \ov b^{j}$. 
			\item Concatenate all output words of the previous step.		
		\end{enumerate}
	\end{enumerate}		
\end{algorithm}
By \prref{lem:kijtc} and Hesse's result \prref{thm:additionTC}, step \ref{WPik} and \ref{WPiiCol} (a) 
can be computed in \TC. Steps \ref{WPiiCol} (b) and \ref{WPiiCol} (c) are straightforward in \TC. Indeed, the smallest $j \in [i] \cup \ov{[i]}$ satisfies the first order formula
$$\left(i = j \lor i \sim_{\scriptscriptstyle\!\cC} j \lor  \bigvee_k (i\sim_{\scriptscriptstyle\!\cC} k \land k\sim_{\scriptscriptstyle\!\cC} j)\right) \land \bigwedge_{k < j} \lnot \left(i\sim_{\scriptscriptstyle\!\cC} k \lor  \bigvee_\ell (i\sim_{\scriptscriptstyle\!\cC} \ell \land \ell \sim_{\scriptscriptstyle\!\cC} k)\right), $$
which describes an \AC circuit in the obvious way (see \cite{BarringtonIS90} for the general correspondence between circuits and formulas). 
Step \ref{WPiiCol} (c) can be seen as the application of a homomorphism of free monoids, what can be done in \TC (see \cite{LangeM98}). Thus, we have established a \TC many-one reduction to the word problem of $F_2$. 

Note that in none of the above steps the actual graph played a role~-- only the numbers $\alp_\y , \bet_\y$ were used. This is because, up to now, we assumed that the input is already given as $\cG$-factorization. But also the transformation of elements of $\pi_1(\cG,T)$ into $\cG$-factorizations can be done in \TC as we see in the next theorem, which proves \prref{thm:wp}.

\begin{theorem}\label{thm:wpgbs}
	Let $G = \pi_1(\cG,a)\cong \pi_1(\cG,T)$ be a GBS group with graph $Y$ and $\Del = E(Y) \cup \set{a^k}{a \in V(Y),\, k \in \Z}$. There is a many-one reduction computed by a uniform family of \Tc0-circuits from each of the problems
	\begin{enumerate}
		\item given a word $w \in \Del^*$, decide whether $w$ is a $\cG$-factorization and, if so, decide whether $w \eqF 1$.\label{wpP}
		\item given a word $w \in \Del^*$, decide whether $w =_{\pi_1(\cG,T)} 1$,\label{wpT}
	\end{enumerate}	
	to the word problem of the free group $F_2$. In particular, the word problem of $G$ is in \L.
\end{theorem}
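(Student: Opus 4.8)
The plan is to treat both statements as instances of the reduction already packaged in \prettyref{alg:WPTC}: once a word is presented as a $\cG$-factorization of a loop based at the chosen vertex $a$, that algorithm outputs a word over $\oneset{a,\ov a,b,\ov b}$ which is trivial in $F_2$ precisely when the factorization is trivial in $F(\cG)$, and it does so in \TC. Thus for \prettyref{wpP} the only missing ingredient is a preprocessing step recognizing valid $\cG$-factorizations, while for \prettyref{wpT} the missing ingredient is a \TC-computable conversion of an arbitrary word over $\Del$ into an equivalent loop based at $a$.

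For part~\prettyref{wpP} I would first run a purely local syntactic test deciding whether the input $w = a_0^{k_0}\y_1 a_1^{k_1}\cdots \y_n a_n^{k_n}$ is a $\cG$-factorization: for each adjacent pair of letters one checks the compatibility conditions $\tar(\y_i)=\sor(\y_{i+1})$ and that each vertex-group letter lies in the vertex group attached to the enclosing edges, together with the closure condition $\tar(\y_n)=\sor(\y_1)$ and $a_0\in_{F(\cG)} G_{\sor(\y_1)}$. Since $Y$ is fixed and finite, each of these is a comparison of letters against a constant table, so the whole test is a first-order (hence \AC, hence \TC) predicate. If the test fails, the reduction outputs a fixed non-trivial generator of $F_2$; if it succeeds, it hands $w$ to \prettyref{alg:WPTC}. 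The composition of a constant-depth test with a \TC circuit is again \TC, giving the claimed many-one reduction.

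For part~\prettyref{wpT} I would turn $w$ into a loop at $a$ by inserting tree geodesics. Assign to each letter a start and an end vertex ($\sor(\y),\tar(\y)$ for an edge $\y$, and $v$ for a letter of $G_v$), prepend the unique reduced path in $T$ from $a$ to the start vertex of the first letter, append the reduced $T$-path from the end vertex of the last letter back to $a$, and between any two consecutive letters insert the reduced $T$-path joining the end vertex of the former to the start vertex of the latter. Because $T$ is a fixed spanning tree, each inserted path is a constant word, so the whole insertion is local; summing, block by block, the exponents of the vertex-group letters falling between two consecutive edges (iterated addition, hence \TC) puts the result into the normal form $\hat w = a_0^{k_0}\y_1\cdots \y_m a_m^{k_m}$ expected by \prettyref{alg:WPTC}. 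By construction $\hat w$ is a $\cG$-factorization of a loop at $a$, and since every tree edge equals $1$ in $\pi_1(\cG,T)$ we have $\hat w =_{\pi_1(\cG,T)} w$. As $\hat w$ represents an element of $\pi_1(\cG,a)\le F(\cG)$, \prettyref{prop:twofunds} (the canonical map $\pi_1(\cG,a)\to\pi_1(\cG,T)$ is an isomorphism) yields $w=_{\pi_1(\cG,T)}1 \iff \hat w\eqF 1$, and the latter is decided by feeding $\hat w$ to \prettyref{alg:WPTC}. Again everything is a composition of constant-depth stages, so the reduction is in \TC.

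Finally, the \L bound follows by composing: the reduction is \TC-computable, hence logspace-computable, and $F_2$ is linear so its word problem lies in \L by \prettyref{thm:linlog}; logspace many-one reductions compose with logspace-decidable targets, whence the word problem of $G$ is in \L. I expect the only genuinely delicate point to be part~\prettyref{wpT}: one must verify that inserting tree geodesics never alters the group element in $\pi_1(\cG,T)$ and that the resulting loop can be read off and normalized in constant depth. Both hinge on $Y$ and $T$ being fixed, so that all routing data are constants rather than objects to be computed.
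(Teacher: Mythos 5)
Your proposal is correct and follows essentially the same route as the paper: a constant-depth syntactic check followed by Algorithm~\ref{alg:WPTC} for part (i), and a \TC-computable rewriting along spanning-tree paths realizing the isomorphism $\pi_1(\cG,T)\to\pi_1(\cG,a)$ for part (ii), with the \L bound obtained by composing with \prettyref{thm:linlog}. The only (harmless) deviation is in (ii), where you insert connecting geodesics $T[v,v']$ between consecutive letters rather than replacing each letter $g$ at vertex $v$ by $T[a,v]\,g\,T[v,a]$ as the paper does; both substitutions are local, produce a $\cG$-factorization equal to $w$ in $\pi_1(\cG,T)$, and are computable in \TC.
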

\begin{proof}
	In order to decide whether $w$ is a $\cG$-factorization, one simply needs to verify whether $w$ is of the form $a_0^{k_0}\y_1a_1^{k_1}\cdots \y_na_n^{k_n}$ and then check whether $a_0 = a_n$ and $a_{i-1} = \sor(\y_i)$ and $\tar(\y_i) = a_i$ for all $1 \leq i \leq n$. This can be done in $\AC$. Then it remains to apply \prref{alg:WPTC}~-- which we already have seen to be in \TC.
	
	For \ref{wpT}, one needs to compute the isomorphism $\pi_1(\cG,T) \to \pi_1(\cG,a)$. For $a,b \in V(Y)$ let $T[a,b]$ denote the unique path from $a$ to $b$ in the spanning tree $T$. We read $T[a,b]$ as a group element. To compute the isomorphism, every letter $\y \in E(Y)$ has to be replaced by the word $T[a,\sor(\y)]\;\! \y \;\!T[\tar(\y),a]$ and every letter $b^k $  with $b \in V(Y),\, k \in \Z$ by $T[a,b]\;\!b^k\;\! T[b,a]$. This means we apply a  homomorphism of free monoids, what can be done in \TC (see \cite{LangeM98}). Moreover, this replacement produces a $\cG$-factorization as output. 
\end{proof}

\subsection{Computing Britton-reduced words}\label{sec:bred}
Before we consider the problem of computing Britton-reduced words, we focus on the analog problem in free groups, the computation of freely reduced words. Since already in the solution of the word problem free groups of arbitrary rank were appearing, we consider the alphabet $\Lambda$ as part of the input and assume that it is properly encoded over the binary alphabet $\oneset{0,1}$. In particular, we assume that the involution $\Lambda \cup \ov \Lambda \to \Lambda \cup \ov \Lambda$ can be computed in $\AC$~-- \eg\ by a bit-flip.

\begin{proposition}\label{prop:Fbred}
	The following problem is \AC-reducible to the word problem of $F_2$: given a finite alphabet $\Lambda$ and a word $w \in (\Lambda \cup \ov \Lambda)^*$, compute a freely reduced word $\hat w \in (\Lambda \cup \ov \Lambda)^*$ with $\hat w =_{F(\Lambda)} w$.
\end{proposition}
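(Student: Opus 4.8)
The plan is to reduce the free group $F(\Lambda)$ over the arbitrary input alphabet to the fixed free group $F_2 = \gen{a,b}$, and then to read off the freely reduced word by a bracket‑matching argument in the spirit of \prref{lem:color} and \prref{cor:Bred}, but in the simpler purely free setting. Throughout I use that $\TC \sse \AC(F_2)$ (as recorded after \prref{thm:divisionTC}), so that any $\TC$‑subroutine — in particular iterated addition and counting — is available inside an $\AC(F_2)$ circuit.

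\emph{Step 1 (embedding).} Fix an injective indexing $\lambda \mapsto \mathrm{ind}(\lambda) \in \oneset{1,\dots,r}$ of the finitely many letters occurring in $w$ (so $r \le \abs{w}$), computable in $\TC$, and define $\phi\colon F(\Lambda) \to F_2$ by $\lambda_i \mapsto b^{i} a\,\ov{b}^{\,i}$ and $\ov{\lambda_i} \mapsto b^{i} \ov a\,\ov{b}^{\,i}$, exactly the encoding already used in \prref{alg:WPTC}. The words $b^{i} a\,\ov{b}^{\,i}$ freely generate a free subgroup of $F_2$, so $\phi$ is injective; hence for every factor $u$ of $w$ we have $u =_{F(\Lambda)} 1$ if and only if $\phi(u) =_{F_2} 1$. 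Since $\phi$ is a homomorphism of free monoids whose images have length $\Oh(\abs{w})$, the word $\phi(u)$ is $\TC$‑computable (see \cite{LangeM98}), and a single oracle gate answers the query. Writing $w = x_1 \cdots x_m$ with $x_s \in \Lambda \cup \ov\Lambda$, we may therefore treat the predicate ``$x_{s+1}\cdots x_t =_{F(\Lambda)} 1$'' as available for all $0 \le s \le t \le m$, evaluated by $\Oh(m^2)$ oracle gates in parallel.

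\emph{Step 2 (reduced word via the height profile).} Let $\ell_i = \abs{\hat{u}_i}$ be the length of the reduced prefix $\hat u_i$ of $x_1\cdots x_i$, with $\ell_0 = 0$ and $\ell_i - \ell_{i-1} = \pm 1$. A position $j \le i$ is still present after reading the prefix $x_1 \cdots x_i$ precisely when its cancellation partner, if any, lies beyond $i$; this is the involutive analogue of the height of a Dyck word, and using the factor‑triviality predicate of Step 1 together with $\TC$ counting the whole profile $(\ell_0,\dots,\ell_m)$ is computable in $\TC$. From the profile the reduced word is obtained letter by letter: its $h$‑th letter (for $1 \le h \le \ell_m$) is $x_i$ for the \emph{largest} index $i$ at which the profile rises to height $h$ and never drops below it again, i.e. $\ell_{i-1} = h-1$, $\ell_i = h$, and $\ell_t \ge h$ for all $t \ge i$. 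Each such $i$ is singled out by a first‑order condition over the computed heights, and assembling the selected $x_i$ in the order $h = 1,\dots,\ell_m$ is a $\TC$ compaction; the resulting $\hat w$ is freely reduced with $\hat w =_{F(\Lambda)} w$, and all steps lie in $\TC \sse \AC(F_2)$.

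\emph{Main obstacle.} The delicate point is that the non‑crossing matching of cancelled letters need not be unique — for instance in $a\,b\,\ov b\,\ov a\,a$ either the first or the last $a$ may be regarded as surviving — so one cannot correctly label positions ``surviving'' or ``cancelled'' merely by testing triviality of a flanking factor. The reduced \emph{word} is nevertheless unique, which is exactly why the argument must be routed through the well‑defined height profile $(\ell_i)$ rather than through a choice of matching. I expect the bulk of the technical work to be the verification that this profile — equivalently, the involutive Dyck reduction determined by the triviality predicate — is genuinely $\TC$‑computable, and that the subsequent read‑off and compaction remain within $\TC$, hence within $\AC(F_2)$.
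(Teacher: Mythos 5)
Your Step~1 coincides with the paper's: the same encoding $\lambda_i \mapsto b^{i} a\,\ov b^{\,i}$ into $F_2$ and $\Oh(m^2)$ parallel oracle queries make the predicate ``$x_{s+1}\cdots x_t =_{F_\Lambda} 1$'' available. The gap is in Step~2. The entire content of the proposition is to extract the surviving letters from that predicate, and your argument for this reduces to the assertion that the height profile $(\ell_0,\dots,\ell_m)$ is computable in $\TC$ from the predicate plus counting --- which you yourself then flag as ``the bulk of the technical work'' still to be done. This is not a routine verification: your only stated justification (``a position $j\le i$ is still present precisely when its cancellation partner lies beyond $i$'') is, as you concede, ill-posed because the matching is not unique, and the natural repair --- declaring that $x_t$ cancels whenever some factor $x_{s+1}\cdots x_t$ is trivial --- is false (in $a\,\ov a\,a\,\ov a$ the factor $x_2x_3$ is trivial, yet position $3$ is a push, not a pop, and is the surviving letter of the prefix $a\,\ov a\,a$). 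Note also that $\ell_m=0$ is exactly the word problem of $F_\Lambda$, which is hard for \Nc{1}; so the profile cannot be obtained by counting alone, and some genuine combinatorial statement linking the triviality predicate to the reduction process is unavoidable. Your read-off rule from the profile (the $h$-th letter sits at the last ascent to height $h$ after which the profile never drops below $h$) is correct, but it presupposes precisely what is missing.

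The paper supplies that missing combinatorics by a different device: it defines $i\approx_{\scriptscriptstyle\!\cF}j$ iff $w_i=w_j$ and the factor between them is trivial, checks that this is an equivalence relation with a well-defined partial involution $[i]\mapsto\ov{[i]}$ on classes, proves that the indices of $[i]$ and $\ov{[i]}$ alternate in ascending order (so $\abs{\abs{[i]}-\abs{\ov{[i]}}}\le 1$), and concludes that the reduced word consists exactly of the letters at the maximal index of each class that outnumbers its partner by one. All of this is first-order plus counting over the oracle predicate. If you wish to keep the profile-based route, the same alternation lemma yields $\ell_i$ as the number of classes of $\approx_{\scriptscriptstyle\!\cF}$ restricted to $\oneset{1,\dots,i}$ that outnumber their partner class; but some such uniqueness/alternation argument must actually be proved, and your proposal currently defers it entirely.
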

\begin{proof} 
	We follow a similar approach as for the solution of the word problem of GBS groups. For $w= w_1\cdots w_n$ with $w_i \in \Lambda \cup \ov \Lambda$, we set $w_{i,j} = w_{i + 1} \cdots w_j$.
	We define an equivalence relation ${\approx_{\scriptscriptstyle\!\cF}} \sse \oneset{1, \dots, n} \times \oneset{1,\dots, n}$ by
	\begin{align*}
		i \approx_{\scriptscriptstyle\!\cF} j \text{ if and only if } w_i = w_j \text{ and } &
		\begin{cases}
			w_{i,j} =_{F(\Lambda)} 1&\text{if } i < j,\\
			w_{j,i} =_{F(\Lambda)} 1&\text{if } j < i.
		\end{cases} 
	\end{align*}
	By using the embedding of $F_\Lambda$ into $F_2$, it can be checked in $\AC(F_2)$ for all pairs $i,j$ whether $i\approx_{\scriptscriptstyle\!\cF}j$. Furthermore, let us define a partial map 
	\begin{align*}
		&&&&\ov{\,\cdot\,}:\rquot{\oneset{1,\dots, n}}{\approx_{\scriptscriptstyle\!\cF}} &\to \rquot{\oneset{1,\dots, n}}{\approx_{\scriptscriptstyle\!\cF}}\\
		&&&&[i] &\mapsto \ov{[i]} = [j] &&\!\!\!\!\!\!\!\!\!\!\!\!\!\!\!\!\!\!\!\!\text{if there is some $j$ with } w_i = \ov w_j \text{ and }\\ 
		&&&&&&& \!\!\!\!\!\!\!\!\!\!\!\!\!\!\!\!\!\!\!\!  w_{i,j-1} =_{F_\Lambda} 1\ (\text{resp.\ } w_{j,i-1} =_{F_\Lambda} 1).
	\end{align*} 
	To see that this map is well defined, we have to verify two points: 
	\begin{enumerate}
		\item that the map $i \mapsto \ov{[i]}$ is well-defined;
		\item that  $\ov{[i]} = \ov{[j]}$ if $i \approx_{\scriptscriptstyle\!\cF} j $.
	\end{enumerate}
	For the first point, consider $i < j < k$ with $w_i = \ov w_j$,  $w_{i,j-1} =_{F_\Lambda} 1$ and $w_i = \ov w_k$,   $w_{i,k-1} =_{F_\Lambda} 1$. Then we have $w_j = \ov w_i = w_k$ and $w_{j,k} =_{F_\Lambda} (w_{i,j-1}w_j)^{-1} w_{i,k-1} w_k =_{F_\Lambda} 1$~-- hence, $j \approx_{\scriptscriptstyle\!\cF} k $. Likewise all other orderings of $i,j,k$ can be dealt with; hence, the image $\ov{[i]}$ is uniquely defined for each $i$.
	
	For the second point, let $i \approx_{\scriptscriptstyle\!\cF} j$ and $k \in \ov{[j]}$ with $i<j<k$. Then we have $w_i = w_j = \ov w_k$ and $w_{i,k-1} = w_{i , j  } w_{j, k- 1} 
	=_{F_\Lambda} 1$~-- that means $k \in \ov{[i]}$ and, thus, $\ov{[i]} = \ov{[j]}$. Again all other orderings of $i,j,k$ follow the same way.
	
	Since $\ov{\ov{[i]}} = [i]$ for all $i$, we have a well-defined partial involution $ \ov{\,\cdot\,}$.
	In the following, if $\ov{[i]}$ is not defined, we consider it to be the empty set.
					
	When looking at the indices in $[i] \cup \ov{[i]}$ in ascending order, indices from $[i]$ and indices from $\ov{[i]}$ always alternate. This is because if $w_{i,j} = 1$, then there must be some $k \in \oneset{i + 1, \dots, j - 1}$ such that $w_k$ cancels with $w_j$ by free reductions. In particular, $w_k = \ov w_j$ and $w_{k,j-1} = 1$. Thus, $k = \ov{[j]}$.
		
	Therefore, we have	$\abs{\abs{[i]}- \abs{\ov{[i]}}} \leq 1$ for all $i$. Moreover, if two equivalence classes $[i]$ and $\ov{[i]}$ have the same number of members, then all corresponding letters $w_j$ for $j \in [i] \cup \ov{[i]}$ can be canceled by free reductions. 
	On the other hand, if $\abs{[i]}- \abs{\ov{[i]}} = 1$, then after any sequence of free reductions, there remains still one letter $w_j$ for some $j \in [i] \cup \ov{[i]}$ which cannot be canceled. This is because a letter $w_i$ can only cancel with a letter $w_j$ if $w_i = \ov w_j$ and $w_{i,j-1} =_{F_\Lambda} 1$ (resp.\ $w_{j, i-1} =_{F_\Lambda} 1$)~-- with other words, $w_i$ can only cancel with letters $w_j$ for $j \in \ov{[i]}$.
	
	Thus, for each $i$ with $\abs{[i]}- \abs{\ov{[i]}} = 1$, denote by $j_{[i]}$ the maximal index in $[i]$. 
	Now, the freely reduced word $\hat w$ consists of exactly those $w_j$ with $j=j_{[i]}$ for some $i$. All other letters are deleted. Apart from the computation of $\approx_{\scriptscriptstyle\!\cF}$ and $ \ov{\,\cdot\,}$, everything can be done in \TC (with the same arguments as steps \ref{WPiiCol} (b) and \ref{WPiiCol} (c) of \prref{alg:WPTC}); hence, the whole procedure is in $\AC(F_2)$.
\end{proof}

\begin{corollary}\label{cor:BredL}\label{cor:bsbred}
	The following problems are \AC-reducible to the word problem of the free group $F_2$:\begin{enumerate}
		\item given a word $w \in \Del^*$, decide whether $w$ is a $\cG$-factorization and, if so, compute a Britton-reduced $\cG$-factorization $\hat w$ with $\hat{w} \eqF w$.
		\item given a word $w \in \Del^*$, compute a Britton-reduced $\cG$-factorization $\hat w$ with $\hat{w} =_{\pi_1(\cG,T)} w$.
	\end{enumerate}  Moreover, the number of bits required for $\hat{w}$ is linear in the number of bits of $w$.
\end{corollary}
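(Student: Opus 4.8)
The plan is to reuse the colouring machinery built for \prref{thm:wpgbs}, feed the resulting colour word into the free-group reduction of \prref{prop:Fbred}, and then read off the Britton-reduced form via \prref{cor:Bred}. I treat part~(i) first and reduce part~(ii) to it exactly as in \prref{thm:wpgbs}. So let $w = a_0^{k_0}\y_1a_1^{k_1}\cdots \y_na_n^{k_n}$ be the input. Checking that $w$ is a $\cG$-factorization is in $\AC$ (verify the shape and the incidences $a_{i-1}=\sor(\y_i)$, $\tar(\y_i)=a_i$, $a_0=a_n$), and if this fails we reject. Otherwise I first build the alphabet $\Sig_w$ as in step~\ref{WPcolorsnake} of \prref{alg:WPTC}: for all pairs $i<j$ I decide $i \sim_{\scriptscriptstyle\!\cC} j$ in \TC (using \prref{lem:kijtc} and \prref{thm:divisionTC} for the numbers $k_{i,j}$ and the divisibility test, and a componentwise count for $\rho$), and then $i\approx j$ is the first-order condition $i=j \lor \exists\, \ell\,(i\sim_{\scriptscriptstyle\!\cC}\ell \land \ell\sim_{\scriptscriptstyle\!\cC}j)$, again in \TC. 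Encoding each colour by the minimal index of the pair $[i]\cup\ov{[i]}$ together with a single orientation bit, the involution on $\Sig_w$ becomes a bit-flip and is thus in $\AC$; this is precisely the format required by \prref{prop:Fbred}.

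Next I apply \prref{prop:Fbred} to the colour word $\Col{w}=[1]\cdots[n]\in\Sig_w^*$; this is the only step using the oracle and it is in $\AC(F_2)$. Its output is a freely reduced word $[i_1]\cdots[i_j]$ with $\Col{w}=_{F_{\Lambda_w}}[i_1]\cdots[i_j]$, and since the construction in the proof of \prref{prop:Fbred} keeps exactly a subsequence of surviving positions, I read off the actual indices $i_1<\dots<i_j$. By \prref{cor:Bred} the word
\begin{align*}
  \hat w = a_0^{k_{0,i_1-1}}\y_{i_1}a_{i_1}^{k_{i_1,i_2-1}}\cdots \y_{i_j}a_{i_j}^{k_{i_j,n}}
\end{align*}
is a Britton-reduced $\cG$-factorization with $\hat w \eqF w$; in particular each exponent $k_{i_\ell,i_{\ell+1}-1}$ is a genuine integer (the segments between consecutive surviving edges cancel to powers of the respective $a_{i_\ell}$ by \prref{lem:color} and \prref{lem:horocycl2}, which is what makes $\hat w$ a valid $\cG$-factorization). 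Each exponent is obtained from the fraction of \prref{lem:kijtc} by one integer division, hence in \TC (\prref{thm:divisionTC}), and assembling $\hat w$ from the surviving edges and these binary integers is a homomorphic relabelling, again in \TC. Thus the whole procedure for~(i) lies in $\AC(F_2)$. For part~(ii) I first map $w\in\pi_1(\cG,T)$ to a $\cG$-factorization by the homomorphism of \prref{thm:wpgbs} (replace $\y$ by $T[a,\sor(\y)]\,\y\,T[\tar(\y),a]$ and $b^k$ by $T[a,b]\,b^k\,T[b,a]$), which increases the length only by a constant factor and is in \TC, and then invoke part~(i).

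The remaining point is the linear size bound, which I regard as the only non-mechanical step. Write $b_\nu$ for the bit length of $k_\nu$ and let $C=\max_\y\oneset{\abs{\alp_\y},\abs{\bet_\y}}$, a constant for the fixed group $\cG$. For a surviving gap of length $m_\ell=i_{\ell+1}-i_\ell$, every product $\prod_{\mu}\frac{\alp_\mu}{\bet_\mu}$ occurring in $k_{i_\ell,i_{\ell+1}-1}$ has absolute value at most $C^{m_\ell}$, so each of the $m_\ell+1$ summands has at most $\max_{\nu}b_\nu+\Oh(m_\ell)$ bits (the maximum over the gap); hence $k_{i_\ell,i_{\ell+1}-1}$ has at most $\max_{\nu\in\text{gap}}b_\nu+\Oh(m_\ell)$ bits. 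Since the gaps partition $\oneset{0,\dots,n}$, summing over $\ell$ gives total bit length at most $\sum_\ell\bigl(\max_{\nu\in\text{gap}}b_\nu\bigr)+\Oh(n)\le \sum_{\nu=0}^n b_\nu+\Oh(n)=\Oh(\abs{w})$, the final $\Oh(n)$ accounting for the surviving edge letters and $n\le\abs{w}$. The key observation here is that the $\max$ over each gap is bounded by the corresponding partial sum, so the partition collapses the bound to $\sum_\nu b_\nu$ rather than a quadratic $n\cdot\max_\nu b_\nu$. For part~(ii) the preliminary homomorphism only enlarges $w$ by a constant factor, so the bound carries over, completing both parts.
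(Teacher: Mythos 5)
Your proposal is correct and follows essentially the same route as the paper: check the $\cG$-factorization in $\AC$, compute $\cC(w)$ via step (ii) of \prref{alg:WPTC} in \TC, freely reduce it with \prref{prop:Fbred} (the only oracle step), and assemble $\hat w$ via \prref{cor:Bred}. Your gap-by-gap summation for the linear size bound is a more explicit version of the paper's one-line appeal to the per-exponent estimate from \prref{eq:kij}, but it is the same argument.
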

\begin{proof}
	As in the proof of \prref{thm:wpgbs} it can be checked in \AC whether $w$ is a $\cG$-factorization (resp.\ a $\cG$-factorization can be computed from $w$ via the isomorphism $\pi_1(\cG,T) \to \pi_1(\cG,a)$ in \TC). Thus, we can assume that $w$ is a $\cG$-factorization.

	We can compute $\Col{w}\in \Lambda_w^*$ by step \ref{WPiiCol} of \prref{alg:WPTC} in \TC (or more precisely, a proper encoding of $\Col{w}$ over an alphabet of fixed size). By \prref{prop:Fbred}, a Britton-reduced word $\widehat{\Col{w}} = [i_1]\cdots [i_j] \in \Lambda_w^*$ can be computed in $\AC(F_2)$.
	By \prref{cor:Bred}, the desired output is $\hat w = a_0^{k_{0,i_1-1}}\y_{i_1}a_{i_1}^{k_{i_1,i_2-1}} \cdots \y_{i_j}a_{i_j}^{k_{i_j,n}}$; as before, it can be computed from $[i_1]\cdots [i_j]$ in \TC. According to \prref{eq:kij}, the number of bits of $k_{i,j}$ is linear in $j-i + \max \set{\log \abs{k_\nu}}{\nu \in \oneset{i, \dots, j}}$. Thus, the number of bits of $\hat{w}$ is linear in the number of bits of $w$.
\end{proof}

\subsection{Uniform versions of the word problem}\label{sec:uniWP}
It is not obvious what the uniform version of the word problem of GBS groups is (\ie a version of the word problem where the group is part of the input). Indeed, there are different ways how to define a uniform version of the word problem~-- and they lead to slightly different complexity bounds. We consider a uniform version of \prref{thm:wpgbs} \ref{wpP} and a uniform version of \prref{thm:wpgbs} \ref{wpT}.

In the uniform versions we assume that the graph of groups is given in a proper encoding. For instance we assume that the encoding consists of the numbers $\abs{V(Y)}$ and $\abs{E(Y)}$ and a list of tuples $(\y,\sor(\y), \tar(\y), \alp_\y , \bet_\y , \ov \y)$ for the edges. Here $\y, \ov \y \in \oneset{0,\dots, \abs{E(Y)}-1}$ and $\sor(\y), \tar(\y) \in \oneset{0,\dots, \abs{V(Y)}-1}$ and all numbers (also the $\alp_\y , \bet_\y )$ are encoded as binary integers using the same number of bits for all $\y$.
The graph of groups also defines the alphabet $\Del = E(Y) \cup \set{a^k}{a \in V(Y),\, k \in \Z}$. Recall that the integer exponents $k$ are represented in binary using a variable number of bits.

We say an encoding is \emph{valid}, if all tuples are properly formed, for every edge $\y$, there is an inverse edge $\ov \y$ satisfying $\sor(\y)=\tar(\ov \y)$ and $\alp_\y =\bet_{\ov \y}$, and the graph is connected.
\begin{corollary}\label{cor:uniWP1}
	The following problem is \TC-many-one-reducible to the word problem of $F_2$.
	Input: a valid encoding of a graph of groups $\cG$ and a word $w \in \Del^*$.
	Decide whether $w$ is a $\cG$-factorization and, if so, decide whether $w \eqF 1$.
\end{corollary}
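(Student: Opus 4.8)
The plan is to observe that the uniform statement is proved by the very same construction as the non-uniform \prref{thm:wpgbs} \ref{wpP}, namely \prref{alg:WPTC}, and that each of its steps stays in \TC once the graph $\cG$ is read off the input rather than hard-wired into the circuit family. The point already isolated in the remark following \prref{alg:WPTC} is that the algorithm never uses the combinatorial structure of $Y$ beyond the numbers $\alp_\y,\bet_\y$ and the adjacency of the edges; and by \prref{lem:kijtc} the rationals $k_{i,j}$ are computable in \TC \emph{even when} $\alp_\y,\bet_\y$ are part of the input. Since the encoding is promised to be valid, no connectivity or inverse-edge checks are needed.

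First I would split the task into its two declared parts. To decide whether $w=a_0^{k_0}\y_1a_1^{k_1}\cdots\y_na_n^{k_n}$ is a $\cG$-factorization, I parse $w$ into its alternating vertex-power and edge blocks and, for each edge $\y_i$, retrieve the tuple $(\y_i,\sor(\y_i),\tar(\y_i),\alp_{\y_i},\bet_{\y_i},\ov{\y_i})$ from the edge list. As the encoding uses a fixed bit-width for every structural entry, this is a table lookup indexed by $\y_i$ and is computable in \AC. It then remains to check the local conditions $\sor(\y_1)=a_0$, $\tar(\y_i)=\sor(\y_{i+1})=a_i$ for all $i$, and $a_n=a_0$; these are a bounded number of comparisons of retrieved $\Oh(\log n)$-bit numbers per position, again in \AC. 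If any check fails, the reduction outputs a fixed non-identity element such as $a$ (i.e.\ it rejects).

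Once $w$ is confirmed to be a $\cG$-factorization, I run \prref{alg:WPTC} unchanged. I fix the orientation $D$ uniformly by putting $\y\in D$ whenever $\y<\ov\y$ as edge numbers, so that $\rho$, a signed iterated sum of edge occurrences, is in \TC. Computing $k_{0,n}$ and, for every pair $i<j$, the tests making up $i\sim_{\scriptscriptstyle\!\cC}j$ — the comparison $\y_i=\ov\y_j$, the vanishing $\rho(w_{i,j-1})=0$, the integrality of $k_{i,j-1}$, and the divisibility $\bet_i\mid k_{i,j-1}$ — all reduce to \proc{Iterated Addition}, \proc{Iterated Multiplication} and \proc{Integer Division}, hence lie in \TC by \prref{lem:kijtc} and \prref{thm:divisionTC}. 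The remaining steps, computing class representatives of $\approx$ and encoding $\Col{w}$ into $F_2$, depend only on the indices $1,\dots,n$ and not on the graph, so they are verbatim. The output is a word $\tilde w\in\oneset{a,\ov a,b,\ov b}^*$ with $w\eqF1$ iff $\tilde w=_{F_2}1$, yielding the required \TC many-one reduction.

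The conceptual work is thus already done in the non-uniform case; the only genuinely new ingredient, and the step I would verify most carefully, is the \emph{uniformity} of the graph accesses. In \prref{thm:wpgbs} the data $\sor(\y),\tar(\y),\alp_\y,\bet_\y$ were constants compiled into the circuit, whereas here they must be extracted from the input by \TC-uniform indexing and the orientation must be selected on the fly. I expect this to be the main, though routine, obstacle: one checks that looking up the fixed-width tuple of an edge and comparing $\y$ with $\ov\y$ are \TC-computable, after which every arithmetic bottleneck is one of those settled by Hesse's theorem \prref{thm:divisionTC}.
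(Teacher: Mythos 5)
Your proposal is correct and follows essentially the same route as the paper's proof: split the task into the \AC check that $w$ is a $\cG$-factorization and the observation that \prref{alg:WPTC} only touches the graph through the lookups of $\alp_\y,\bet_\y$, the test $\y=\ov x$, and the choice of orientation $D$ (fixed by taking the smaller index of each pair), all of which remain \TC-uniform when $\cG$ is part of the input, with \prref{lem:kijtc} already covering the arithmetic. Nothing essential differs from the paper's argument.
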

Note that we need the promise in \prref{cor:uniWP1} that the input is a valid encoding of a graph of groups. Indeed, it cannot be checked whether the graph is connected in \TC unless $\TC = \L$ (by \cite{CookM87}, already connectivity for forests is \L-complete with respect to $\NC^1$-reductions~-- the reduction is actually a $\AC$ reduction\footnote{For a problem $\cP$ in \L, the reduction computes the full configuration graph of the \L Turing machine for $\cP$. By using a time-stamp and an additional sink vertex, every vertex except the sink vertex and the accepting configuration can be made to have out-degree exactly one. Now, for every edge an inverse edge can be introduced without changing connectivity properties~-- the resulting graph has precisely two connected components and in each component one vertex is known.}, see also \cite{JennerLM97}).

On the other hand, by the seminal paper by Reingold \cite{Reingold08}, connectivity of undirected graphs can be checked in \L. Hence, as the other points can be easily verified in $\AC$, it can be checked in \L whether an encoding of a graph of groups is valid.

\begin{proof}[Proof of \prref{cor:uniWP1}]
	We need to verify two things, namely, that for some word $w \in \Del^*$ it can be checked in \TC whether it is a $\cG$-factorization and, second, that \prref{alg:WPTC} is still in \TC also if the graph is part of the input.
	
	For the first point, it only needs to be checked whether $w$ is of the form $a_0^{k_0}\y_1a_1^{k_1}\cdots \y_na_n^{k_n}$ and, if so, whether $a_0 = a_n$ and $\sor(\y_i) = a_i$ and $\tar(\y_{i}) = a_{i+1}$ for all $i$. This can be done in $\AC$.
	
	\prref{alg:WPTC} is almost independent of the graph. Indeed, there is only the lookup of the numbers $\alp_\y , \bet_\y $, the check whether $\y=\ov x$ for edges $x,\y$, and the choice of the orientation $D$ in order to compute the homomorphism $\rho$. The first two points are straightforward and also the last point is no difficulty: for a pair $\y,\ov{\y}$, simply choose the one with smaller index in the list coding the graph to be in $D$.
\end{proof}

The uniform version of \prref{thm:wpgbs} \ref{wpT} is not so immediate. The difficulty lies in the computation of the paths $T[a,b]$. This problem is complete for \L under $\NC^1$ reductions \cite{CookM87} (and indeed under $\AC$ reductions as remarked above). Thus, together with the computation of the isomorphism $\pi_1(\cG,T) \to \pi_1(\cG,a)$, the algorithm of \prref{thm:wpgbs} \ref{wpT} is no longer a \TC many-one reduction (or at least it is not known whether it is). Still, we can prove the following result, which  together with \prref{cor:uniWP1} yields the proof for the first part of \prref{thm:uni}.
\begin{corollary}\label{cor:uniWP2}
	The following problem is complete for \L under $\AC$ reductions:
	Given a (valid) encoding of graph of groups $\cG$ with a spanning tree $T$ (given as list of edges) and a word $w \in \Del^*$.
	Decide whether $w =_{\pi_1(\cG,T)} 1$.
\end{corollary}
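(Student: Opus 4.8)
The plan is to prove the two directions separately: containment in \L, and \L-hardness under $\AC$ reductions.

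For containment I would reduce to the machinery already developed. Given a valid encoding of $\cG$, a spanning tree $T$, and a word $w \in \Del^*$, the first step is to transform $w$ into a $\cG$-factorization $w'$ representing the same element of $\pi_1(\cG,T) \cong \pi_1(\cG,a)$, exactly as in the proof of \prref{thm:wpgbs}: each edge $\y$ is replaced by $T[a,\sor(\y)]\,\y\,T[\tar(\y),a]$ and each letter $b^k$ by $T[a,b]\,b^k\,T[b,a]$. The only ingredient here that is not already in \TC is the computation of the tree paths $T[\cdot,\cdot]$; but since $T$ is given as an explicit tree, finding the unique path between two vertices is a reachability computation in a forest, which is in \L by \cite{CookM87} (or \cite{Reingold08}). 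As each such path has length at most $\abs{V(Y)}$, the factorization $w'$ has polynomial length and can be produced in \L. Having obtained the $\cG$-factorization $w'$, deciding $w' \eqF 1$ is \TC-many-one-reducible to the word problem of $F_2$ by \prref{cor:uniWP1}, and the latter is in \L by \prref{thm:linlog}. Since \L is closed under composition with logspace-computable functions and $\TC \sse \L$, the whole procedure runs in \L.

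For \L-hardness I would reduce from the \L-complete forest-connectivity problem used to show \L-completeness in \prref{cor:uniWP1} (see its footnote and \cite{CookM87}): we are given a forest specified by parent pointers with exactly two roots $r_1,r_2$ and a vertex $v$, and must decide whether the root of $v$ is $r_1$. The plan is to build a GBS instance in which $T$ is manifestly a spanning tree, so that the reduction stays in $\AC$. Concretely, let $Y$ consist of all forest vertices together with one fresh super-root $R$; as edges take all parent-pointer edges (each oriented from child to parent, with $\alp_\y = \bet_\y = 1$) and two further edges $R \to r_1$ and $R \to r_2$. I set the weights on the super-root edges to $\alp = 1,\ \bet = 2$ for $R \to r_1$ and $\alp = 1,\ \bet = 3$ for $R \to r_2$, so that $r_1^2 =_{\pi_1(\cG,T)} R$ while $r_2^3 =_{\pi_1(\cG,T)} R$. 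Since the parent pointers form a forest and both roots are attached to $R$, the graph $Y$ is itself a tree; hence we may take $T = E(Y)$, which is acyclic and connected by construction, yields a valid encoding (every edge gets an inverse with swapped weights), and is computable in $\AC$ because roots are recognised locally from the pointers. Finally the reduction outputs the word $w = v^2 R^{-1}$.

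Correctness rests on two observations. First, every parent-pointer edge is a tree edge with $\alp = \bet = 1$, so it identifies a child with its parent in $\pi_1(\cG,T)$; walking up the pointers gives $v =_{\pi_1(\cG,T)} \mathrm{rt}(v)$, where $\mathrm{rt}(v)$ denotes the root of $v$. Second, $r_1^2 =_{\pi_1(\cG,T)} R$ but $r_2^2 \neq_{\pi_1(\cG,T)} R$, because $r_2$ has infinite order as a vertex generator and $r_2^3 =_{\pi_1(\cG,T)} R$. Consequently $w = v^2 R^{-1}$ equals $r_1^2 R^{-1} =_{\pi_1(\cG,T)} 1$ when $\mathrm{rt}(v) = r_1$, and equals $r_2^2 R^{-1} =_{\pi_1(\cG,T)} r_2^{-1} \neq 1$ when $\mathrm{rt}(v) = r_2$. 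Thus $w =_{\pi_1(\cG,T)} 1$ if and only if the root of $v$ is $r_1$, which is the \L-complete question. The main obstacle is precisely this direction, and specifically the tension that a spanning tree cannot be computed in $\AC$: the reduction must emit $(\cG,T,w)$ together, with $T$ guaranteed to be a genuine spanning tree by construction rather than by any search. The device above circumvents this by hanging the given parent-pointer forest under a single fresh super-root, which makes $Y$ a tree outright; the entire connectivity content is then transferred into the single word equality $v^2 R^{-1} = 1$, whose evaluation provably forces tracing the tree path from $v$ to its root and hence cannot be carried out below \L. The containment direction, by contrast, is essentially a bookkeeping composition of the logspace tree-path computation with the \TC reduction of \prref{cor:uniWP1}.
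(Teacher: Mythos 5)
Your proposal is correct and follows essentially the same route as the paper: membership in \L by composing the logspace computation of the tree paths $T[\cdot,\cdot]$ with the \TC reduction of \prref{cor:uniWP1}, and hardness by reducing from \L-complete forest accessibility while arranging the underlying graph to be a tree outright so that $T=E(Y)$ is produced in $\AC$ without any search. The only difference is cosmetic: the paper joins the two components directly by a single edge with $\alp=\bet=2$, obtaining $\gen{t}*_{t^2=u^2}\gen{u}$ and the test word $s\oi t$, whereas you hang both roots under a fresh vertex via edges realizing $r_1^2=R=r_2^3$ and test $v^2R^{-1}$; both gadgets work for the same reason (the vertex groups embed into the amalgam, so the two components map to distinguishable elements).
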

Note that the question whether $w =_{\pi_1(\cG,T)} 1$ depends on the spanning tree $T$. For instance assume that $w=\y_1$ consists of a single edge. Then we have $w =_{\pi_1(\cG,T)} 1$ if and only if $\y_1$ is part of the spanning tree. 
Therefore, we require $T$ to be part of the input, although 
by \cite{NisanT95} (together with \cite{Reingold08}), for a given graph a spanning tree can be computed in \L.

\begin{proof}
	Since apart from the computation of the isomorphism $\pi_1(\cG,T) \to \pi_1(\cG,a)$ (which is in \L by \cite{CookM87}), we are in the same situation as in \prref{cor:uniWP1}, it remains to prove the hardness part. 
	
	We reduce the following special version of \emph{Undirected Forest Accessibility} (see \cite{CookM87}) to our problem. The problem receives an undirected forest (\ie an acyclic graph) $\Gamma$ with precisely two connected components and three vertices $s,t,u \in V(\GG)$ as input such that $t$ and $u$ are in two different connected components~-- we may assume that the graph is given as list of tuples $(\y,\sor(\y), \tar(\y), \ov \y)$ representing edges where each tuple has the same bit-length. The question is whether $s$ and $t$ are connected by a path.
	
	In order to obtain an instance for the uniform word problem of GBS groups $(\cG,T, w)$, we take the input forest $\Gamma$ and assign to every edge $\y$ the numbers $\alp_\y = \bet_\y = 1$. That means each tuple $(\y,\sor(\y), \tar(\y), \ov \y)$ has to be replaced by $(\y,\sor(\y), \tar(\y), 1,1,\ov \y)$. As we assumed all tuples to have the same bit-length, this can be done hard-wired in the circuit.
	Finally, we create a new edge $\y_{tu}$ connecting $t$ and $u$ with
	$\alp_{\y_{tu}} = \bet_{\y_{tu}} = 2$. The spanning tree $T$ consists of all edges. The input word is $w=s\oi t$.
	
	Now, $\pi_1(\cG,T)$ is isomorphic to the amalgamated product $\gen{t} *_{t^2=u^2} \gen{u}$ and we have either $s=_{\pi_1(\cG,T)} t$ or $s=_{\pi_1(\cG,T)} u$~-- depending on the connected component of $\Gamma$ in which $s$ lies.
	In particular, $s=_{\pi_1(\cG,T)} t$ if and only if $s$ and $t$ are connected by some path in $\Gamma$.
\end{proof}

\newcommand{\ku}{u}

\section{The Conjugacy Problem}\label{sec:cp}
Decidability of conjugacy in Baumslag-Solitar groups was established by Anshel and Stebe \cite{AnshelS74}.
In \cite{Anshel76conjugacy} this was generalized to the special case of GBS groups where the graph $Y$ consists of only one vertex (\ie an HNN extension with several stable letters).
Later in \cite{Horadam84}, Horadam showed that the conjugacy problem is decidable in GBS groups if there is some constant $c \in \Z$ with $\alp_\y = c$ for all $\y\in E(Y)$. In \cite{HoradamF94}, this was further generalized to some other class of GBS groups which contains the linear GBS groups (with the generalization that they also considered infinite graphs); in \cite{Lockhart92}, Lockhart gave a solution for all GBS groups. Finally, in \cite{Beeker11thesis}, Beeker independently gave a solution of the conjugacy problem in all GBS groups.

Before we start with the solution of the conjugacy problem in GBS groups, we recall some general facts about conjugacy in fundamental groups of graphs of groups.

\subsection{Conjugacy and Graphs of Groups}\label{sec:gogconj}
Let $\cG$ again be an arbitrary graph of groups with graph $Y$ and $\PP \in V(Y)$.

\begin{lemma}\label{lem:path}
	Let $g,h \in\pi_1(\cG,\PP) \leq F(\cG)$. If $g\sim_{F(\cG)} h$, then already $g\sim_{\pi_1\:\!\!(\cG,\PP)} h$.
\end{lemma}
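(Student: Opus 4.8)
The plan is to reduce the statement to a one-line application of a homomorphism, by first upgrading \prref{prop:twofunds} into a retraction of $F(\cG)$ onto the subgroup $\pi_1(\cG,\PP)$. Fix a spanning tree $T$ of $Y$ and let $q\colon F(\cG)\to\pi_1(\cG,T)$ denote the canonical quotient map from \prref{def:fgogP}(ii), i.e.\ the projection $F(\cG)\to \rQuot{F(\cG)}{\set{y=1}{y\in T}}$. By \prref{prop:twofunds} the restriction $\iota=q|_{\pi_1(\cG,\PP)}$ is an isomorphism $\pi_1(\cG,\PP)\tto\pi_1(\cG,T)$. I would therefore set
\[
 r \;=\; \iota^{-1}\circ q\colon\; F(\cG)\longrightarrow \pi_1(\cG,\PP).
\]
Being a composite of homomorphisms, $r$ is a homomorphism, and for every $x\in\pi_1(\cG,\PP)$ we have $r(x)=\iota^{-1}(q(x))=\iota^{-1}(\iota(x))=x$; thus $r$ is a retraction fixing $\pi_1(\cG,\PP)$ pointwise.

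With $r$ in hand the lemma is immediate. Assuming $g\sim_{F(\cG)}h$, say $zg\oi z\eqF h$ for some $z\in F(\cG)$, I would apply the homomorphism $r$ and use $r(g)=g$, $r(h)=h$ (which hold because $g,h\in\pi_1(\cG,\PP)$) to obtain
\[
 r(z)\,g\,\oi{r(z)} \;\eqF\; h .
\]
By construction $r(z)\in\pi_1(\cG,\PP)$, so $z'=r(z)$ is a conjugator lying inside the fundamental group, that is $g\sim_{\pi_1(\cG,\PP)}h$, as required.

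The only real content here is the observation that \prref{prop:twofunds} delivers a \emph{retraction}, not merely an abstract isomorphism; once that is noticed there is no genuine obstacle, since a homomorphism automatically transports conjugacy relations. The single point I would verify carefully is that the restriction of $q$ to $\pi_1(\cG,\PP)$ really is the map asserted to be an isomorphism in \prref{prop:twofunds}: indeed, the ``canonical homomorphism from the subgroup $\pi_1(\cG,\PP)$ to the quotient $\pi_1(\cG,T)$'' is precisely the inclusion followed by $q$, so $\iota$ is exactly $q|_{\pi_1(\cG,\PP)}$. A more hands-on alternative would instead choose a conjugator $z$ of minimal Britton length and analyse the cancellation in $zg\oi z$ using \prref{lem:britton} and \prref{lem:britton2}; there the main difficulty is that an arbitrary $z\in F(\cG)$ need not be a $\cG$-factorization with well-defined endpoints, so one would first have to control its reduced form. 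The retraction argument sidesteps this complication entirely, which is why I would prefer it.
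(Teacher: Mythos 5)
Your proof is correct and is essentially the paper's own argument: the paper likewise composes the projection $F(\cG)\to\pi_1(\cG,T)$ with the canonical isomorphism back to $\pi_1(\cG,\PP)$ and applies this retraction to the conjugator. You merely spell out explicitly the fact (implicit in the paper) that this composite fixes $\pi_1(\cG,\PP)$ pointwise.
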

\begin{proof}
	Let $\phi: F(\cG) \to \pi_1(\cG,T)$ be the projection and $\psi: \pi_1(\cG,T) \to \pi_1(\cG,\PP)$ be the canonical isomorphism. If $z \in F(\cG)$ is a conjugator, then $\psi(\phi(z)) \in \pi_1(\cG,\PP)$ is also a conjugator.
\end{proof}

By \prettyref{lem:path}, instead of testing conjugacy in the fundamental group $\pi_1(\cG,\PP)$, we can test it in the larger group $F(\cG)$. This simplifies the algorithms substantially because for $\cG$-factorizations in $F(\cG)$ there is good notion of cyclically Britton-reduced elements.

Let $w = g_0y_1g_1 \cdots y_ng_{n} \in \Pi(\cG)$. We say that $v$ is a \emph{cyclic permutation} of $w$ if there are $u,u'\in \BSalp^*$ such that $w=uu'$ and $v=u'u$.
A word $w\in \BSalp^*$ is called \emph{cyclically Britton-reduced} if every cyclic permutation of $w$ is Britton-reduced.
That means $w$ is cyclically Britton-reduced if and only if $ww$ is Britton-reduced or $w \in G_a$ for some $a \in V(Y)$.
The following lemma provides a tool to compute cyclically Britton-reduced $\cG$-factorizations.
\begin{lemma}\label{lem:cycred}
	Let $w = g_0y_1g_1 \cdots y_ng_{n}\in \Pi(\cG)$ with $n\geq 1$ be Britton-reduced. Then for $$ y_{\floor{n/2+1}}g_{\floor{n/2+1}} \cdots y_ng_{n}g_0 y_1g_1 \cdots y_{\floor{n/2}}g_{\floor{n/2}} 	\RAS{*}{B_\cG} \hat w,$$ if $\hat w$ is Britton-reduced, then $\hat w$ is cyclically Britton-reduced and $w \sim \hat w$.
\end{lemma}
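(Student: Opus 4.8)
The plan is to prove the two assertions separately. The conjugacy $w \sim \hat w$ is the easy half and does not even use the special ``middle'' cut. Write $w = uu'$ with $u = g_0 y_1 g_1 \cdots y_{\floor{n/2}} g_{\floor{n/2}}$ and $u' = y_{\floor{n/2+1}} g_{\floor{n/2+1}} \cdots y_n g_n$, and call the displayed rotated word $v = u'u$. Then $v = u^{-1} w u$ is a conjugate of $w$ in $F(\cG)$. Since $F(\cG) = \rquot{\BSalp^*}{B_\cG}$, Britton reductions do not change the represented element, so $\hat w \eqF v$, and therefore $\hat w \sim w$. (This argument works for any cut, not just the middle.)

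The cyclic reducedness is the real content, and I would establish it by induction on $n$. The crucial structural observation is that $v$ has \emph{exactly one} place where a Britton rule can possibly apply, namely its centre $y_n\, g_n g_0\, y_1$. Indeed, after the forced multiplication $g_n g_0 \to [g_n g_0]$ there are no two adjacent vertex-group letters, and every pair of consecutive edges other than $(y_n, y_1)$ is a pair $(y_i, y_{i+1})$ that was already consecutive in $w$ and separated there by the single letter $g_i$; as $w$ is Britton-reduced none of these is a pinch. Hence the only candidate for a pinch in $v$ is the centre, and so the reduction of $v$ is essentially forced.

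From here I would distinguish two cases. If the centre does not pinch — i.e.\ it is not the case that $y_1 = \ov y_n$ with $g_n g_0$ lying in the corresponding associated edge subgroup — then $v$ is already Britton-reduced, so $\hat w = v$, and $v$ is cyclically reduced because the unique wrap-around site of $vv$ is the pair $(y_{\floor{n/2}}, y_{\floor{n/2+1}})$, which is again an adjacency inherited from the Britton-reduced $w$, hence not a pinch. If the centre does pinch, then, being the only applicable site, every reduction of $v$ to a Britton-reduced word must begin with the centre pinch $y_n\, g_n g_0\, y_1 \to c^{y_n}$. The resulting word $v_1$ has $n-2$ edges, and an index computation (using $\floor{(n-2)/2} = \floor{n/2}-1$) shows that it is precisely the rotation-to-the-middle of the shorter word $w' = g_1 y_2 g_2 \cdots y_{n-1}\,(g_{n-1} c^{y_n})$, which is still Britton-reduced since all its internal adjacencies are inherited from $w$. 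Applying the induction hypothesis to $w'$, $v_1$, and $\hat w$ (which is a Britton-reduced form of $v_1$, because the first step $v \to v_1$ was forced) then gives that $\hat w$ is cyclically reduced.

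The main obstacle is the bookkeeping in the inductive step: checking carefully, with the floor functions and the parity of $n$, that the centre pinch converts the rotation-to-the-middle of $w$ into the rotation-to-the-middle of $w'$ — so that the two ``middles'' coincide and the nested cyclic cancellations of $w$ appear as nested interior pinches symmetric about $g_n g_0$ — and confirming that $w'$ really is Britton-reduced so the hypothesis applies. The base cases are immediate: for $n=0$ the element $w$ lies in some $G_a$ and is cyclically reduced by definition, and for $n=1$ the centre has no second edge to pinch against (it would require $y_1 = \ov y_1$), so we are always in the first case. Finally, since at every stage the applicable reduction is forced, the clause ``if $\hat w$ is Britton-reduced'' is harmless — the Britton-reduced form $\hat w$ of $v$ is in fact uniquely determined, which also matches the edge-count invariance coming from \prref{lem:britton2}.
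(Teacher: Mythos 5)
Your proof is correct and rests on the same key observation as the paper's: Britton reductions in the rotated word are confined to the centre seam $g_ng_{0}$ and propagate outward symmetrically, so the wrap-around adjacency of $\hat w\hat w$ is either the adjacency $y_{\floor{n/2}}g_{\floor{n/2}}y_{\floor{n/2}+1}$ inherited from the Britton-reduced $w$ or a self-inverse edge, and neither can be a pinch. The paper argues this directly about the final $\hat w$ (noting that $y_{\floor{n/2+1}}$ survives as the first edge and can only be asked to cancel against itself or against $y_{\floor{n/2}}$), whereas you package the same cancellation analysis as an induction peeling off one centre pinch at a time; the mathematical content is the same.
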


\begin{proof}
	It is clear that $w \sim \hat{w}$.
	If $\hat w$ does not contain any $y \in E(Y)$, we are done. In the other case, we have to show that $\hat{w} \hat{w}$ is Britton-reduced. When computing $\hat w$, Britton reductions may only occur in the middle; thus, we know that $y_{\floor{n/2+1}}$ is still present in $\hat w$. If $\hat{w} \hat{w}$ is not Britton-reduced, then the occurrence of $y_{\floor{n/2+1}}$ in the second factor $\hat{w}$ must cancel with something in the first factor. This can be either $y_{\floor{n/2+1}}$ or $y_{\floor{n/2}}$ depending on whether $y_{\floor{n/2}}$ has been canceled when computing $\hat{w}$. However, the first case would mean that $y_{\floor{n/2+1}}$ is self-inverse; the second case is a contradiction to the assumption that $w$ was Britton-reduced.
\end{proof}

\newcommand{\CU}{\cC}

Let $\CU$ denote the union of all $G_y^y$.
The following result is due to Horadam \cite{Horadam81}; it is the main tool for deciding the conjugacy problem. For amalgamated products, it first appeared in \cite{mks66}; the special case for HNN extensions is known as \emph{Collins' Lemma} \cite{Collins69}~-- see also \cite[Thm.\ IV.2.5]{LS01}. 

\begin{theorem}[Conjugacy Criterion, \cite{Horadam81}]\label{thm:collins}
	Let $w \in \Pi(\cG)$ be cyclically Britton-reduced. Then one of the following cases holds:
	\begin{enumerate}
		\item
		There is some $\PP\in V(Y)$ with $w\in G_\PP$ ($w$ is called \emph{elliptic}).
		\begin{enumerate}
			\item\label{coli} If $w\sim_{F(\cG)} \cu$ for some $\cu \in \CU$, then there exists a
			sequence of elements $\cu= \cu_0, \cu_1,\dots, \cu_m \in \CU$ such
			that $\cu_m \sim_{G_\PP} w$ and for every $i$ there is some $b_i\in \BSalp$ with $\cu_i = b_i\cu_{i-1}\ov b_i$.
			
			\item\label{colii} 
			If $w$ is not conjugate to any $\cu \in \CU$ and $w\sim_{F_\cG} v$ for some cyclically Britton-reduced $v$, then $v \in G_\PP$ and $v \sim_{G_\PP} w$. 
		\end{enumerate}
		\item\label{coliii} We have $w \not\in G_\PP$ for any $\PP\in V(Y)$ ($w$ is called \emph{hyperbolic}), \ie $w$ has the form
		$
		w = y_1g_1 \cdots y_ng_{n}
		$
		with $n \geq 1$. If $w$ is conjugate to a cyclically Britton-reduced $\cG$-factorization $v = x_1h_1 \cdots x_mh_{m}$, then $m=n$ and there are $i \in \oneset{1,\dots, n}$ and $\cu \in G_{y_i}^{y_i}\sse \CU$ such that
		$$v\eqF \cu \,y_ig_i \cdots y_ng_n\,y_1g_1\cdots y_{i-1}g_{i-1}\,\oi \cu,$$
		\ie $w$ can be transformed into $v$ by a cyclic permutation followed by a conjugation with an element of $\CU$.
	\end{enumerate}
\end{theorem}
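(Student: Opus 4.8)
The plan is to deduce the criterion from its classical HNN and amalgamated-product versions~-- Collins' Lemma \cite{Collins69} and the criterion of \cite{mks66}~-- by keeping track of the vertices of $Y$ that the conjugating element traverses, following the Britton-reduction bookkeeping of \cite[Thm.\ IV.2.5]{LS01}. First I would settle the elliptic/hyperbolic dichotomy: by \prref{lem:britton}, a cyclically Britton-reduced $w \in \Pi(\cG)$ either has empty underlying path, so that $w$ is a single letter of some $G_\PP$ and is elliptic, or its underlying path has length $n \ge 1$, in which case \prref{lem:britton} forbids $w$ from lying in any vertex group, so $w$ is genuinely hyperbolic; these alternatives are exclusive and exhaustive. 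The central device in both cases is a \emph{minimal conjugator}: given $w \sim_{F(\cG)} v$ with $v$ cyclically Britton-reduced, choose $z \in \Pi(\cG)$ with $v \eqF z\, w\, \oi z$ of minimal Britton length and induct on that length. In the inductive step one forms $z\, w\, \oi z$, performs Britton reductions, and follows how the outermost syllables of $z$ and $\oi z$ meet $w$. The governing fact, from \prref{lem:britton} and the syllable-matching statement \prref{lem:britton2}, is that a cancellation across the interface is possible only when the vertex-group element pushed across an edge $y$ lies in the edge subgroup $G_y^y$, i.e.\ in $\CU$; this single observation separates the three conclusions.

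For the hyperbolic case I would argue that, with $z$ of minimal length, no cancellation internal to $z$ or to $\oi z$ survives, so that every reduction in $z\, w\, \oi z$ occurs at the two interfaces. The leading edge of $z$ must then cancel against an edge at an end of $w$; peeling this cancellation off replaces $w$ by a cyclic permutation and absorbs at most one element of $\CU$, strictly shortening $z$. Iterating, minimality forces $z$ to be a prefix of a power of $w$, so the conjugation is a cyclic permutation followed by a single conjugation with some $\cu \in G_{y_i}^{y_i}$, which is exactly the displayed form; since a Britton reduction on a cyclically reduced word never changes the underlying cyclic path length, one also obtains $m=n$.

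In the elliptic case $w \in G_\PP$. If $w$ is never conjugate into $\CU$, then no conjugator can carry $w$ across any edge~-- doing so would require $w \in G_y^y \subseteq \CU$~-- so the minimal $z$ lies in $G_\PP$ and any cyclically Britton-reduced conjugate $v$ satisfies $v \eqF z\, w\, \oi z \in G_\PP$ with $v \sim_{G_\PP} w$. If instead $w \sim_{F(\cG)} \cu$ for some $\cu \in \CU$, the conjugation is realized by pushing $w$ along a path of edges: each push across an edge $y$ requires the current element to lie in $G_y^y$ and is effected by conjugation with a single letter $b_i$ (an edge letter or a vertex-group element), producing $\cu_i = b_i\cu_{i-1}\ov b_i \in \CU$ and terminating at some $\cu_m \sim_{G_\PP} w$.

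I expect the principal obstacle to be the bookkeeping in the hyperbolic inductive step: showing that minimality of $z$ rules out any interior cancellation and forces all reductions to the two interfaces, so that peeling genuinely shortens $z$. The recurring delicate configuration is the pinch $y\, g\, \ov y$ with $g \in G_y^y$, which is precisely where \prref{lem:britton} permits a further reduction; this pinch drives the single-letter conjugation steps of the first elliptic subcase and is the dividing line between the two elliptic subcases. Finally, it is worth recording the geometric shadow via the Bass--Serre tree: elliptic elements fix a vertex while hyperbolic elements translate along an axis, and the translation length together with the cyclic edge-word read along a fundamental domain of the axis form a complete conjugacy invariant~-- the conceptual content of the hyperbolic conclusion.
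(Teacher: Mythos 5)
The paper offers no proof of this theorem: it is imported as a black box from Horadam \cite{Horadam81}, with its amalgamated-product and HNN ancestors credited to \cite{mks66} and \cite{Collins69}, and only the statement is used later in \prref{sec:cp}. There is therefore no in-paper argument to measure yours against; what can be said is that your sketch is a faithful reconstruction of the classical proof given in those sources (and in \cite[Thm.\ IV.2.5]{LS01} for the HNN case): the elliptic/hyperbolic dichotomy via \prref{lem:britton}, a Britton-reduced conjugator $z$ of minimal length, induction on that length, and the observation that a pinch $y\,g\,\ov y$ is removable only when $g$ lies in the corresponding edge subgroup are exactly the right ingredients, and I see no step that would fail. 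Two points would need tightening in a written-out version. In the elliptic case the obstruction to carrying $w$ across an edge $y$ is not that $w$ itself lies in $G_y^y$ but that some $G_\PP$-conjugate $h w \oi h$ lies in $G_{\ov y}^{\ov y}$; this is still excluded by the hypothesis of (i)(b), and it is precisely why (i)(a) produces a chain $\cu_0,\dots,\cu_m$ conjugated by \emph{single letters} $b_i \in \Del$ that alternate between vertex-group elements and edge letters rather than one push per edge. In the hyperbolic case, ``minimality forces $z$ to be a prefix of a power of $w$'' is looser than needed: the induction should conclude that the peeling terminates with $v \eqF \cu\, w' \oi\cu$ for a cyclic permutation $w'$ of $w$, after which both $m=n$ and the membership $\cu \in G_{y_i}^{y_i}$ follow from a single application of \prref{lem:britton2} to the two Britton-reduced factorizations of $v$, not from a separate ``reductions preserve cyclic path length'' principle. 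These are expository rather than mathematical gaps.
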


\subsection{Conjugacy in GBS groups}\label{sec:cpgbs}
The input for the conjugacy problem are two words $v,w \in \Del^*$. As we have seen in the proof of \prref{thm:wpgbs}, we may assume that $v$ and $w$ are either words representing group elements of the fundamental groups with respect to some spanning tree $\pi_1(\cG,T)$ or $\cG$-factorizations of elements of $\pi_1(\cG,a)$.
In view of \prref{thm:collins}, a first step towards the solution of the conjugacy problem is the computation of cyclically Britton-reduced $\cG$-factorizations. 
By \prref{cor:BredL} we can compute Britton-reduced $\cG$-factorizations in $\AC(F_2)$. Thus, by \prref{lem:cycred}, also cyclically Britton-reduced $\cG$-factorizations can be computed in $\AC(F_2)$. 

Before we start to examine conjugacy, we need a technical lemma:

\begin{lemma}\label{lem:congruencesTC}
	Let $\cP$ be some fixed finite set of prime numbers. The following problem is solvable in \TC: Given $c_i, d_i\in \Z$ (in binary) for $i=0,\dots,n$ such that $d_i$ has only prime factors in $\cP$. Decide whether the system of congruences
	\begin{align*}
		x &\equiv c_i \mod d_i &\text{for } i=0,\dots,n
	\end{align*}
	has a solution.
\end{lemma}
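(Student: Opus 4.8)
The plan is to reduce, via the Chinese Remainder Theorem, the solvability of the system to a finite collection of pairwise divisibility checks, each of which can be carried out in \TC by appealing to Hesse's theorem that \proc{Integer Division} and \proc{Iterated Multiplication} lie in \TC (\prref{thm:divisionTC}). The starting point is the classical criterion that the system $x \equiv c_i \pmod{d_i}$, $i = 0, \dots, n$, has a solution if and only if $c_i \equiv c_j \pmod{\gcd(d_i, d_j)}$ for every pair $i, j$.

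Because every $d_i$ has all its prime factors in the fixed finite set $\cP$, I would write $\abs{d_i} = \prod_{p \in \cP} p^{e_{i,p}}$ with $e_{i,p}$ the $p$-adic valuation of $d_i$; then $\gcd(d_i, d_j) = \prod_{p \in \cP} p^{\min(e_{i,p}, e_{j,p})}$, and a second application of the Chinese Remainder Theorem turns each congruence $c_i \equiv c_j \pmod{\gcd(d_i,d_j)}$ into the conjunction over $p \in \cP$ of $c_i \equiv c_j \pmod{p^{\min(e_{i,p}, e_{j,p})}}$. Hence the system is solvable if and only if
\[
c_i \equiv c_j \pmod{p^{\min(e_{i,p}, e_{j,p})}} \qquad \text{for all } i,j \in \oneset{0,\dots,n} \text{ and all } p \in \cP .
\]
This is a conjunction of only $O(\abs{\cP}\cdot n^2)$ conditions. (For the single-prime subsystems the moduli are totally ordered by divisibility, which is why pairwise compatibility already suffices.)

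Each ingredient is in \TC. The valuations $e_{i,p}$ can be computed in parallel: the relevant exponents are bounded linearly in the bit-length of $d_i$, the powers $p^e$ are obtained by \proc{Iterated Multiplication}, the divisibilities $p^e \mid d_i$ reduce to \proc{Integer Division}, and $e_{i,p}$ is the largest $e$ passing the test, i.e.\ a parallel maximum; since $\cP$ is fixed this is only a constant number of such computations. For each pair and each prime, the modulus $p^{\min(e_{i,p}, e_{j,p})}$ is another iterated product, and the congruence is decided by testing whether it divides $c_i - c_j$, once more an \proc{Integer Division}. The final answer is the unbounded fan-in AND of these polynomially many tests, so the whole procedure stays in \TC.

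The point where the hypotheses really enter---and the main obstacle to a naive approach---is that I cannot compute $\gcd(d_i, d_j)$ directly, since integer gcd is not known to be in \TC. The promise that all $d_i$ are supported on the fixed prime set $\cP$ is exactly what lets me replace the gcd by its prime-by-prime description and reduce everything to valuations, prime powers, and divisibility tests, all of which Hesse's theorem places in \TC.
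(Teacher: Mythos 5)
Your proof is correct and follows essentially the same route as the paper's: both localize at each prime of $\cP$, compute the $p$-adic valuations of the $d_i$ by parallel divisibility tests using Hesse's \TC circuits for \proc{Integer Division} and \proc{Iterated Multiplication}, and reduce solvability to the pairwise checks $c_i \equiv c_j \pmod{p^{\min(e_{i,p},e_{j,p})}}$. The only cosmetic difference is that the paper first collapses each single-prime subsystem to a single congruence with the maximal modulus and then applies the Chinese Remainder Theorem across primes, whereas you start from the general pairwise-$\gcd$ compatibility criterion and then factor it prime by prime; the conditions actually tested by the circuit are identical.
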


\begin{proof}
	Since $\cP$ is finite, the following can be done for all $p \in \cP$ in parallel. 
	Considering only powers of $p$, the system of congruences transforms into 
	\begin{align}
		x &\equiv c_i \mod p^{e_i} &\text{for } i=0,\dots,n \label{eq:congruences}
	\end{align}
	where $e_i$ is maximal such that $p^{e_i}$ divides $d_i$. Such $e_i$ can be determined in \TC by checking whether $p^e$ divides $d_i$ for all $0\leq e \leq \log \abs{ d_i}$ in parallel using \prettyref{thm:divisionTC} for \proc{Integer Division}. 
	If there is some $i\neq j$ with $e_i \leq e_j$ and $c_i \not\equiv c_j \mod p^{e_i}$, then \prettyref{eq:congruences} obviously does not have a solution. Again this can be checked in parallel for all pairs $i,j$. If there is no such pair $i\neq j$, \prettyref{eq:congruences} is equivalent to a single congruence $x \equiv c \mod p^{e}$ where $e = \max_{i\in\{0,\dots,n\}} e_i$ and $c=c_i$ for the respective $i$. 
	
	If \prettyref{eq:congruences} has a solution for all $p\in \cP$, then there is a solution for the original congruence by the Chinese Remainder Theorem.
\end{proof}

\begin{proposition}\label{prop:cycredconj}
	The following problem is in \TC: Given two cyclically Britton-reduced hyperbolic $\cG$-factorizations $v,w\in \Pi(\cG)$ in binary representation, decide whether $v\sim_{F(\cG)} w$.
\end{proposition}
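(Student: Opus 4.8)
The plan is to turn the Conjugacy Criterion into a cyclic system of linear Diophantine equations whose solvability is then decided with \prref{lem:congruencesTC}. Write $v = x_1 h_1 \cdots x_n h_n$ and $w = y_1 g_1 \cdots y_m g_m$ as hyperbolic $\cG$-factorizations (absorbing, via a cyclic permutation, a possible leading vertex element so that both start with an edge); in the GBS setting $g_j = a_j^{k_j}$ and $h_j = b_j^{\ell_j}$ with $a_j,b_j \in V(Y)$. By \prref{thm:collins}, case~\ref{coliii}, we have $v\sim_{F(\cG)} w$ if and only if $m=n$ and there are a cyclic shift $i\in\oneset{1,\dots,n}$ and an element $\cu\in G_{y_i}^{y_i}$ with $v \eqF \cu\, w^{(i)}\, \oi\cu$, where $w^{(i)} = y_i g_i\cdots y_n g_n\, y_1 g_1\cdots y_{i-1}g_{i-1}$. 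First I check $m=n$, and then process the $n$ shifts in parallel. For a fixed $i$, I compare underlying paths and keep the shift only if $x_j = y_{i+j-1}$ for all $j$ (indices mod $n$); this is an \AC test. Write $\eta_j = y_{i+j-1}$ and $\kappa_j = k_{i+j-1}$ for the edges and exponents of $w^{(i)}$.

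Next I translate the condition $v \eqF \cu w^{(i)} \oi\cu$ into arithmetic. Since $G_{y_i}^{y_i} = \gen{a^{\alp_{\eta_1}}}$ with $a = \sor(\eta_1)$, we have $\cu = a^{c\,\alp_{\eta_1}}$ for an integer $c$; using the defining relation $\eta_1\, b^{\bet_{\eta_1}}\,\ov{\eta_1} = a^{\alp_{\eta_1}}$ (with $b=\tar(\eta_1)$), conjugation by $\cu$ pushes $\cu$ through $\eta_1$ and yields a Britton-reduced factorization that differs from $w^{(i)}$ only in its first exponent (raised by $c\,\bet_{\eta_1}$) and its last exponent (lowered by $c\,\alp_{\eta_1}$). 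Applying \prref{lem:britton2} to the two Britton-reduced factorizations $v$ and $\cu w^{(i)}\oi\cu$ (which share the same path) yields integer edge-shift parameters $m_1,\dots,m_n$ relating the exponents. A direct computation shows that, once the conjugator is folded into a cyclic shift by setting $m_{n+1} := m_1 := c$, all these relations coalesce into the single cyclic recursion
\begin{equation*}
m_{j+1}\,\alp_{\eta_{j+1}} - m_j\,\bet_{\eta_j} = \kappa_j - \ell_j, \qquad j = 1,\dots,n,\quad m_{n+1} := m_1 .
\end{equation*}
Hence the shift $i$ witnesses conjugacy if and only if this system has an integer solution $(m_1,\dots,m_n)$.

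To solve it, I telescope the recursion into $m_{j+1} = R_j\, m_1 + T_j$, where $R_j = \prod_{\mu=1}^{j}\bet_{\eta_\mu}/\alp_{\eta_{\mu+1}}$ and $T_j$ is a rational combination of the differences $\kappa_\mu-\ell_\mu$; by \prref{lem:kijtc} and \prref{thm:divisionTC} every $R_j$ and $T_j$ is computable in \TC. The cyclic closure $m_{n+1}=m_1$ becomes $m_1(1-\Delta) = T_n$ with $\Delta = \prod_{\mu=1}^{n}\bet_{\eta_\mu}/\alp_{\eta_\mu}$. If $\Delta\neq 1$, this fixes $m_1$ to a single rational value, and it remains to compute all $m_j$ and check in \TC that each is an integer. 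If $\Delta = 1$, the closure reduces to the rational identity $T_n = 0$ (checkable in \TC) while $m_1$ stays free; then a solution exists iff some integer $m_1$ makes every $m_{j+1} = R_j m_1 + T_j$ integral. Clearing denominators, each integrality requirement is either infeasible or a congruence $m_1 \equiv c_j \bmod d_j$; crucially, as the numbers $\alp_\y,\bet_\y$ are fixed, every modulus $d_j$ divides a product of $\alp_\y$'s and so has all prime factors in the fixed finite set $\cP$ of primes dividing some $\alp_\y$ or $\bet_\y$. Simultaneous solvability is therefore decided in \TC by \prref{lem:congruencesTC}.

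Finally, $v\sim_{F(\cG)} w$ holds iff at least one shift passes both the path test and the solvability test; since there are only $n$ shifts, all run in parallel, and each step is in \TC, the whole procedure is in \TC. I expect the main obstacle to be the bookkeeping of the second step: verifying that the conjugator $\cu$ and the Britton-shift ambiguity of \prref{lem:britton2} merge into a single cyclic recursion, and that its integrality conditions carry only $\cP$-smooth moduli. This smoothness~-- inherited from the fixed edge data~-- is exactly what keeps the congruence system inside \TC via \prref{lem:congruencesTC} rather than only in \L.
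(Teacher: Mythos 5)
Your proposal is correct and follows essentially the same route as the paper: reduce via the Conjugacy Criterion to checking, for each cyclic shift in parallel, whether a linear system derived from \prref{lem:britton2} has an integer solution, telescope it into a single affine relation for the conjugator parameter, and split on whether $\prod_\mu \bet_\mu/\alp_\mu$ equals $1$ (unique rational solution checked by \proc{Integer Division}, versus a system of congruences with $\cP$-smooth moduli handled by \prref{lem:congruencesTC}). Folding the conjugator exponent into the cyclic recursion via $m_{n+1}:=m_1$ is only a cosmetic reparametrization of the paper's variables $x, x_1,\dots,x_n$.
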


\begin{proof}
	By assumption, we are in case \ref{coliii} of the Conjugacy Criterion, \prettyref{thm:collins}.
	Let
	\begin{align*}
		v &= \y_1 a_1^{k_1} \cdots \y_n a_n^{k_n}
	\end{align*}
	be a $\cG$-factorization. By \prettyref{thm:collins} \ref{coliii} we know that if $v$ and $w$ are conjugate, then the underlying path $\y_1\cdots \y_n$ of $v$ is a cyclic permutation of the underlying path of $w$. Since in \TC all these cyclic permutation can be checked in parallel, we may assume that $w$ is of the form
	\begin{align*}
		w &= \y_1 a_1^{\ell_1} \cdots \y_n a_n^{\ell_n}.
	\end{align*}
	and (also by \prettyref{thm:collins} \ref{coliii})
	$$v\sim_{F(\cG)} w \iff \exists\, x\in\Z \text{ such that } a^xva^{-x}\eqF w.$$
	By \prettyref{lem:britton2}, we have $a^xva^{-x}\eqF w$ if and only if there are $x_1, \dots, x_n\in \Z$ such that 
	\begin{align*}
		x - \alp_1 x_1 &= 0, \\
		\bet_i x_i + k_i - \alp_{i+1} x_{i+1} &= \ell_i &\text{for } i=1,\dots,n-1,\\ 
		\bet_n x_n + k_n - x &= \ell_n.
	\end{align*}
	Like in \cite{Horadam84}, these equations imply that it is decidable whether $v$ and $w$ are conjugate. As we aim for a good complexity bound, we have to take a closer look. By solving these equations for $x_{i+1}$, we obtain
	\begin{align*}
		&&x_1 &= \frac{x}{\alp_1}, \\
		&& x_{i+1} &= \frac{k_i - \ell_i+ \bet_i x_i}{\alp_{i+1}} &\text{for } i=1,\dots,n-1,\\ 
		&& x &= k_n - \ell_n + \bet_nx_n .
	\end{align*}
	By induction follows
	\begin{align}
		x_{i} &= \frac{1}{\alp_{i}}\left(x \cdot \prod_{\mu=1}^{i-1}\frac{\bet_\mu}{\alp_\mu} \,+\, \sum_{\nu =1}^{i-1}(k_\nu - \ell_\nu)\cdot\!\!\! \prod_{ \mu = \nu +1 }^{i-1}\frac{\bet_\mu}{\alp_\mu}\right) &\text{for } i=1,\dots,n,\label{eq:yeq}\\ 
		\intertext{and the last equation becomes}
		x &= k_n - \ell_n + x \cdot \prod_{\mu=1}^{n}\frac{\bet_\mu}{\alp_\mu} \,+\, \sum_{\nu =1}^{n-1} (k_\nu - \ell_\nu)\cdot\!\!\! \prod_{\mu = \nu +1 }^{n}\frac{\bet_\mu}{\alp_\mu}. \label{eq:xeq}
	\end{align}
	We distinguish two cases: 
	
	First, assume that \prettyref{eq:xeq} has a unique solution.
	Then, the rational values $x_i$ are also determined uniquely and we have $v\sim_{F(\cG)} w$ if and only if $x$ and the $x_i$ are all integers.
	In this case, we have $\prod_{\mu=1}^{n}\frac{\bet_\mu}{\alp_\mu} \neq 1$ and
	\begin{align*}
		x &= \frac{ \sum_{\nu =1}^{n}(k_\nu - \ell_\nu)\cdot \prod_{\mu = \nu +1 }^{n}\frac{\bet_\mu}{\alp_\mu}}{1-\prod_{\mu=1}^{n}\frac{\bet_\mu}{\alp_\mu}}. 
	\end{align*}
	All occurring numbers are rationals; hence, they can be represented as fractions of binary integers. Since \proc{Iterated Multiplication} is in \TC (\prettyref{thm:divisionTC}), the products can be computed. A common denominator for the sums can be computed by \proc{Iterated Multiplication}, again. Thus, calculating the sum is just \proc{Iterated Addition} (\prettyref{thm:additionTC}). Let $c,d,e,f \in \Z$ be such that $\frac{c}{d} = \sum_{\nu =1}^{n} (k_\nu - \ell_\nu) \cdot \prod_{\mu = \nu +1 }^{n}\frac{\bet_\mu}{\alp_\mu}$ and $\frac{e}{f} = 1-\prod_{\mu=1}^{n}\frac{\bet_\mu}{\alp_\mu}$. In case $x = \frac{cf}{de}$ is an integer, we can determine this by applying Hesse's circuit for \proc{Integer Division} (\prettyref{thm:divisionTC}) to $cf$ and $de$. If $x$ is not an integer, we can notice that by multiplying the result of the division with $de$; if the result is not $cf$, there is no $x$ with $a^xva^{-x}\eqF w$.
	
	If $x$ is an integer, the numbers $x_i$ can be computed in \TC with the same technique, and it can be checked whether $x_i \in \Z$ for all $i$. Thus, we are done with the case that \prettyref{eq:xeq} has a unique solution.
	
	In the second case, we have $\prod_{\mu=1}^{n}\frac{\bet_\mu}{\alp_\mu} = 1$. Then \prettyref{eq:xeq} is equivalent to 
	\begin{align*}
		k_n - \ell_n  \,+\, \sum_{\nu =1}^{n-1} (k_\nu - \ell_\nu) \cdot\!\!\! \prod_{\mu = \nu +1 }^{n}\frac{\bet_\mu}{\alp_\mu} &=0.
	\end{align*}
	Again, this equality can be checked in \TC as before. If the equality does not hold, then there is no $x$ with $a^xva^{-x}\eqF w$. Otherwise, by \prettyref{eq:yeq}, we have $a^xva^{-x}\eqF w$ for $x \in \Z$ if and only if 
	\begin{align*}
		x_i = \frac{1}{\alp_{i}}\left(x \cdot \prod_{\mu=1}^{i-1}\frac{\bet_\mu}{\alp_\mu} \,+\, \sum_{\nu =1}^{i-1} (k_\nu - \ell_\nu) \cdot\!\!\! \prod_{\mu = \nu +1 }^{i-1}\frac{\bet_\mu}{\alp_\mu} \right)&\in \Z &\text{for all } i\in \oneset{1,\dots,n}. 
	\end{align*}
	By solving for $x$, we obtain
	\begin{align}
		x &\in \Z \:\cap\: \bigcap_{i=1}^n \left( \alp_{i} \cdot \prod_{\mu=1}^{i-1}\frac{\alp_\mu}{\bet_\mu}\right)\cdot\left( -\frac{1}{\alp_{i}}\sum_{\nu =1}^{i-1} (k_\nu - \ell_\nu) \cdot\!\!\!\prod_{\mu =\nu +1 }^{i-1}\frac{\bet_\mu}{\alp_\mu} + \Z \right). \label{eq:xconjBG}
	\end{align}
	Let $M\in \Z$ be the product of the denominators of all terms in this intersection and $c_i,d_i\in \Z$ such that 
	\begin{align*}
		\frac{c_i}{M} &= -\left(\prod_{\mu=1}^{i-1}\frac{\alp_\mu}{\bet_\mu}\right) \cdot \sum_{\nu =1}^{i-1} (k_\nu - \ell_\nu) \cdot\!\!\! \prod_{\mu = \nu +1 }^{i-1}\frac{\bet_\mu}{\alp_\mu} &&\text{and} \\
		\frac{d_i}{M} &= \alp_{i} \cdot \prod_{\mu=1}^{i-1}\frac{\alp_\mu}{\bet_\mu} &&\text{for } i=1,\dots,n. 
	\end{align*}
	In addition, we set $c_0 = 0$ and $d_0 = M$. 
	Now, \prettyref{eq:xconjBG} is equivalent to
	\begin{align}
		x &\in \frac{1}{M} \left(c_i + d_i \Z \right) &\text{for } i =  0,\dots, n. \nonumber
		\intertext{We substitute $Mx$ by $z$. Because of the choice of $c_0$ and $d_0$, the existence of an integer solution $x$ is equivalent to the system of congruences}
		z &\equiv c_i \mod d_i  &\text{for } i=0,\dots,n \label{eq:yconjBG}
	\end{align}
	having a solution. Let $\cP$ be the finite set of prime divisors of the $\alp_\y $ and $\bet_\y $ for $\y \in E(Y)$. As $M$ as well as the $d_i$s are products of the $\alp_\y $ and $\bet_\y $, they have only prime factors in $\cP$. Furthermore, as before, the numbers $c_i, d_i$ and $M$ can be computed in \TC. By \prettyref{lem:congruencesTC}, it can be checked in \TC whether \prettyref{eq:yconjBG} has a solution.
\end{proof}

Before we examine the conjugacy problem for elliptic elements in GBS group, we consider the special case of Baumslag-Solitar groups $\BS{p}q$, where the solution is straightforward.
\begin{proposition}\label{prop:cycredconjvertex}
	The following problem is in \TC: Given $v=a^k$ and $w=a^\ell$ with $k,\ell \in \Z$ given in binary, decide whether $v\sim_{\BS{p}q} w$.
\end{proposition}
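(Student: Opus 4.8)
The plan is to reduce the question to a purely number-theoretic reachability problem about the exponents and then verify the resulting condition with Hesse's circuits. Both $v=a^k$ and $w=a^\ell$ are single vertex letters, hence (trivially) cyclically Britton-reduced and elliptic, so we are in case~1 of the Conjugacy Criterion \prref{thm:collins}. As the vertex group $\gen{a}=\Z$ is abelian, subcase \ref{colii} collapses to equality of exponents, while in subcase \ref{coli} the conjugating sequence $\cu_0,\dots,\cu_m$ only ever conjugates by letters $b_i\in\Del$; conjugation by a vertex letter $a^m$ acts trivially, so the only nontrivial moves are conjugations by $y$ and $\oi y$. Using $ya^p\oi y=a^q$ these act on the exponent as the \emph{slides} $m\mapsto\tfrac{q}{p}m$ (legal when $p\mid m$) and $m\mapsto\tfrac{p}{q}m$ (legal when $q\mid m$). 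Hence the first step is to prove the clean characterisation: $a^k\sim_{\BS{p}{q}}a^\ell$ if and only if $k$ and $\ell$ are joined by a finite sequence of legal slides, with the empty sequence covering $k=\ell$, which is the only possibility when no slide applies to either exponent.

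Next I would turn slide-connectivity into an explicit condition. Since each slide multiplies the exponent by $q/p$ or $p/q$, any slide path runs through $k,\;k\tfrac{q}{p},\;k(\tfrac{q}{p})^{2},\dots$, so $a^k\sim a^\ell$ forces $\ell=k(q/p)^{n}$ for some $n\in\Z$, equivalently $\ell\,p^{\,n}=k\,q^{\,n}$ as integers (reading $n<0$ via the reciprocal slide). Let $\cP$ be the fixed finite set of primes dividing $pq$ and $v_r$ the $r$-adic valuation. A candidate exponent is read off as $n=(v_r(\ell)-v_r(k))/(v_r(q)-v_r(p))$ from any prime $r\in\cP$ with $v_r(p)\neq v_r(q)$, after which the rational identity is confirmed by the exact integer equality above; the degenerate case $\abs{p}=\abs{q}$ forces $\ell=\pm k$ and is disposed of by testing $n\in\oneset{0,1}$. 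It then remains to decide which admissible $n$ are actually \emph{reachable} by a legal path; I would show a monotone path always suffices, reducing reachability to the divisibilities guarding the individual steps, namely $p^{\,j+1}\mid k\,q^{\,j}$ for $0\le j<n$ (and the symmetric $q^{\,j+1}\mid k\,p^{\,j}$ for the leftward steps when $n<0$).

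Finally I would check that every ingredient lies in \TC. The valuations $v_r(k),v_r(\ell)$ and the $\cP$-free parts are obtained by testing, in parallel over all exponents $e\le\log\abs{k}$, whether $r^{e}\mid k$, which is \proc{Integer Division} and hence in \TC by \prref{thm:divisionTC}; the candidate $n$ is then a single quotient. The identity $\ell p^{n}=kq^{n}$ is verified via \proc{Iterated Multiplication} (\prref{thm:divisionTC}), and the at most linearly many step-divisibilities are tested independently in parallel, again by \proc{Integer Division}; equivalently, these integrality constraints can be packaged as a system of congruences modulo powers of the primes in the fixed set $\cP$ and decided through \prref{lem:congruencesTC}. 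Assembling these parallel tests yields a \TC decision procedure. I expect the main obstacle to be not the complexity bookkeeping but pinning down the slide-reachability characterisation precisely: one must argue that the reachable exponents form a single interval along the line $k(q/p)^{\Z}$, so that a monotone path exists whenever $\ell$ is reachable at all and the guard divisibilities are exactly the obstructions. This is where the asymmetric legality conditions ($p\mid\cdot$ for one direction, $q\mid\cdot$ for the other) must be reconciled, foreshadowing the general elliptic case handled via \prref{lem:congruencesTC}.
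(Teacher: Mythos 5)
Your proposal is correct and follows essentially the same route as the paper: reduce via the Conjugacy Criterion to the slide moves $m\mapsto (q/p)^{\pm1}m$ on exponents, characterize conjugacy by $\ell=k\cdot(q/p)^{j}$ together with divisibility guards, and discharge all the arithmetic with Hesse's \TC circuits for \proc{Iterated Multiplication} and \proc{Integer Division}. The only (harmless) differences are that the paper locates $j$ by testing all $\Oh(\log\max(\abs{k},\abs{\ell}))$ candidates in parallel and checks only the two endpoint conditions $k\in p\Z$, $\ell\in q\Z$ (which, prime by prime, already imply your intermediate guards $p^{j+1}\mid kq^{j}$, since the binding constraint of the linear valuation inequality sits at an endpoint), whereas you compute $j$ from valuations and verify every guard explicitly.
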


\begin{proof}
	We are in case \ref{coli} or \ref{colii} of the Conjugacy Criterion, \prettyref{thm:collins}. Since a conjugation with $a$ has no effect and a conjugation with $\y^{\pm 1}$ multiplies the exponent by $\frac{p}{q}$ resp.\ $\frac{q}{p}$, we have
	\begin{align*}a^k \sim a^\ell \iff \exists\, j\in\Z \text{ such that } k\cdot\left(\frac{q}{p}\right)^j =\ell
		&\text{ and }\begin{cases} k \in p\Z,\, \ell \in q\Z, &\text{if } j > 0,\\
			k \in q\Z,\, \ell \in p\Z,&\text{if } j < 0.
		\end{cases} 
	\end{align*}
	Since we have $\abs{j} \leq \log_{\abs{q/p}}\max\oneset{\abs{k},\abs{\ell}}$ if such $j$ exists, only polynomially many (in the input size) values for $j$ need to be tested, what can be done in parallel. As \proc{Iterated Multiplication} and \proc{Integer Division} are in \TC (\cite{hesse01, HeAlBa02}, see \prettyref{thm:divisionTC}), we have concluded the proof of \prettyref{prop:cycredconjvertex}.
\end{proof}

Thus, for ordinary Baumslag-Solitar groups, we have solved the conjugacy problem completely by combining \prettyref{cor:bsbred} with \prettyref{prop:cycredconj} and \prettyref{prop:cycredconjvertex}. 
	
For arbitrary GBS groups, it remains to examine elliptic elements (cases \ref{coli} and \ref{colii} of \prettyref{thm:collins}). 
We follow the ideas of Anshel \cite{Anshel76conjugacy, Anshel76powers, Anshel76decision} in order to describe a $\AC(F_2)$ solution to the conjugacy problem in this case.

Let $v =a^k$, $w=b^\ell$ for some $a,b \in V(Y)$, $k,\ell \in \Z$. By \prettyref{thm:collins}, we know that 
$ v\sim_{F(\cG)} w$ if and only if there is some $z =a_0^{k_0} \y_1 a_1^{k_1} \cdots \y_n a_n^{k_n} \in \Pi(\cG, b,a)$ 
such that $z a^k z^{-1} \eqF w$ and 
\begin{align*}
	\y_i a_i^{k_i} \cdots \y_n\, a_n^{k_n} \cdot a^{k} \cdot a_n^{-k_n} \ov \y_n \cdots a_i^{-k_i} \ov \y_i &\in G_{\y_i}^{\y_i} &\text{for all } i.
\end{align*}
Since a conjugation with $a_i$ has no effect on elements of $G_{a_i} = \gen{a_i}$, we may assume that $z = \y_1 \cdots \y_n$ if $v$ and $w$ are conjugate.

Let $\cP = \oneset{p_1,\dots,p_m}$ as before be the set of prime divisors occurring in the $\alp_\y $ for $\y \in E(Y)$. Here and in what follows, we treat $-1$ as a prime number. Let
\begin{align}
	k &= r_k\cdot \prod_{i=1}^m p_i^{e_i(k)} , & \ell &= r_\ell\cdot\prod_{i=1}^m p_i^{e_i(\ell)} ,\label{eq:factorization}
\end{align}
such that $r_k,r_\ell > 0$ are not divisible by any $p \in \cP \setminus \oneset{-1}$.
The numbers $r_k,r_\ell$ and the exponents $e_i(k),e_i(\ell)$ can be computed in \TC as before by checking for all $p \in \cP$ and $e \leq \log \abs{k}$ (or, more precisely, for all $e$ at most the number of bits used to represent $k$) in parallel whether $p^e$ divides $k$ using Hesse's \TC circuit for \proc{Integer Division}, \prettyref{thm:divisionTC} (for $p=-1$, it has to be checked whether $k>0$)~-- and likewise for $\ell$. 
If $v \sim_{F(\cG)} w$, then $r_k=r_\ell$. Hence, all the information it remains to consider is given by the the vectors $(e_1(k), \dots , e_m(k)), (e_1(\ell), \dots , e_m(\ell)) \in \N^m$ and the vertices $a,b \in V(Y)$.
In order to code also the vertices as vectors, we consider vectors in $\N^m \times \N^{V(Y)}$ where a vertex $a$ is encoded by the unit vector $\vec u_a \in \N^{V(Y)}$ (which has a $1$ at position $a$ and $0$ otherwise).

Let us define an equivalence relation on $\N^m \times \N^{V(Y)}$ which reflects conjugacy in $F(\cG)$. For $\vec e = (e_1,\dots ,e_m,\vec u_a),\, \vec f = (f_1,\dots, f_m, \vec u_b) \in \N^m\times \N ^{V(Y)}$ with arbitrary $(e_1,\dots ,e_m)$, $(f_1,\dots, f_m)\in \N^m$ and $ a, b \in V(Y)$, we define $\vec e\sim \vec f$ if 
$$ a^{\prod_{i=1}^m p_i^{e_i}} \sim_{F(\cG)} b^{\:\!\prod_{i=1}^m p_i^{f_i}};$$
for $\vec e = (e_1,\dots ,e_m,\vec e'), \vec f = (f_1,\dots, f_m, \vec f') \in \N^m\times \N ^{V(Y)}$ with $\vec e'$ and $\vec f'$ not being zero nor a unit vector, we define $\vec e\sim \vec f$ regardless what the $e_i,f_i$ are. As an immediate consequence of this definition, we have

\begin{lemma}\label{lem:reductionCPtocomMon}
	Let $a,b \in V(Y)$, $k,\ell \in \Z$. Then $a^k\sim_{F(\cG)}b^\ell$ if and only if $$r_k = r_\ell \text{ and }(e_1(k), \dots , e_m(k),\vec u_a)\sim (e_1(\ell), \dots , e_m(\ell),\vec u_b).$$
\end{lemma}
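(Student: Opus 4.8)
The plan is to prove the lemma by unwinding the definition of $\sim$ and reducing it to the single equivalence
\[
a^k \sim_{F(\cG)} b^\ell \iff r_k = r_\ell \ \text{ and }\ a^{k/r_k} \sim_{F(\cG)} b^{\ell/r_\ell}.
\]
Since $a,b \in V(Y)$, the last components $\vec u_a,\vec u_b$ of the two vectors are unit vectors, so only the first clause of the definition of $\sim$ applies: the relation $(e_1(k),\dots,e_m(k),\vec u_a)\sim(e_1(\ell),\dots,e_m(\ell),\vec u_b)$ holds exactly when $a^{\prod_{i=1}^m p_i^{e_i(k)}}\sim_{F(\cG)} b^{\prod_{i=1}^m p_i^{e_i(\ell)}}$, that is, when $a^{k/r_k}\sim_{F(\cG)} b^{\ell/r_\ell}$ (recall $k/r_k=\prod_{i=1}^m p_i^{e_i(k)}$). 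Thus the displayed equivalence is all that remains.

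First I would dispatch the direction $(\Leftarrow)$. Writing $r=r_k=r_\ell\ge 1$, a conjugator $z$ realising $z\,a^{k/r}\,\oi z \eqF b^{\ell/r}$ gives, after raising to the $r$-th power and using that conjugation distributes over products,
\[
z\,a^{k}\,\oi z = \bigl(z\,a^{k/r}\,\oi z\bigr)^{r} \eqF \bigl(b^{\ell/r}\bigr)^{r} = b^{\ell},
\]
so $a^k\sim_{F(\cG)} b^\ell$.

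For the converse $(\Rightarrow)$, assume $a^k\sim_{F(\cG)} b^\ell$. That $r_k=r_\ell$ is the invariance already noted before the lemma: every edge crossing multiplies the exponent of a vertex element by a ratio $\bet_\y/\alp_\y$ whose prime factors lie in $\cP$, so the positive, $\cP$-free part of the exponent is preserved. Put $r=r_k=r_\ell$. By the reduction preceding the lemma (resting on \prref{thm:collins} and \prref{lem:britton2}) the conjugator may be taken to be a path $z=\y_1\cdots\y_n$; crossing $\y_i$ is admissible only if the current exponent is divisible by $\alp_{\y_i}$, and it then multiplies the exponent by $\bet_{\y_i}/\alp_{\y_i}$. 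Running this path on $a^k$ yields integers $k=k^{(0)},k^{(1)},\dots,k^{(n)}=\ell$ with $\alp_{\y_i}\mid k^{(i-1)}$ and $k^{(i)}=(\bet_{\y_i}/\alp_{\y_i})\,k^{(i-1)}$. I would then run the \emph{same} path on $a^{k/r}$, whose successive exponents are $k^{(i)}/r$: since $k^{(i)}=r\cdot\bigl(k/r\bigr)\prod_{j\le i}\bet_{\y_j}/\alp_{\y_j}$ is $r$ times a $\cP$-rational and $r$ shares no prime factor with any $\alp_\y$ or $\bet_\y$, each $k^{(i)}/r$ is again an integer; and as $\gcd(r,\alp_{\y_i})=1$, the constraint $\alp_{\y_i}\mid k^{(i-1)}$ is equivalent to $\alp_{\y_i}\mid k^{(i-1)}/r$. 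Hence every crossing stays admissible, the multipliers are unchanged, and the path sends $a^{k/r}$ to $b^{\ell/r}$, giving $a^{k/r}\sim_{F(\cG)} b^{\ell/r}$.

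The main obstacle is exactly this coprimality bookkeeping in $(\Rightarrow)$: one must simultaneously verify that dividing every intermediate exponent by the common factor $r$ keeps it integral and that it preserves each divisibility constraint $\alp_{\y_i}\mid k^{(i-1)}$. Both reduce to the single arithmetic fact that $r=r_k=r_\ell$ is coprime to each $\alp_\y$ and $\bet_\y$, which holds by construction because $r$ is coprime to every prime of $\cP$ while each $\alp_\y,\bet_\y$ is a product of primes from $\cP$. Granting this, the two directions together with the definitional reduction of the first paragraph complete the proof.
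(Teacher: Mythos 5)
Your proof is correct and matches the paper's intended reasoning: the paper states this lemma without proof, declaring it an immediate consequence of the definition of $\sim$ (which, for unit last components, literally says $(e_1(k),\dots,e_m(k),\vec u_a)\sim(e_1(\ell),\dots,e_m(\ell),\vec u_b)$ if and only if $a^{k/r_k}\sim_{F(\cG)}b^{\ell/r_\ell}$), combined with the invariance of $r_k$ under conjugation asserted just before it. Your two directions~--- raising a conjugator of $a^{k/r}$ to the $r$-th power, and transferring the conjugating path from $a^k$ to $a^{k/r}$ via the coprimality of $r$ with every $\alp_\y$ and $\bet_\y$~--- supply exactly the verification the paper leaves implicit.
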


The numbers $e_i(k), e_i(\ell)$ of \prettyref{eq:factorization} are bounded by a linear function in the input size. In particular, we have a \TC-many-one reduction from the question whether $a^k\sim_{F(\cG)}b^\ell$ to the question whether $(e_1(k), \dots , e_m(k),\vec u_a)\sim (e_1(\ell), \dots , e_m(\ell),\vec u_b) $ where the numbers $e_i(k),e_i(\ell)$ are represented in unary.
Thus, we aim for a $\AC(F_2)$ circuit to decide whether $\vec e\sim \vec f$ for vectors $\vec e, \vec f \in \N^m \times \N ^{V(Y)}$.
This can be achieved by using the following crucial observation, which is another immediate consequence of the definition of $\sim$.

\begin{lemma}\label{lem:congruence}
	If $\vec e \sim \vec f$, then also $\vec e + \vec g \sim \vec f + \vec g$ for all $g \in \N^m \times \N ^{V(Y)}$. In particular, $\sim$ defines a congruence on $\N^m \times \N ^{V(Y)}$.
\end{lemma}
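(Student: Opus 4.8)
The plan is to reduce the claim to a single elementary fact about conjugation, after a short case analysis on the ``shape'' of the last components of $\vec e$ and $\vec f$. Write $\vec e = (e_1,\dots,e_m,\vec e')$, $\vec f = (f_1,\dots,f_m,\vec f')$ and $\vec g = (g_1,\dots,g_m,\vec g')$ with $\vec e',\vec f',\vec g' \in \N^{V(Y)}$, so that $\vec e + \vec g = (e_1+g_1,\dots,e_m+g_m,\vec e'+\vec g')$ and similarly for $\vec f + \vec g$. Since $\vec e \sim \vec f$ can hold only through one of the two clauses of the definition of $\sim$, I distinguish: either (I) $\vec e' = \vec u_a$ and $\vec f' = \vec u_b$ are unit vectors, or (II) both $\vec e'$ and $\vec f'$ are neither zero nor a unit vector (no mixed type can occur, since neither clause covers it).

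First I would dispatch the easy clause (II). Because $\vec g' \geq 0$ componentwise, the total weight of $\vec e'+\vec g'$ is at least that of $\vec e'$, hence at least $2$; so $\vec e'+\vec g'$ is again neither zero nor a unit vector, and likewise for $\vec f'+\vec g'$. Thus clause (II) applies verbatim to $\vec e+\vec g$ and $\vec f+\vec g$, giving $\vec e+\vec g \sim \vec f+\vec g$ irrespective of the numerical entries. This same weight‑monotonicity also settles the subcase of (I) in which $\vec g' \neq 0$: then $\vec u_a+\vec g'$ and $\vec u_b+\vec g'$ both have weight $\geq 2$, so they are neither zero nor unit vectors, and clause (II) again yields the conclusion.

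The only remaining subcase — the heart of the matter — is clause (I) with $\vec g' = 0$. Here the last components stay $\vec u_a,\vec u_b$, so by definition $\vec e+\vec g \sim \vec f+\vec g$ amounts to $a^{K'} \sim_{F(\cG)} b^{L'}$, where $K' = \prod_{i=1}^m p_i^{e_i+g_i} = P\cdot K$ and $L' = \prod_{i=1}^m p_i^{f_i+g_i} = P\cdot L$, with $K = \prod_i p_i^{e_i}$, $L = \prod_i p_i^{f_i}$ and $P = \prod_{i=1}^m p_i^{g_i}$ a nonzero integer. By hypothesis $a^K \sim_{F(\cG)} b^L$, so there is a conjugator $z \in F(\cG)$ with $z\, a^K\, \oi z \eqF b^L$. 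The key point is that conjugation by $z$ is a group automorphism of $F(\cG)$ and therefore commutes with taking powers:
\[
 z\, a^{PK}\, \oi z = z\,(a^{K})^{P}\, \oi z = \bigl(z\, a^{K}\, \oi z\bigr)^{P} \eqF (b^{L})^{P} = b^{PL}.
\]
Hence $a^{K'} \sim_{F(\cG)} b^{L'}$, that is $\vec e+\vec g \sim \vec f+\vec g$. Note that this identity is valid for every integer $P$; the special form of $P$ as a product of primes of $\cP$ (with $-1$ allowed) is not needed, though it is of course automatically satisfied.

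Combining the three cases establishes the displayed implication, which is exactly compatibility of $\sim$ with $+$. Together with the fact that $\sim$ is an equivalence relation — reflexivity and symmetry are immediate, and transitivity follows from transitivity of $\sim_{F(\cG)}$ within clause (I) and from the single‑class structure of clause (II), the two clauses never interacting since a middle vector cannot simultaneously be a unit vector and have weight $\geq 2$ — this shows that $\sim$ is a congruence on $\N^m \times \N^{V(Y)}$. I expect the only genuinely delicate point to be the bookkeeping at the boundary between the two defining clauses, namely recognising that adding a nonzero $\vec g'$ always pushes a pair of unit vectors into the ``absorbing'' clause (II); the multiplicative content of the lemma, by contrast, collapses to the one‑line power identity above.
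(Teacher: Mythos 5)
Your proof is correct and is precisely the verification that the paper leaves implicit: Lemma~\ref{lem:congruence} is presented there as an ``immediate consequence of the definition of $\sim$'' with no written argument, and the intended content is exactly your two observations~--- adding a nonzero vertex component pushes both sides into the absorbing second clause, while adding $\vec g$ with $\vec g'=0$ multiplies both exponents by the same $P$, so the identity $z\,a^{PK}\oi z=(z\,a^{K}\oi z)^P$ transports the conjugator. Nothing further is needed.
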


Thus, $\rquot{\left(\N^m \times \N ^{V(Y)}\right)}{\sim}$ is a commutative monoid and it remains to solve the word problem of this monoid. Malcev \cite{Malcev58} and Emelichev \cite{Emelichev58} showed that the word problem for finitely generated commutative monoids is decidable~-- even if the congruence is part of the input. 

In \cite[Thm.\ II]{es69}, Eilenberg and Sch\"utzenberger showed that every congruence on $\N^M$ is a semilinear subset of $\N^M \times \N^M$ (this follows also from the results \cite{Taiclin68}, that congruences are definable by Presburger formulas, and \cite{gs66}, that Presburger definable sets are semilinear~-- for definition of all these notions we refer to the respective papers).
In \cite[Thm.\ 1]{IbarraJCR91}, Ibarra, Jiang, Chang, and
Ravikumar showed that membership in a fixed semilinear set can be decided in uniform $\NC^1$. As the word problem of $F_2$ is hard for uniform $\NC^1$ under \AC reductions \cite{Robinson93phd}, this means that for every fixed congruence ${\sim} \sse \N^M \times \N^M$, on input of $u,v \in \N^M$, it can be decided in $\AC(F_2)$ whether $u \sim v$.	

Thus, by \prref{lem:reductionCPtocomMon}, it can be decided in $\AC(F_2)$ whether $a^k \sim_{F(\cG)}b^\ell$ for $a,b \in V(Y)$, $k,\ell \in \Z$. 
Now, we can combine this result with \prettyref{cor:bsbred} (calculation of Britton-reduced $\cG$-factorizations) and \prettyref{prop:cycredconj} (solution to conjugacy in the hyperbolic case) and we obtain a proof of the main result on conjugacy, \prref{thm:cp}.

\begin{theorem}\label{thm:conjGBS}
	Let $G$ be a generalized Baumslag-Solitar group.
	Then the conjugacy problem of $G$ is in $\AC(F_2)$.
\end{theorem}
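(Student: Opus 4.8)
The plan is to assemble the tools built up in this section; by the time we reach this statement almost all of the work is done, and the proof becomes a case analysis steered by the Conjugacy Criterion, \prref{thm:collins}.

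First I would use \prref{lem:path} to replace conjugacy in $G = \pi_1(\cG,\PP)$ by conjugacy in the ambient group $F(\cG)$, so that on input $v, w \in \Del^*$ it suffices to decide $v \sim_{F(\cG)} w$. Using \prref{cor:bsbred} I would compute Britton-reduced $\cG$-factorizations of $v$ and $w$ in $\AC(F_2)$ (this covers both the case where the inputs are $\cG$-factorizations and where they represent elements of $\pi_1(\cG,T)$), and then \prref{lem:cycred} to turn these into cyclically Britton-reduced $\cG$-factorizations. Since \prref{lem:cycred} only requires a further bounded Britton reduction in the middle of the rotated word, this step also stays in $\AC(F_2)$.

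Next I would appeal to \prref{thm:collins}. A purely syntactic inspection tells whether each cyclically Britton-reduced factorization is \emph{elliptic} (a single vertex letter $a^k$) or \emph{hyperbolic} (contains at least one edge). Because the criterion shows that an elliptic element is conjugate only to elliptic elements, and a hyperbolic element only to hyperbolic ones of the same underlying length, I would immediately answer ``not conjugate'' when the two types disagree. In the hyperbolic case (case \ref{coliii}) I would invoke \prref{prop:cycredconj}, which decides $v \sim_{F(\cG)} w$ in \TC $\sse \AC(F_2)$, already handling in parallel all cyclic permutations of the underlying path. In the elliptic case ($v = a^k$, $w = b^\ell$, cases \ref{coli}--\ref{colii}) I would apply \prref{lem:reductionCPtocomMon}: compute in \TC the $\cP$-free parts $r_k, r_\ell$ together with the prime-exponent vectors, reject unless $r_k = r_\ell$, and otherwise reduce to deciding whether $(e_1(k),\dots,e_m(k),\vec u_a) \sim (e_1(\ell),\dots,e_m(\ell),\vec u_b)$ in the commutative monoid $\N^m \times \N^{V(Y)}$.

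The hard part is this last congruence test, which is where the real content sits. By \prref{lem:congruence} the relation $\sim$ is a fixed congruence, hence a semilinear subset of $(\N^m \times \N^{V(Y)})^2$ by Eilenberg--Sch\"utzenberger, and membership in a fixed semilinear set lies in uniform $\NC^1 \sse \AC(F_2)$ by the result of Ibarra et al.\ cited above. Because the exponent vectors have entries bounded linearly in the input length they may be presented in unary, and all the surrounding preprocessing is in $\AC(F_2)$; composing an $\AC(F_2)$ reduction with an $\AC(F_2)$ oracle computation stays in $\AC(F_2)$, which yields the theorem.
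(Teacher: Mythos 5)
Your proposal is correct and follows essentially the same route as the paper: reduce to conjugacy in $F(\cG)$ via \prref{lem:path}, compute cyclically Britton-reduced $\cG$-factorizations using \prref{cor:bsbred} and \prref{lem:cycred}, split into hyperbolic and elliptic cases via \prref{thm:collins}, handle the hyperbolic case with \prref{prop:cycredconj}, and handle the elliptic case through \prref{lem:reductionCPtocomMon} together with the Eilenberg--Sch\"utzenberger semilinearity of the congruence and the $\NC^1$ membership test of Ibarra et al. All the individual steps and their complexity accounting match the paper's argument.
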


\subsection{The Uniform Conjugacy Problem}\label{sec:uniCP}
In \prettyref{sec:uniWP}, we have seen that the uniform version of the word problem for GBS groups was essentially as difficult as the word problem for a fixed GBS group. For conjugacy this picture changes dramatically. Like for the word problem in \prettyref{sec:uniWP}, the \emph{uniform conjugacy problem} for GBS groups receives as input a graph of groups $\cG$ consisting of a finite graph $Y$ and numbers $\alp_\y ,\bet_\y \in \Z\setminus\oneset{0}$ for $\y\in E(Y)$ and two $\cG$-factorizations $v,w \in \Delta^*$, where as before $\Del = E(Y) \cup \set{a^k}{a \in V(Y), k \in \Z}$. The question is whether $v \sim_{F(\cG)} w$ (what by \prref{lem:path} is equivalent to conjugacy in the fundamental group with respect to a base point).

In \cite{AnshelM83}, Anshel and McAloon considered a special (more difficult) variant of the uniform conjugacy problem; they showed that the so-called finite special equality problem for some GBS groups is decidable but not primitive recursive. However, they did not consider the uniform conjugacy problem.
By following the ideas for the non-uniform case (which themselves are based on Anshel's work \cite{Anshel76conjugacy, Anshel76powers, Anshel76decision}), we obtain a precise complexity estimate for the uniform conjugacy problem.

\begin{theorem}\label{thm:uniformconjugacy}
	The uniform conjugacy problem for GBS groups is \EXPSPACE-complete~-- even if the numbers $\alp_\y , \bet_\y $ are given in unary. 
\end{theorem}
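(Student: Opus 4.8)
The plan is to pin the complexity of the uniform conjugacy problem exactly to the \emph{uniform word problem for finitely generated commutative semigroups}, which is \EXPSPACE-complete by the classical theorem of Mayr and Meyer~\cite{MayrMeyer82}. The decisive observation is that the reduction of elliptic conjugacy to a congruence question on $\N^m\times\N^{V(Y)}$ from \prettyref{lem:reductionCPtocomMon} is, in the uniform setting, nothing but an instance of this word problem~-- and all the remaining parts of the conjugacy algorithm (Britton reduction, the hyperbolic case) stay within polynomial time, hence well inside \EXPSPACE.

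For the upper bound, I first compute cyclically Britton-reduced $\cG$-factorizations of the two inputs $v,w$ as in \prettyref{cor:bsbred}; using \prettyref{lem:horocycl} one checks that all exponents occurring during the reduction keep polynomial bit-length (each is a sum of products of the ratios $\alp_\y/\bet_\y$ over a path), so this step is polynomial time. By \prettyref{thm:collins}, $v$ and $w$ can be conjugate only if both are hyperbolic or both elliptic. The hyperbolic case is solved exactly as in \prettyref{prop:cycredconj}: one runs through all cyclic permutations of $v$ and, for each, solves the resulting linear equation and system of congruences over the prime set $\cP$; since all numbers stay of polynomial size this is again polynomial time. In the elliptic case \prettyref{lem:reductionCPtocomMon} reduces the question to deciding a congruence in the commutative monoid $\rquot{(\N^m\times\N^{V(Y)})}{\sim}$, whose presentation~-- generators $\cP\cup V(Y)$ and one relation $(\vec e(\alp_\y),\vec u_{\sor(\y)})\sim(\vec e(\bet_\y),\vec u_{\tar(\y)})$ per edge~-- is now part of the input and is of polynomial size (the exponents $\vec e(\alp_\y)$ and the factorizations are computable in polynomial time over the primes in $\cP$). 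This is an instance of the uniform commutative word problem, so the whole procedure runs in \EXPSPACE.

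For hardness I reduce the uniform word problem for commutative semigroups to the uniform conjugacy problem, taking care that the numbers $\alp_\y,\bet_\y$ stay polynomial in \emph{unary}. Given a presentation $\Gen{g_1,\dots,g_s}{\vec u_j=\vec v_j}$ and two query words, I first \emph{normalize} by Tietze transformations: introducing fresh generators for partial products (via repeated squaring, so $z=g_ig_j$ or $z=g_i^2$) I rewrite every relation so that both of its sides are products of at most two generators, and I replace the two query words by two fresh generators $W_1,W_2$ defined by such relations. This is polynomial even when the original exponents are given in binary, and it is a conservative extension of the original semigroup. Now let $\cG$ have a single vertex $a$ with one loop $\y$ per relation: the $i$-th generator is assigned the $i$-th prime $p_i$, and the relation $\vec u=\vec v$ becomes a loop with $\alp_\y=\prod_i p_i^{u_i}$, $\bet_\y=\prod_i p_i^{v_i}$. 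Since after normalization each side has bounded degree, every $\alp_\y,\bet_\y$ is a product of at most two primes $p_i\le p_s=O(s\log s)$, hence polynomially bounded and representable in unary. The instance asks whether $a^{p_{W_1}}\sim_{F(\cG)}a^{p_{W_2}}$, an elliptic conjugacy with single-prime exponents.

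It remains to verify the correctness of this reduction, which I expect to be the main technical point. Over the single vertex $a$, case \ref{coli} of \prettyref{thm:collins} says that two elliptic elements are conjugate in $F(\cG)$ precisely when one can be transformed into the other by a sequence of edge moves $\cu_i=b_i\cu_{i-1}\ov b_i$; on the level of prime-exponent vectors each such move replaces a copy of $\vec e(\alp_\y)=\vec u_j$ by $\vec e(\bet_\y)=\vec v_j$ (or conversely), so the resulting congruence on $\N^{\cP}$ is exactly the commutative-semigroup congruence generated by the $\vec u_j=\vec v_j$. Together with \prettyref{lem:reductionCPtocomMon} (note $r_{p_{W_1}}=r_{p_{W_2}}=1$), this gives $a^{p_{W_1}}\sim_{F(\cG)}a^{p_{W_2}}$ iff $\vec u_{W_1}\sim\vec u_{W_2}$ iff $W_1=W_2$ in the semigroup iff the two original words are equal. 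The two delicate points are (i) arranging the normalization so that it preserves the presented congruence while forcing every $\alp_\y,\bet_\y$ to be a product of only a constant number of small primes, and (ii) justifying that the conjugacy congruence of elliptic elements is generated by the edge relations \emph{alone} and introduces no spurious identifications~-- both of which follow from \prettyref{thm:collins} and \prettyref{lem:path}.
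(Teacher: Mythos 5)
Your proposal is correct and follows the same overall route as the paper: both directions are pinned to the uniform word problem for finitely presented commutative semigroups (\prettyref{thm:uniformwpcommon}), with the upper bound obtained from polynomial-size cyclic Britton reduction, the polynomial-time hyperbolic test of \prettyref{prop:cycredconj}, and the elliptic-to-commutative-monoid translation of \prettyref{lem:reductionCPtocomMon}. The one genuine difference is in the hardness direction, where you must keep the numbers $\alp_\y,\bet_\y$ polynomially bounded so that unary representation suffices: the paper achieves this by observing that the \EXPSPACE-hard instances produced in \cite{mm82} already have relation vectors with at most four non-zero entries, each at most $2$, and it encodes the query words $e,f$ directly as exponents $k,\ell$; you instead normalize an arbitrary commutative semigroup presentation to degree-two relations by Tietze transformations (fresh generators for partial products, plus fresh generators $W_1,W_2$ for the query words, making the query exponents single primes). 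Your normalization is a conservative extension, so the reduction is sound, and it has the advantage of not relying on inspecting the structure of the Mayr--Meyer construction; the price is that it is a polynomial-time rather than a logspace reduction (still adequate for \EXPSPACE-hardness) and requires the small extra argument that the degree-two normal form preserves the presented congruence. Your point (ii)~-- that elliptic conjugacy in $F(\cG)$ is exactly the congruence generated by the edge relations on prime-exponent vectors, with the cofactor $r_k$ preserved~-- is indeed the crux, and it is correctly justified by case (i) of the conjugacy criterion exactly as the paper implicitly does.
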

This concludes the proof of \prref{thm:uni}. The proof of \prettyref{thm:uniformconjugacy}  is an application of the next theorem by Cardoza, Lipton and Meyer \cite{CardozaLM76} resp.\ Mayr and Meyer \cite{mm82}. 
\begin{theorem}[\cite{CardozaLM76,mm82}]\label{thm:uniformwpcommon}
	The uniform word problem for finitely presented commutative semigroups is \EXPSPACE-complete.
\end{theorem}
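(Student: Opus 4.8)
The plan is to prove the two bounds separately; only the upper bound requires genuine work, the lower bound being the classical construction. First I would reformulate the problem combinatorially. A finite commutative semigroup presentation with generators $a_1,\dots,a_n$ and relations $u_i = v_i$ is nothing but the quotient of $\N^n$ (identifying a word with its Parikh vector) by the smallest additive congruence $\equiv$ containing the pairs $(U_i,V_i)$, where $U_i,V_i \in \N^n$ are the exponent vectors of $u_i,v_i$. Unwinding the definition, $X \equiv Y$ holds if and only if there is a finite sequence $X = Z_0, Z_1, \dots, Z_m = Y$ in $\N^n$ in which each step replaces, inside some $Z_j = W + U_i$, the summand $U_i$ by $V_i$ (or conversely). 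This is exactly reachability in the Petri net whose places are the generators and which has, for every relation, a pair of mutually inverse transitions; the decisive feature is that this net is \emph{reversible}.

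For the upper bound I would isolate the bounding lemma that is the heart of \cite{mm82}: if $X \equiv Y$ then a witnessing derivation can be chosen that passes only through vectors of sup-norm at most $2^{2^{cs}}$, where $s$ is the bit-size of the input and $c$ a constant. Granting this, an \EXPSPACE{} decision procedure is immediate: a nondeterministic machine guesses the derivation one step at a time, storing only the current vector; since each coordinate is at most doubly exponential in value it needs only $2^{\Oh(s)}$ bits, i.e.\ exponential space, so the problem lies in \textsf{NEXPSPACE}, which equals \EXPSPACE{} by Savitch's theorem. The bounding lemma itself I would obtain from the lattice structure: a necessary condition for $X \equiv Y$ is $Y - X \in L$, where $L = \gen{U_i - V_i}_\Z \leq \Z^n$ is the group of net effects, and reversibility makes this condition ``almost'' sufficient~-- the only obstruction being that intermediate vectors must stay in $\N^n$. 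A pumping/exchange argument on derivations, controlled by Hermann-type bounds on the coefficients needed to realise $Y - X$ inside $L$, shows that detours above the doubly-exponential threshold can be shortcut, which yields the bound. Equivalently one may pass to the binomial ideal $I = \gen{x^{U_i} - x^{V_i}}$ in $k[x_1,\dots,x_n]$, for which $X \equiv Y$ iff $x^X - x^Y \in I$, and invoke effective ideal-membership bounds; but the combinatorial route is what keeps the exponent at \EXPSPACE{} rather than one level higher.

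For the lower bound I would reduce an arbitrary language decided in space $2^{cn}$ to the word problem, following \cite{CardozaLM76}. The engine is Lipton's recursive doubling construction, which produces in polynomial size a family of relations simulating counters of doubly-exponential capacity $2^{2^{\Oh(n)}}$; these suffice to encode a configuration of a $2^{cn}$-space-bounded Turing machine as a vector in $\N^n$ and a single computation step as an application of a relation. The input machine and word $w$ are compiled into a presentation together with two vectors $U$ (the initial configuration on $w$) and $V$ (a canonical accepting configuration) so that $U \equiv V$ holds precisely when the machine accepts $w$.

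The main obstacle is twofold and lives in exactly these two places. On the upper-bound side, the difficulty is proving the doubly-exponential bounding lemma with the \emph{right} exponent: general vector-addition reachability is far harder, and it is only reversibility together with a careful coefficient analysis that collapses the search space to exponential space. On the lower-bound side, the subtlety is that Lipton's counter gadgets are inherently directional, whereas every semigroup relation is two-way; the reduction must be hardened~-- typically by retaining consumed tokens as garbage and threading a phase/step counter through the configuration~-- so that a \emph{backward} application of any relation can never manufacture a spurious congruence, guaranteeing that $U \equiv V$ if and only if the simulated computation genuinely accepts.
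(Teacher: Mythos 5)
This statement is not proved in the paper at all: it is quoted as a known result from \cite{CardozaLM76,mm82} and used as a black box in the proof of Theorem~\ref{thm:uniformconjugacy}, so there is no internal proof to compare against. Your outline is a faithful reconstruction of the cited literature: the upper bound is exactly the Mayr--Meyer argument (congruence $=$ reachability in a reversible vector addition system, a doubly exponential sup-norm bound $2^{2^{cs}}$ on intermediate vectors in some witnessing derivation, then a step-by-step nondeterministic guess in space $2^{\Oh(s)}$ and Savitch to get \EXPSPACE), and the lower bound is the Cardoza--Lipton--Meyer adaptation of Lipton's doubly exponential counter gadgets, with the correct observation that the directional Petri-net simulation must be hardened against backward rule applications before it yields a congruence, and that the $2^{cn}$-space tape is encoded in the \emph{value} of a $2^{2^{\Oh(n)}}$-bounded counter rather than in polynomially many places. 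Be aware that the two steps you flag as the main obstacles~-- the doubly exponential derivation-bounding lemma and the backward-soundness of the symmetrized simulation~-- are precisely the technical content of \cite{mm82} and \cite{CardozaLM76}, so your text is an accurate proof plan rather than a self-contained proof; also, your parenthetical that general ideal membership sits ``one level higher'' reflects the state of the art of 1982 but not the final picture, since membership in polynomial ideals over $\Q$ was later shown to be \EXPSPACE-complete as well. One detail worth knowing in the context of this paper: the application in Theorem~\ref{thm:uniformconjugacy} uses a normal form from the hardness construction of \cite{mm82} (all relator vectors have at most four non-zero entries, each at most $2$), which your sketch would need to preserve if it were to serve as a drop-in replacement for the citation.
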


\begin{proof}[Proof of \prettyref{thm:uniformconjugacy}]
	For the hardness part, we give a \L reduction from the uniform word problem of f.\,g.\ commutative semigroups to the uniform conjugacy problem for GBS groups. \Wlog we only consider commutative monoids.
	Let $m \in \N$, $e,f \in \N^m, (r_i,s_i)_{i\in\oneset{1\dots n}}$ with $r_i,s_i \in \N^m$ be some instance for the uniform word problem of commutative monoids (\ie the question is whether $e\sim f$ for the smallest congruence $\sim$ satisfying $r_i \sim s_i$ for all $i$). 
	
	We construct an instance for the uniform conjugacy problem as follows: The graph $Y$ consists of a single vertex $a$; for all $i \in \oneset{ 1, \dots, n}$ there is a pair of edges $\y_i,\ov \y_i \in E(Y)$. Let $\cP=\oneset{p_1,\dots, p_m}$ be the set of the first $m$ prime numbers. The numbers $p_j$ can be computed in \L since each of them requires a logarithmic (in $m$) number of bits, only (by the prime number theorem there are enough primes).
	Now, for every relator $(r_i,s_i)$, we define $\alp_{\y_i}=\prod_{j=1}^m p_j^{(r_i)_j}$ and $\bet_{\y_i}=\prod_{j=1}^m p_j^{(s_i)_j}$, where $(r_i)_j$ denotes the $j$th component of the vector $r_i$, and  $k=\prod_{j=1}^m p_j^{e_j}$ and $\ell=\prod_{j=1}^m p_j^{f_j}$. According to the proof in \cite{mm82}, we may assume that all the vectors $e$, $f$, $r_i$ and $s_i$ (for all $i$) have at most four non-zero entries and these non-zero entries are at most $2$. Thus, the results $k$, $\ell$, $\alp_{\y_i}$, and $\bet_{\y_i}$ are bounded polynomially in the input length and they can be written down in unary on the output tape. In particular, the products can be computed in \L.
	Now we have $a^k \sim_{F(\cG)} a^\ell$ if and only if $e \sim f$.
	
	It remains to show that the uniform conjugacy problem is in \EXPSPACE. The two input words for an instance of the uniform conjugacy problem for GBS groups can be cyclically Britton-reduced as in \prref{cor:bsbred}. Note, however, that the linear bound on the size of the cyclically Britton-reduced words does not hold anymore. Still the size remains bounded polynomially.
	
	The algorithm of \prref{prop:cycredconj} can be executed in polynomial time even if the graph of groups is part of the input. This gives a polynomial time bound for hyperbolic elements. However, we do not know a better bound as the proof of \prref{prop:cycredconj} involves a computation of greatest common divisors (or prime factorizations) of the numbers $\alp_\y , \bet_\y $. 
	For elliptic elements, by \prettyref{lem:reductionCPtocomMon}, we obtain an instance of the uniform word problem of commutative semigroups, which is in \EXPSPACE.
\end{proof}

\paragraph{Acknowledgments.}
Many thanks goes to my thesis advisor, Volker Diekert, as well as to Jonathan Kausch with whom I had so many inspiring conversations about complexity and Baumslag-Solitar groups.
Special thanks to the anonymous referee for many helpful comments and, in particular, for pointing out that the word problem for a fixed commutative monoid actually can be solved in $\NC^1$~-- thus, providing the last piece for the solution of the conjugacy problem in $\AC(F_2)$ and not only in \L.

\bibliographystyle{abbrv}

\end{document}